\newcolumntype{Y}{>{\RaggedRight\arraybackslash}X}
\newtheorem{thm}{Theorem}
\newtheorem{defin}{Definition}
\newtheorem{lem}{Lemma}
\newtheorem{assum}{Assumption}
\newtheorem{rem}{Remark}
\newtheorem{Ex}{Example}
    \providecommand\BibTeX{%
        \normalfont B\kern-0.5em{\scshape i\kern-0.25em b}\kern-0.8em\TeX}}
\journal{~}
\begin{document}
\captionsetup[figure]{labelfont={bf},labelformat={default},labelsep=period,name={Fig.}}
\begin{frontmatter}
\title{Joint estimation of asymmetric community numbers in directed networks}
\author[label1]{Huan Qing\corref{cor1}}
\ead{qinghuan@cqut.edu.cn$\&$qinghuan@u.nus.edu}
\address[label1]{School of Economics and Finance, Chongqing University of Technology, Chongqing, 400054, China}
\cortext[cor1]{Corresponding author.}
\begin{abstract}
Community detection in directed networks is a central task in network analysis. Unlike undirected networks, directed networks encode inherently asymmetric relationships, giving rise to sender and receiver roles that may each follow distinct community organizations with possibly different numbers of communities. Estimating these two community counts simultaneously is therefore considerably more challenging than in the undirected setting, yet it is essential for faithful model specification and reliable downstream inference. This work addresses this challenge within the stochastic co-block model (ScBM), a powerful statistical framework for capturing asymmetric relational structures inherent in directed networks. We propose a novel goodness-of-fit test based on the deviation of the largest singular value of a normalized residual matrix from the constant value 2. We show that the upper bound of this test statistic converges to zero under the null hypothesis, while this statistic goes to infinity if the true model has finer communities than hypothesized. Leveraging this tail bounds behavior, we develop an efficient sequential testing algorithm that lexicographically explores candidate community number pairs. To enhance robustness in practical settings, we further introduce a ratio-based variant that detects the transition point in the test statistic sequence. We rigorously show both algorithms' consistency in recovering the true sender and receiver community counts under ScBM. Numerical experiments demonstrate the accuracy and robustness of our methods in estimating community numbers across diverse ScBM settings. 
\end{abstract}
\begin{keyword}
Goodness-of-fit test\sep ScBM\sep Directed networks\sep Community detection\sep Singular value tail bounds
\end{keyword}
\end{frontmatter}
\section{Introduction}\label{sec1}
Directed networks are ubiquitous in real-world complex systems, capturing asymmetric relationships where interactions exhibit inherent directionality \citep{malliaros2013clustering}. For example, in social media platforms like Twitter, users follow one another directionally, creating a directed network of influence. In email exchanges, senders send emails to receivers, creating send-to relationships \citep{rohe2016co}. In academic citation networks, papers cite others without reciprocal acknowledgment, forming a directed structure of knowledge flow \citep{PengshengAOAS896,ji2022co}. These examples illustrate that directed networks encode edge relationships with distinct ``sender" (edge origins) and ``receiver" (edge destinations) roles, fundamentally different from undirected networks where connections are symmetric \citep{rohe2016co}. The study of such asymmetric relational patterns is critical for understanding complex systems, as the directionality often encodes functional or causal dependencies.

Community detection in directed networks aims to identify groups of nodes with similar interaction patterns, distinguishing between sender-based and receiver-based communities \citep{leicht2008community,malliaros2013clustering,rohe2016co}. This task is crucial for applications such as identifying opinion leaders in social networks, mapping intellectual communities in citation graphs, optimizing traffic routing in transportation systems, and market segments in stock networks \citep{li2022undirected}. Unlike undirected networks, where communities imply symmetric associations, directionality of directed networks introduces unique challenges: sender communities (nodes with similar outgoing edge patterns) and receiver communities (nodes with similar incoming edge patterns) may exhibit distinct organizational structures, including potentially different numbers of communities \citep{rohe2016co}. Faithfully recovering these asymmetric structures is essential for learning complex structures in directed networks.

The stochastic co-block model (ScBM) proposed in \citep{rohe2016co} is a powerful statistical framework for modeling directed networks with asymmetric community structure. It extends the classical stochastic block model (SBM) \citep{holland1983stochastic} from undirected networks to directed networks by introducing dual community memberships: each node belongs to a sender community governing its outgoing edges and to a receiver community influencing its incoming edges. Formally, ScBM assumes that each node belongs to a sender community and a receiver community, and the probability of an edge from node $i$ to node $j$ depends on the sender community of $i$ and the receiver community of $j$. This dual assignment mechanism enables ScBM to capture the inherent directionality of relationships while accommodating potential asymmetry between sending and receiving behaviors. ScBM is particularly powerful for capturing scenarios where the number of sender communities differs from that of receiver communities. Numerous community detection methods have been developed for ScBM, including various spectral clustering approaches \citep{rohe2016co,zhou2019analysis,wang2020spectral,qing2023community,guo2023randomized} and variational inference algorithms \citep{zhou2020optimal,zhang2022identifiability}. However, all these existing methods require that both sender community count \(K_s\) and receiver community count \(K_r\) are known in advance, which is a significant practical limitation since the numbers of sender and receiver communities in real-world directed networks are rarely known. This limitation necessitates reliable joint estimation of both community numbers before meaningful community detection can proceed.

In undirected networks, community number estimation under SBM or its degree-corrected extension \citep{karrer2011stochastic} has been addressed through various techniques including likelihood ratio tests \citep{LikeAos2017,ma2021determining}, Bayesian information criterion \citep{mcdaid2013improved,saldana2017many,hu2020corrected},  network cross-validation \citep{chen2018network,li2020network}, spectrum of the Bethe Hessian matrices \citep{CanMK2022,hwang2024estimation}, and goodness-of-fit tests \citep{bickel2016hypothesis,lei2016goodness,dong2020spectral,hu2021using,jin2023optimal,wu2024spectral}. 
However, these approaches fail in directed settings due to the dual community assignment mechanism of ScBM, where the number of sender communities may even be different from that of receiver communities. To our knowledge, no existing method provides theoretically guaranteed joint estimation of $(K_s, K_r)$ for directed networks under ScBM. This paper bridges the gap by developing a framework for jointly estimating the numbers of sender and receiver communities within the ScBM framework. Our key contributions are threefold:
\begin{itemize}
    \item We propose a novel test statistic based on the deviation of the largest singular value of a normalized residual matrix from the constant 2. This test statistic leverages the tail behavior of singular values rather than their asymptotic distribution—a crucial innovation necessitated by the asymmetric nature of directed networks. 
    
    \item We establish rigorous theoretical guarantees to show that under correct model specification, the test statistic's upper bound converges to zero with high probability, while under underfitted models, the statistic diverges to infinity. This dichotomous behavior provides a statistical foundation for model selection in directed networks.
    
    \item We propose two computationally efficient algorithms: a sequential testing procedure that lexicographically explores candidate asymmetric community number pairs using a decaying threshold, and a ratio-based variant that detects the transition point in the test statistic sequence for enhanced robustness in practical settings. We prove the consistency of both estimators under ScBM with mild conditions and conduct extensive numerical experiments to validate their accuracy across diverse ScBM configurations, demonstrating robustness to threshold selection and applicability to real-world directed networks.
\end{itemize}
\section{Model formulation and problem statement}\label{sec:model}
This section formally defines the Stochastic co-block model (ScBM) and precisely states the core estimation problem. We specify the ScBM's generative mechanism, which incorporates distinct sender and receiver community memberships to model asymmetric edge directionality. The central problem addressed is the joint estimation of the unknown sender and receiver community counts \((K_s, K_r)\) from observed network data. Necessary theoretical assumptions for establishing theoretical guarantees are also presented.
\subsection{Stochastic co-block model (ScBM)}
In this paper, we study a directed network on $n$ vertices represented by an adjacency matrix $A\in\{0,1\}^{n\times n}$, where $A_{ii}=0$ and $A_{ij}=1$ signifies a directed edge sending from node $i$ to node $j$.  
The \emph{stochastic co-block model} (ScBM) partitions the $n$ vertices into $K_s$ \emph{sender communities} and $K_r$ \emph{receiver communities} \citep{rohe2016co}. Formally, the model is specified as follows:
\begin{defin}[Stochastic co-block model (ScBM)]\label{def:ScBM}
Consider a directed network with $n$ nodes. Let $A \in \{0,1\}^{n \times n}$ be the adjacency matrix where $A_{ii} = 0$ and $A_{ij} = 1$ indicates a directed edge from node $i$ to node $j$. The stochastic co-block model (ScBM) is parameterized by:
\begin{itemize}
    \item Sender community labels: $g^s \in \{1,\dots,K_s\}^n$.
    \item Receiver community labels: $g^r \in \{1,\dots,K_r\}^n$.
    \item Block probability matrix: $B \in [0,1]^{K_s \times K_r}$.
\end{itemize}
Given $(g^s, g^r, B)$, the entries of $A$ are independent with
\[
\mathbb{P}(A(i,j)= 1) = B(g^s(i), g^r(j)), \quad \forall i \neq j.
\]
The expected adjacency matrix $\Omega$ satisfies $\Omega(i,j) = B(g^s(i), g^r(j))$ for $i \neq j$ and $\Omega(i,i)= 0$.
\end{defin}
Sure, the sender-receiver asymmetry allows $g^s \neq g^r$ and $K_s \neq K_r$, enabling flexible modeling of asymmetric relational patterns in directed networks.
\subsection{Core problem: joint community number estimation}
In this paper, we address the fundamental problem of \emph{community number estimation} in ScBMs: given a directed network, how can we jointly determine the numbers of sender communities $K_s$ and receiver communities $K_r$? We formulate this as a sequential goodness-of-fit testing problem: for candidate pairs $(K_{s0}, K_{r0})$, we test
\[
H_0: (K_s, K_r) = (K_{s0}, K_{r0}) \quad \text{versus} \quad H_1: K_s > K_{s0} \text{ or } K_r > K_{r0},
\]
where $H_1$ explicitly encodes \emph{underfitting} scenarios where the hypothesized model lacks sufficient sender or receiver communities. This formulation is statistically challenging and practically significant for three reasons:
\begin{enumerate}
    \item The bivariate community structure $(K_s, K_r)$ creates a combinatorial search space of size $(K_s+K_r)^2$, requiring efficient testing strategies.
    \item Underfitting (failing to reject $H_0$ when $H_1$ holds) leads to loss of structural resolution, while overfitting is mitigated algorithmically via ordered search.
    \item Existing goodness-of-fit tests for undirected networks rely on symmetric eigenvalue distributions (e.g., Tracy-Widom laws analyzed in \citep{lei2016goodness}) that do not extend to the asymmetric singular value decompositions required for directed networks.
\end{enumerate}

Given that few existing methods provide theoretically guaranteed joint estimation of $(K_s, K_r)$ in ScBM for directed networks, this work bridges this gap by developing a testing framework based on singular value tail bounds of normalized residual matrices. Our approach leverages the asymptotic behavior of the largest singular value under $H_0$ versus $H_1$, enabling consistent community number estimation.
\subsection{Theoretical assumptions}\label{subsec:assumptions}
To establish theoretical guarantees, we require the following regularity conditions:

\begin{assum}[Bounded edge probabilities]\label{assump:a1}
The block probability matrix satisfies $\delta \leq B(k,l) \leq 1 - \delta$ for all $k,l$ and some $\delta > 0$. 
\end{assum}
Assumption \ref{assump:a1} ensures well-defined variance in the residual matrix defined later and excludes degenerate cases where normalization becomes unstable. 

\begin{assum}[Balanced community sizes]\label{assump:a2}
The community sizes satisfy:
\[
\min_{k=1,\dots,K_s} |\{i: g^s(i) = k\}| \geq c_0 \frac{n}{K_s}, \quad 
\min_{l=1,\dots,K_r} |\{j: g^r(j) = l\}| \geq c_0 \frac{n}{K_r}
\]
for some $c_0 > 0$. 
\end{assum}
Assumption \ref{assump:a2} prevents any community from being asymptotically negligible, guaranteeing that the sample size within every block grows linearly with $n$. Similar assumptions are also required for theoretical guarantees of goodness-of-fit test in undirected networks under SBM \citep{lei2016goodness,hu2021using,wu2024spectral,wu2024two}

Define $K_{\max} = \max(K_s,K_r)$ and the community separation metric:
\[
\delta_n = \min\left( \min_{k \neq k'} \max_{l} |B(k,l) - B(k',l)|, \min_{l \neq l'} \max_{k} |B(k,l) - B(k,l')| \right).
\]

The community separation $\delta_n$ quantifies the minimal distinguishability between communities through capturing the worst-case discriminability between any two sender or receiver communities.
\begin{assum}[Model complexity and community distinguishability]\label{assump:a3}
\[
\frac{K^{2}_{\max}\max(\log n, \delta^{-2}_{n})}{n} \to 0 \quad \text{as} \quad n \to \infty,
\]
which also implies: (1) $K_{\max} \ll \sqrt{n/\log n}$ limiting model complexity; (2) $\delta_n \gg n^{-1/2}$ ensuring communities are statistically distinguishable.
\end{assum}
Assumption \ref{assump:a3} establishes a fundamental trade-off between model complexity captured by $K_{\max}$ and community separability described by $\delta_n$ that ensures the asymptotic controllability of the test statistic. Further understandings of Assumption \ref{assump:a3} are provided in the following remark.
\begin{rem}\label{rem:assump3_interpretation}
The condition $\frac{K^{2}_{\max}\max(\log n, \delta^{-2}_{n})}{n} \to 0$ adaptively constrains the interplay between model complexity and community separability depending on the relative scaling of $\delta_n$ and $n$, which we analyze through two typical regimes:

Case 1: $\delta^{-2}_{n} \geq \log n$ (weak separation). Here $\max(\log n, \delta^{-2}_{n}) = \delta^{-2}_{n}$, reducing the assumption to $\frac{K^{2}_{\max}}{n\delta^{2}_{n}} \to 0$. This yields the sandwich bound
    \[
    \frac{K_{\max}}{\sqrt{n}} \ll \delta_n \leq \frac{1}{\sqrt{\log n}}.
    \]
    
The sandwich bound explicitly balances separation and complexity: models with larger $K_{\max}$ require stronger separation (larger $\delta_n$), while weaker community separation (smaller $\delta_n$) requires fewer communities. Meanwhile, he minimal separation must dominate the parametric rate $K_{\max}/\sqrt{n}$. For fixed $K_{\max}$, $\delta_n$ must decay slower than $n^{-1/2}$. If $K_{\max} \to \infty$ slowly (e.g., $K_{\max} = O(\log n)$), $\delta_n$ must decay slower than $\log n/\sqrt{n}$.

Case 2: $\delta^{-2}_{n} < \log n$ (strong separation). Here $\max(\log n, \delta^{-2}_{n}) = \log n$, simplifying the assumption to $\frac{K^{2}_{\max} \log n}{n} \to 0$, equivalent to $K_{\max} \ll \sqrt{n / \log n}$. In this regime, separation is relatively strong ($\delta_n > \frac{1}{\sqrt{\log n}}$), so community distinguishability is not the binding constraint. The assumption reduces to limiting model complexity: $K_{\max}$ must grow slower than $\sqrt{n / \log n}$. For fixed $K_{\max}$, the condition holds trivially as $n \to \infty$. 
\end{rem}
\section{Goodness-of-fit test}\label{sec:test}
This section develops a theoretically grounded goodness-of-fit test for ScBM. We first introduce an ideal test statistic using oracle parameters, then derive its practical counterpart with estimated parameters, and finally establish its asymptotic behavior under both null and alternative hypotheses. The core innovation lies in leveraging singular value tail bounds of normalized residual matrices to detect community underfitting.
\subsection{Ideal test statistic and its asymptotic behavior}
To formalize the test, we begin with the \emph{ideal residual matrix} \(R\) constructed using true parameters:
\[
R(i,j) = 
\begin{cases} 
\dfrac{A(i,j) - \Omega(i,j)}{\sqrt{(n-1) \Omega(i,j) (1 - \Omega(i,j))}} & i \neq j, \\
0 & i = j,
\end{cases}
\]
where \(\Omega(i,j) = B(g^s(i), g^r(j))\) is the true edge probability. This normalization ensures \(\mathbb{E}[R(i,j)] = 0\) and \(\operatorname{Var}(R(i,j)) = \frac{1}{n-1}\) for \(i \neq j\), transforming \(R\) into a generalized random non-symmetric matrix with controlled variance. The ideal test statistic is defined as:
\begin{align}\label{idealTestStatistic}
T_n = \sigma_1(R) - 2,
\end{align}
where \(\sigma_1(\cdot)\) denotes the largest singular value. The shift by 2 anticipates the asymptotic behavior of \(\sigma_1(R)\) under \(H_0\), as established below.

\begin{lem}\label{ideal0}
Under \(H_0\) and Assumption \ref{assump:a1}, for any \(\epsilon > 0\), we have
\[
\mathbb{P}(T_n < \epsilon) \to 1 \quad \text{as} \quad n \to \infty.
\]
\end{lem}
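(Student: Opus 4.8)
The plan is to recognize $R$ as a random matrix with independent, mean-zero, uniformly bounded entries whose variance profile is essentially flat, and then to invoke a sharp nonasymptotic bound on its largest singular value together with a Lipschitz concentration argument. First I would record the two structural facts that drive everything. Since $\Omega(i,j)\in[\delta,1-\delta]$ by Assumption \ref{assump:a1}, the function $x(1-x)$ is bounded below by $\delta(1-\delta)$ on $[\delta,1-\delta]$, so every off-diagonal entry satisfies $|R(i,j)|\le \{(n-1)\delta(1-\delta)\}^{-1/2}=:\sigma_*$, while $\mathbb{E}[R(i,j)]=0$ and $\operatorname{Var}(R(i,j))=1/(n-1)$ exactly. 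Consequently each row and each column of $R$ carries total variance $\sum_{j\ne i}1/(n-1)=1$, so the relevant row-variance parameter is $\sigma=1$, whereas $\sigma_*=O(n^{-1/2})$ because $\delta$ is a fixed constant.

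To pass from the singular values of the non-symmetric $R$ to a symmetric object I would use the Hermitian dilation
\[
M=\begin{pmatrix}0 & R\\ R^{\top} & 0\end{pmatrix},
\]
a $2n\times 2n$ symmetric matrix whose nonzero upper-triangular entries are exactly the independent variables $R(i,j)$ and which satisfies $\lambda_1(M)=\sigma_1(R)$. Applying a sharp spectral-norm bound for symmetric matrices with independent heterogeneous entries (the Bandeira--van Handel inequality) to $M$ gives, for any fixed $\eta\in(0,1/2]$,
\[
\mathbb{E}\,\sigma_1(R)=\mathbb{E}\|M\|\le 2(1+\eta)\,\sigma + C_\eta\,\sigma_*\sqrt{\log(2n)} = 2(1+\eta)+o(1),
\]
since $\sigma=1$ and $\sigma_*\sqrt{\log n}=O(\sqrt{\log n/n})\to 0$. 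Letting $\eta$ be arbitrarily small shows $\mathbb{E}\,\sigma_1(R)\le 2+o(1)$.

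For the fluctuations I would exploit that $\sigma_1(R)=\|R\|_{\mathrm{op}}$ is a convex function of the matrix entries and is $1$-Lipschitz with respect to the Frobenius norm, because $|\sigma_1(R)-\sigma_1(R')|\le\|R-R'\|_{\mathrm{op}}\le\|R-R'\|_F$. The entries are independent and each lies in an interval of length at most $2\sigma_*=O(n^{-1/2})$, so Talagrand's concentration inequality for convex Lipschitz functions of bounded independent variables yields
\[
\mathbb{P}\bigl(|\sigma_1(R)-\mathbb{E}\,\sigma_1(R)|\ge t\bigr)\le 4\exp(-c\,t^2/\sigma_*^2)=4\exp(-c'\,t^2 n)
\]
for absolute constants $c,c'>0$. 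For any fixed $t>0$ the right-hand side tends to $0$, so $\sigma_1(R)$ lies within $o(1)$ of its mean with probability tending to one. Combining the two displays, for any $\epsilon>0$ we get $\sigma_1(R)\le 2+\epsilon$ with probability tending to one, i.e. $\mathbb{P}(T_n<\epsilon)\to1$.

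I expect the main obstacle to be securing the sharp leading constant $2$ in the expectation bound rather than the concentration step, which is routine. The difficulty is that although the entries of $R$ share the common variance $1/(n-1)$, they are \emph{not} identically distributed (their higher moments depend on $\Omega(i,j)$) and the diagonal is deterministically zero, so the classical i.i.d. edge results of Bai--Yin type do not apply verbatim; one genuinely needs an operator-norm bound valid for independent but heterogeneous entries that still produces the universal edge value $2\sigma$, which is exactly why the dilation together with the heterogeneous-entry spectral-norm inequality is the natural route. A secondary point to verify is that the $O(\sigma_*\sqrt{\log n})$ correction and the bipartite zero-blocks of $M$ do not inflate the constant, which holds because those contributions are $o(1)$.
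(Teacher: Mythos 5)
Your proposal is correct, and it rests on exactly the same pillars as the paper's own proof: the observation that Assumption \ref{assump:a1} makes every off-diagonal entry of $R$ mean-zero with variance exactly $1/(n-1)$ and uniformly bounded by $\{(n-1)\delta(1-\delta)\}^{-1/2}$, so that each row and column carries unit total variance, followed by the Bandeira--van Handel sharp spectral-norm bound to obtain the leading constant $2(1+\eta)$ plus an $O(\sqrt{\log n / n})$ correction. The only real difference is packaging: the paper invokes the rectangular tail-bound form of that result (Lemma \ref{Extension} in the appendix), which delivers the deviation inequality in one shot with $\varsigma_1=\varsigma_2=1$ and $\varsigma_*=1/\sqrt{\delta(1-\delta)(n-1)}$, whereas you pass through the Hermitian dilation to the symmetric expectation bound and then supply the fluctuation control separately via Talagrand's inequality for convex Lipschitz functions of bounded independent variables. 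Your two-step route is essentially how the rectangular case is derived from the symmetric one in the cited source anyway; it is slightly longer but makes the mechanism transparent (sharp constant $2\sigma$ in expectation, plus exponential concentration at scale $\sigma_*$), while the paper's one-shot tail bound is more economical. One small point to tighten if you formalize the Talagrand step: that inequality concentrates around a median rather than the mean, but the median-mean gap is $O(\sigma_*)=o(1)$, so your conclusion is unaffected.
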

\begin{proof}[Proof intuition]
The convergence follows from tail bounds for the largest singular value of random matrices with independent entries \citep{Afonso2016}. Specifically, Lemma \ref{Extension} in the Appendix yields \(\|R\| \leq 2(1+\eta) + \mathcal{O}_P\big(\sqrt{\log n / n}\big)\) for any \(\eta > 0\). The boundedness of \(\Omega(i,j)\) guaranteed by Assumption \ref{assump:a1} controls $R$'s variance.
\end{proof}

\begin{rem}
A crucial distinction arises when comparing to undirected networks: while \citep{lei2016goodness} leveraged Tracy-Widom distributions for symmetric adjacency matrices, asymptotic distributions for directed residuals remain an open problem. This theoretical gap originates from the inherent \textit{asymmetry} of the ideal residual matrix $R$ in directed networks. Therefore, to rigorously determine the number of sender and receiver communities under the ScBM, we focus on characterizing the asymptotic behavior of our test statistic, rather than its asymptotic distribution.
\end{rem}
\subsection{Test statistic and its asymptotic behavior}
In the construction of the ideal test statistic $T_{n}$, we have used unknown model parameters $(B, g^{s}, g^{r})$ for real directed networks. Thus, $T_{n}$ can not be used as a practical test statistic for real data. Next, we introduce an approximation of $T_{n}$ by estimating the block probability matrix $B$ and community labels $g^{s}$ and $g^{r}$ from the adjacency matrix $A$ of a real directed network.

Let \(\mathcal{M}\) be any community detection algorithm for directed networks. Apply \(\mathcal{M}\) to \(A\) to partition the \(n\) nodes into \(K_{s0}\) \textit{sender communities} and \(K_{r0}\) \textit{receiver communities}, yielding estimated labels \(\hat{g}^s\) for sender community and \(\hat{g}^r\) for receiver community. Then, we compute $B$'s plug-in estimator    \(\hat{B} \in [0,1]^{K_{s0} \times K_{r0}}\) via  
\[  
\hat{B}(k,l) = \frac{\sum_{i:\hat{g}^s(i) = k} \sum_{j:\hat{g}^r(j) = l} A(i,j)}{|\{i: \hat{g}^s(i) = k\}| \cdot |\{j: \hat{g}^r(j) = l\}|}, \quad k=1,\dots,K_{s0}, \, l=1,\dots,K_{r0}.  
\]  

From $\hat{B}$'s construction, we see that $\hat{B}$ naturally extends the estimated block probability matrix used in \citep{lei2016goodness} from undirected networks to directed networks.

The estimated expected adjacency matrix \(\hat{\Omega}\) has entries \(\hat{\Omega}(i,j) = \hat{B}(\hat{g}^s(i), \hat{g}^r(j))\) for \(i \neq j\) and \(\hat{\Omega}(i,i) = 0\). 
Then we construct the normalized residual matrix \(\hat{R} \in \mathbb{R}^{n \times n}\) as 
    \begin{equation}\label{eq:residual_matrix}
    \hat{R}(i,j) = 
    \begin{cases} 
        \dfrac{A(i,j) - \hat{\Omega}(i,j)}{\sqrt{(n-1) \hat{\Omega}(i,j) (1 - \hat{\Omega}(i,j))}} & i \neq j, \\
        0 & i = j.
    \end{cases}
    \end{equation}

When $(\hat{B}, \hat{g}^{s}, \hat{g}^{r})$ are accurate enough under the null hypothesis $(K_{s},K_{r})=(K_{s0}, K_{r0})$, $\hat{R}$'s largest singular value should be a good estimation of $\sigma_{1}(R)$. Therefore, we use $\sigma_{1}(\hat{R})$ to define our practical test statistic as follows:
\begin{align}\label{eq:test_stat}  
\hat{T}_n = \sigma_1(\hat{R}) - 2.  
\end{align}   

The normalization in \(\hat{R}\) ensures \(\mathbb{E}[\hat{R}(i,j)] \approx 0\) and \(\operatorname{Var}[\hat{R}(i,j)] \approx (n-1)^{-1}\) under \(H_0\). The shift by 2 anticipates the asymptotic behavior of \(\sigma_1(\hat{R})\)'s upper bound under a correctly specified model, where \(\sigma_1(\hat{R})\)'s upper bound concentrates near 2 as shown by Theorem \ref{thm:null} given later. Under \(H_1\), \(\sigma_1(\hat{R})\) diverges due to unmodeled community structure guaranteed by Theorem \ref{thm:power} given later.  

While \(\mathcal{M}\) can be any method with consistent community recovery for directed networks under \(H_0\), this paper employs a simple spectral clustering approach for its theoretical guarantees and computational efficiency. Algorithm \ref{alg:spectral} details this procedure. 

\begin{algorithm}[H]  
\caption{Spectral clustering for ScBM}\label{alg:spectral}  
\begin{algorithmic}[1]  
\Require Adjacency matrix \(A\), hypothesized community numbers pair \((K_{s0}, K_{r0})\)
\Ensure Estimated labels \(\hat{g}^s, \hat{g}^r\)  
\State Compute $A$'s truncated singular value decomposition: \(A \approx U \Sigma V^\top\) where \(U \in \mathbb{R}^{n \times K_{\mathrm{min}}}\), \(V \in \mathbb{R}^{n \times K_{\mathrm{min}} }\), \(U'U=I_{K_{\mathrm{min}} \times K_{\mathrm{min}} }\), \(V'V=I_{K_{\mathrm{min}} \times K_{\mathrm{min}} }\), \(K_{\mathrm{min}} = \min(K_{s0}, K_{r0})\), and $I_{K_{\mathrm{min}} \times K_{\mathrm{min}} }$ is a $K_{\mathrm{min}}$-by-$K_{\mathrm{min}}$ identity matrix 
\State Apply \(k\)-means with \(K_{s0}\) clusters to rows of \(U\) to obtain \(\hat{g}^s\)
\State Apply \(k\)-means with \(K_{r0}\) clusters to rows of \(V\) to obtain \(\hat{g}^r\) 
\end{algorithmic}  
\end{algorithm}  

This algorithm leverages the fact that, under ScBM, the singular vectors \(U\) and \(V\) encode the sender and receiver community structures, respectively \citep{lei2016goodness}. The \(k\)-means step projects $U$ and $V$ into discrete labels, accommodating asymmetric sender-receiver roles. In fact, Algorithm \ref{alg:spectral} naturally extends the algorithm 1 studied in \citep{lei2015consistency} from SBM to ScBM, and it is also the non-regularized and non-normalized version of the DI-SIM algorithm proposed in \citep{rohe2016co}. Its estimation consistency under ScBM can be found in \citep{rohe2016co,qing2023community,guo2023randomized}.  

Under \(H_0\), consistent estimation of communities ensures \(\hat{R}\) approximates \(R\).  We adopt \citep{lei2016goodness}'s definition of consistency of community recovery.

\begin{defin}[Consistency]
\(\hat{g}^s\) and \(\hat{g}^r\) are consistent if \(\mathbb{P}(\hat{g}^s = g^s) \to 1\) and \(\mathbb{P}(\hat{g}^r = g^r) \to 1\) under \(H_0\).
\end{defin}
Then, we have the following asymptotic result.
\begin{thm}\label{thm:null}
Under $H_0$ and Assumptions \ref{assump:a1}-\ref{assump:a3}, and let $\hat{R}$ be obtained using consistent community estimators $\hat{g}^{s}$ and $\hat{g}^{r}$, then for any $\epsilon>0$, we have
\[
\mathbb{P}(\hat{T}_n < \epsilon) \to 1 \quad \text{as} \quad n \to \infty.
\]
\end{thm}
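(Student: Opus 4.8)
The plan is to reduce the claim about the \emph{estimated} statistic $\hat T_n$ to the already-established behaviour of the \emph{ideal} statistic $T_n$ in Lemma \ref{ideal0}, by showing that the plug-in residual matrix $\hat R$ is close to the oracle residual matrix $R$ in operator norm. Since the largest singular value equals the operator norm and is therefore $1$-Lipschitz in it, $\sigma_1(\hat R) \le \sigma_1(R) + \|\hat R - R\|$, so that
\[
\hat T_n = \sigma_1(\hat R) - 2 \le \big(\sigma_1(R) - 2\big) + \|\hat R - R\| = T_n + \|\hat R - R\|.
\]
Consequently, if I can show $\|\hat R - R\| = o_P(1)$, then combining with Lemma \ref{ideal0} (applied with threshold $\epsilon/2$) on the event $\{\|\hat R - R\| < \epsilon/2\}$ gives $\mathbb P(\hat T_n < \epsilon) \to 1$. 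Throughout I would work on the event $E_0 = \{\hat g^s = g^s,\ \hat g^r = g^r\}$ (up to label permutation), which has probability tending to one by the assumed consistency of $\mathcal M$; on $E_0$ the plug-in block means $\hat B(k,l)$ are computed over the \emph{true} blocks, which is what makes the concentration argument below clean.

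The core of the argument is a concentration bound for the plug-in parameters. On $E_0$, each $\hat B(k,l)$ is an average of $n_k^s n_l^r$ independent Bernoulli variables with mean $B(k,l)$, up to a small diagonal correction. Assumption \ref{assump:a2} forces every block to contain $n_k^s n_l^r \gtrsim n^2/K_{\max}^2$ pairs, and Assumption \ref{assump:a1} keeps the per-entry variance of constant order, so a Bernstein inequality together with a union bound over the at most $K_{\max}^2$ blocks yields
\[
\max_{k,l} |\hat B(k,l) - B(k,l)| \lesssim \frac{K_{\max}\sqrt{\log n}}{n} =: \tau_n
\]
with probability tending to one; the diagonal bias is of the smaller order $K_{\max}/n$ because at most $\min(n_k^s, n_l^r) \lesssim n/K_{\max}$ self-pairs enter each block. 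Since $\hat\Omega(i,j) = \hat B(\hat g^s(i), \hat g^r(j))$ is block-constant on $E_0$, this propagates to $\max_{i\ne j}|\hat\Omega(i,j) - \Omega(i,j)| \lesssim \tau_n$. Because Assumption \ref{assump:a1} bounds $\Omega(i,j)$ away from $0$ and $1$, both normalising denominators satisfy $s_{ij}, t_{ij} \asymp \sqrt n$, and the Lipschitz continuity of $x \mapsto \sqrt{x(1-x)}$ on $[\delta, 1-\delta]$ gives $|s_{ij} - t_{ij}| \lesssim \sqrt n\,\tau_n$, where $t_{ij} = \sqrt{(n-1)\Omega(i,j)(1-\Omega(i,j))}$ and $s_{ij}$ is its plug-in counterpart.

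I would then split the difference entrywise (for $i\ne j$) as
\[
\hat R(i,j) - R(i,j) = \underbrace{\big(A(i,j) - \hat\Omega(i,j)\big)\Big(\tfrac{1}{s_{ij}} - \tfrac{1}{t_{ij}}\Big)}_{D_1(i,j)} + \underbrace{\frac{\Omega(i,j) - \hat\Omega(i,j)}{t_{ij}}}_{D_2(i,j)},
\]
with $\hat R(i,i) - R(i,i) = 0$. The estimates above give the uniform entrywise bound $|D_1(i,j)|, |D_2(i,j)| \lesssim \tau_n/\sqrt n$: for $D_2$ this is immediate, and for $D_1$ it uses $|A(i,j) - \hat\Omega(i,j)| \le 1$ together with $|s_{ij}^{-1} - t_{ij}^{-1}| = |t_{ij} - s_{ij}|/(s_{ij}t_{ij}) \lesssim \tau_n/n$. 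Bounding the operator norm by the Frobenius norm and summing over the $n^2$ entries then yields
\[
\|\hat R - R\| \le \|D_1\|_F + \|D_2\|_F \lesssim \sqrt{n^2 \cdot \tau_n^2/n} = \sqrt n\,\tau_n = K_{\max}\sqrt{\tfrac{\log n}{n}} \to 0,
\]
where the final limit is exactly the content of Assumption \ref{assump:a3}, which in particular implies $K_{\max}^2\log n/n \to 0$. This completes the reduction.

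The step I expect to be delicate is the accumulation over the $n^2$ entries: because each residual entry is normalised by $\sqrt{n-1}$, the Frobenius bound reads $\sqrt n$ times the entrywise error of $\hat\Omega$, so the whole argument hinges on the plug-in error being $o(n^{-1/2})$ entrywise. Assumption \ref{assump:a3} is precisely what guarantees $\tau_n = K_{\max}\sqrt{\log n}/n \ll n^{-1/2}$, and the same tight scaling is what forces the careful bookkeeping of the diagonal bias (via $d_{kl} \le \min(n_k^s, n_l^r)$) rather than the crude bound $d_{kl}\le n$, which would cost an extra factor of $K_{\max}$. A secondary subtlety is that Lemma \ref{ideal0} supplies only an upper tail for $\sigma_1(R)$, but this is all that is needed, since the one-sided inequality $\hat T_n \le T_n + \|\hat R - R\|$ controls $\hat T_n$ from above.
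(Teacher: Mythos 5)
Your proposal is correct and follows essentially the same route as the paper's proof: reduce $\hat T_n$ to the ideal statistic $T_n$ of Lemma \ref{ideal0} by bounding $\|\hat R - R\|$ through an entrywise split into a normalizer-perturbation term and a mean-perturbation term, control the plug-in error of $\hat B$ by Bernstein concentration over blocks (the paper packages this as $\|\hat\Omega-\Omega\|_F = O_P(K_{\max}\sqrt{\log n})$ in Lemma \ref{lem:frobenius-error}, invoking consistency to kill the misclustering bias exactly as you do by conditioning on the exact-recovery event), bound the operator norm by the Frobenius norm to get $\|\hat R - R\| = O_P\bigl(K_{\max}\sqrt{\log n / n}\bigr) = o_P(1)$ under Assumption \ref{assump:a3}, and conclude via Weyl's inequality. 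The only cosmetic differences are that you pair $A-\hat\Omega$ (rather than $A-\Omega$) with the normalizer difference and that you explicitly track the diagonal-exclusion bias in $\hat B$, a detail the paper's proof passes over.
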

Theorem \ref{thm:null} guarantees that the upper bound of $\hat{T}_{n}$ converges to 0 in probability under \(H_0\). This provides the theoretical basis for decision rules: small values of $\hat{T}_n$ support $H_0$, while large values signal model inadequacy. The following theorem guarantees that the test statistic diverges for underfitted models where \(K_s > K_{s0}\) or \(K_r > K_{r0}\). 
\begin{thm}\label{thm:power}
Under Assumptions \ref{assump:a1}-\ref{assump:a3}, and if either $K_s > K_{s0}$ or $K_r > K_{r0}$ (or both), then
\[
\hat{T}_n \xrightarrow{P} \infty.
\]
\end{thm}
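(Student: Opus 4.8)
The plan is to show that underfitting injects a deterministic ``signal'' into the residual matrix that the bounded stochastic fluctuation cannot mask. I would write the off-diagonal entries of $\hat R$ as the exact splitting $\hat R(i,j)=\tilde N(i,j)+\tilde M(i,j)$, where
\[
\tilde N(i,j)=\frac{A(i,j)-\Omega(i,j)}{\sqrt{(n-1)\hat\Omega(i,j)(1-\hat\Omega(i,j))}},\qquad
\tilde M(i,j)=\frac{\Omega(i,j)-\hat\Omega(i,j)}{\sqrt{(n-1)\hat\Omega(i,j)(1-\hat\Omega(i,j))}},
\]
and both matrices vanish on the diagonal. By the reverse triangle inequality for the spectral norm, $\sigma_1(\hat R)=\|\tilde N+\tilde M\|\ge\|\tilde M\|-\|\tilde N\|$, so it suffices to prove $\|\tilde N\|=\mathcal O_P(1)$ while $\|\tilde M\|\xrightarrow{P}\infty$; then $\hat T_n=\sigma_1(\hat R)-2\to\infty$.

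First I would control the stochastic part $\tilde N$. Its numerator entries are independent, mean-zero, with variance $\Omega(i,j)(1-\Omega(i,j))\in[\delta(1-\delta),\tfrac14]$ by Assumption \ref{assump:a1}; the only nuisance is that the denominator is $A$-measurable through $\hat\Omega$. Since every block of $\hat B$ averages $\gtrsim n^2/K_{\max}^2\to\infty$ independent Bernoulli entries with means in $[\delta,1-\delta]$, a uniform concentration bound shows that with probability tending to one every $\hat B(k,l)$, and hence every $\hat\Omega(i,j)$, lies in $[\delta/2,1-\delta/2]$. On this event the normalization is uniformly bounded away from $0$, the entries of $\tilde N$ have variance of order $(n-1)^{-1}$, and the same random-matrix tail bound used for $\|R\|$ in Lemma \ref{ideal0} (Lemma \ref{Extension}) gives $\|\tilde N\|\le 2(1+\eta)+o_P(1)=\mathcal O_P(1)$. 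This step nowhere uses $H_0$, so it remains valid under the alternative.

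The heart of the argument is a deterministic lower bound on $\|\tilde M\|$. Consider $K_s>K_{s0}$ (the receiver case is symmetric, using $V$ and $K_{r0}$). Since the $n$ nodes carry $K_s$ true sender labels but are sorted into only $K_{s0}<K_s$ estimated sender clusters, a surjectivity/pigeonhole argument forces some estimated cluster $\hat k$ to absorb non-negligible portions of two distinct true sender communities $k_1\ne k_2$; combined with the balanced-size Assumption \ref{assump:a2} and a load-balancing count this yields $|\{i:\hat g^s(i)=\hat k,\ g^s(i)=k_t\}|\gtrsim n/K_{\max}^2$ for $t=1,2$. By the definition of $\delta_n$ there is a receiver community $l$ with $|B(k_1,l)-B(k_2,l)|\ge\delta_n$. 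Restricting $\tilde M$ to the rows $\{i:\hat g^s(i)=\hat k,\ g^s(i)=k_1\}$ and the columns $\{j:g^r(j)=l\}$, all these rows lie in the single estimated sender cluster $\hat k$, so $\hat\Omega$ and the normalizing factor depend on the column alone; the submatrix therefore has identical rows and spectral norm $\sqrt{n_1}\,\|\tilde v_1\|$, with $n_1\gtrsim n/K_{\max}^2$ and $\tilde v_1$ the common normalized row. The analogous vector $\tilde v_2$ for the $k_2$-rows satisfies $\tilde v_1-\tilde v_2=(B(k_1,l)-B(k_2,l))/s(\cdot)$ on each of the $\ge c_0 n/K_r$ columns, whence $\max(\|\tilde v_1\|,\|\tilde v_2\|)\ge\tfrac12\|\tilde v_1-\tilde v_2\|\gtrsim\delta_n/\sqrt{K_r}$. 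Taking whichever row-group attains this maximum (its count is also $\gtrsim n/K_{\max}^2$) and using that any submatrix lower-bounds $\|\tilde M\|$, this crude merged-block estimate gives $\|\tilde M\|\gtrsim\delta_n\sqrt n/K_{\max}^{3/2}$.

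The main obstacle is this third step, and specifically making the combinatorial bound uniform and reconciling its $K_{\max}$ dependence with Assumption \ref{assump:a3}. Because consistency of $(\hat g^s,\hat g^r)$ is guaranteed only under $H_0$, under $H_1$ I have no control over the estimated labels, so the bound on $\|\tilde M\|$ must hold simultaneously for \emph{every} admissible labeling into $(K_{s0},K_{r0})$ groups rather than for a single well-behaved estimate; the forced-merge/load-balancing lemma is precisely what supplies this uniformity. The delicate point is that a true community can be split thinly across many clusters, which is exactly what caps the guaranteed merged-block size and hence determines the power of $K_{\max}$ in the bound: the naive extraction yields $\delta_n\sqrt n/K_{\max}^{3/2}$, whereas Assumption \ref{assump:a3} directly furnishes divergence of $\delta_n\sqrt n/K_{\max}$. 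Closing this gap, by extracting the signal from the \emph{most concentrated} merged cluster (and, if needed, aggregating the separation across several receiver communities) rather than from an arbitrary one, and then invoking $\delta_n\gg K_{\max}/\sqrt n$, is where the real work lies. The remaining, more routine obstacle is decoupling the $A$-dependence of the denominator in $\tilde N$ from its numerator, which is handled by the uniform concentration of $\hat B$ in the second step.
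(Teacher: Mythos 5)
Your overall strategy is the same one the paper uses: split $A-\hat\Omega$ into noise $(A-\Omega)$ plus signal $(\Omega-\hat\Omega)$, lower-bound the signal by exhibiting a low-rank structured submatrix created by forced merging of true communities, upper-bound the noise by a random-matrix bound, and conclude by the (reverse) triangle inequality; the paper does this on an unnormalized submatrix and divides by the normalization at the end, while you normalize first — an immaterial difference. The substantive divergence is the combinatorial merging step, and that is exactly where your proposal has a genuine, self-acknowledged gap. Your load-balancing pigeonhole (some estimated sender cluster contains $\gtrsim n/K_{\max}^2$ nodes from each of two distinct true sender communities) is correct, but it only yields $\|\tilde M\|\gtrsim \delta_n\sqrt{n}/K_{\max}^{3/2}$, whereas Assumption \ref{assump:a3} only guarantees divergence of $\delta_n\sqrt{n}/K_{\max}$. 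So, as written, your argument proves the theorem only under the stronger condition $\delta_n\sqrt{n}/K_{\max}^{3/2}\to\infty$ (e.g., whenever $K_{\max}$ is bounded), not under Assumptions \ref{assump:a1}--\ref{assump:a3} in the generality stated; you flag closing this as ``where the real work lies,'' but you do not close it. For comparison, the paper attains the better rate $\delta_n\sqrt{n/(K_sK_r)}$ by asserting — also ``by pigeonhole'' — that the \emph{entire} true communities $\mathcal{S}_1\cup\mathcal{S}_2$, and all of $\mathcal{T}$, are absorbed into single estimated clusters. You correctly refuse to use that claim, since under $H_1$ nothing controls $(\hat g^s,\hat g^r)$ and a true community can be split across many clusters; your version is the rigorous one, but it trades away precisely the factor of $\sqrt{K_{\max}}$ needed to match the theorem's stated assumptions. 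Incidentally, your construction also avoids any claim about the receiver-side clustering (the identical-rows argument works even when the columns of $\mathcal{T}$ are scattered across estimated receiver clusters), which the paper's proof does require.

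A secondary technical gap is your bound $\|\tilde N\|\le 2(1+\eta)+o_P(1)$. Lemma \ref{Extension} cannot be applied to $\tilde N$ directly, because its entries are not independent: the denominators involve $\hat\Omega$, which depends on all of $A$. Boundedness of the denominators (via uniform concentration of $\hat B$, i.e., Lemma \ref{lem:bound-Omega}) does not by itself repair this, since entrywise domination controls the spectral norm only through the entrywise absolute-value matrix, and $\||A-\Omega|\|=\Theta_P(n)$, which would give only $\|\tilde N\|=O_P(\sqrt{n})$ — far too weak. The repair is to exploit that $\hat\Omega$ is constant on estimated blocks, writing $\tilde N=\sum_{k,l}c_{kl}\,P_k(A-\Omega)Q_l$ with $c_{kl}\le[(n-1)\mu]^{-1/2}$ and $P_k,Q_l$ coordinate projections, which gives $\|\tilde N\|=O_P(K_{\max}^2)$ crudely; or, cleaner, apply the triangle inequality only on your submatrix, where the normalization depends on the column alone and is therefore a diagonal right-multiplication, yielding $\|\tilde N(\mathrm{sub})\|\le\|A-\Omega\|\,[(n-1)\mu]^{-1/2}=O_P(1)$. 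The paper sidesteps this issue entirely because on its submatrix the normalization is a single constant. Both of these defects disappear when $K_{\max}$ is fixed, in which case your outline is sound and, on the combinatorial step, more careful than the paper's own proof.
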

This divergence property ensures that whenever the hypothesized model lacks sufficient sender or receiver communities, $\hat{T}_n$ will exceed any fixed threshold with probability approaching 1. Combined with Theorem \ref{thm:null}, this guarantees asymptotically separation between correctly specified and underfitted models.
\section{Sequential estimation of community numbers}\label{sec:estimation}
Building on the goodness-of-fit test developed in Section \ref{sec:test}, we now address the core problem of joint community numbers estimation. The sequential testing framework leverages the asymptotic behavior of $\hat{T}_n$ established in Theorems \ref{thm:null} and \ref{thm:power} to systematically identify the true $(K_s, K_r)$ while maintaining computational efficiency. This approach transforms model selection into an ordered exploration of candidate pairs, where the test statistic's dichotomous behavior—convergence to zero under correct specification versus divergence under underfitting—provides reliable stopping criteria. We further propose a ratio-based variant  of this approach to enhance robustness in practical settings.
\subsection{Sequential testing algorithm and its estimation consistency}
The estimation procedure evaluates candidate pairs $(k_s, k_r)$ in \emph{lexicographical order}:
\[
(k_s, k_r) <_{\text{lex}} (k_s', k_r') \iff k_s + k_r < k_s' + k_r' \quad \text{or} \quad k_s + k_r = k_s' + k_r' \ \text{and} \ k_s < k_s'.
\]

This order prioritizes candidate pairs with smaller total community numbers \(k_s + k_r\). Among pairs with the same total, those with a smaller sender community number \(k_s\) are prioritized. Let $\mathcal{P} = \{(k_s, k_r)\}_{m=1}^M$ be the lexicographically ordered sequence of candidate pairs from $(1,1)$ to $(K_{\max}, K_{\max})$, where $M = K_{\max}^2$. The following example details of the lexicographical order of candidate pairs \(\mathcal{P}\) for \(K_{\max} = 8\).
\begin{Ex}\label{example}
Consider searching over candidate sender community numbers \(k_s\) from 1 to 8 and receiver community numbers \(k_r\) from 1 to 8. Table \ref{tab:lex_order} provides the lexicographical order of candidate pairs \(\mathcal{P}\) for \(K_{\max} = 8\).
\begin{table}[htbp]
\centering
\caption{Lexicographical order of candidate pairs $\mathcal{P}$ for $K_{\max}=8$.}
\label{tab:lex_order}
\begin{tabular}{cc|cc|cc|cc|cc|cc|cc|cc}
\toprule
$m$ & $(k_s, k_r)$ & $m$ & $(k_s, k_r)$ & $m$ & $(k_s, k_r)$ & $m$ & $(k_s, k_r)$ & $m$ & $(k_s, k_r)$ & $m$ & $(k_s, k_r)$ & $m$ & $(k_s, k_r)$ & $m$ & $(k_s, k_r)$ \\
\midrule
1 & (1,1) & 9 & (3,2) & 17 & (2,5) & 25 & (4,4) & 33 & (5,4) & 41 & (6,4) & 49 & (8,3) & 57 & (7,6) \\
2 & (1,2) & 10 & (4,1) & 18 & (3,4) & 26 & (5,3) & 34 & (6,3) & 42 & (7,3) & 50 & (4,8) & 58 & (8,5) \\
3 & (2,1) & 11 & (1,5) & 19 & (4,3) & 27 & (6,2) & 35 & (7,2) & 43 & (8,2) & 51 & (5,7) & 59 & (6,8) \\
4 & (1,3) & 12 & (2,4) & 20 & (5,2) & 28 & (7,1) & 36 & (8,1) & 44 & (3,8) & 52 & (6,6) & 60 & (7,7) \\
5 & (2,2) & 13 & (3,3) & 21 & (6,1) & 29 & (1,8) & 37 & (2,8) & 45 & (4,7) & 53 & (7,5) & 61 & (8,6) \\
6 & (3,1) & 14 & (4,2) & 22 & (1,7) & 30 & (2,7) & 38 & (3,7) & 46 & (5,6) & 54 & (8,4) & 62 & (7,8) \\
7 & (1,4) & 15 & (5,1) & 23 & (2,6) & 31 & (3,6) & 39 & (4,6) & 47 & (6,5) & 55 & (5,8) & 63 & (8,7) \\
8 & (2,3) & 16 & (1,6) & 24 & (3,5) & 32 & (4,5) & 40 & (5,5) & 48 & (7,4) & 56 & (6,7) & 64 & (8,8) \\
\bottomrule
\end{tabular}
\end{table}
\end{Ex}

Let $(k_{s}, k_{r})$ be the $m$-th candidate pair in the candidate sequence $\mathcal{P}$. The estimator is 
\[
(\hat{K}_s, \hat{K}_r) = \min\nolimits_{\mathrm{lex}} \left\{ m \in \{1,2,\ldots,M^{2}\}: \hat{T}_n(k_s, k_r) < t_n \right\},
\]
where $t_n = n^{-\varepsilon}$ for $\varepsilon\in(0,0.5)$ is a decaying threshold. Algorithm \ref{alg:DiGoF} below summarizes the details of this sequential testing procedure.
\begin{algorithm}[H]
\caption{DiGoF}\label{alg:DiGoF}
\begin{algorithmic}[1]
\Require Adjacency matrix $A$, significance threshold $t_n$ (default $t_n = n^{-1/5}$), maximum candidate number $K_{\max}$ (default $K_{\max} =  \left\lfloor \sqrt{n / \log n} \right\rfloor$), where $n$ is the number of nodes
\Ensure Estimated community numbers $(\hat{K}_s, \hat{K}_r)$
\State Generate candidate sequence $\mathcal{P} = [(k_s, k_r)]_{m=1}^M$ with $M = K_{\max}^2$ in lexicographical order
\For{$m = 1$ to $M$}
    \State Let $(k_s, k_r) \gets \mathcal{P}(m)$
    \State Compute $\hat{T}_n(k_s, k_r)$ via Equation (\ref{eq:test_stat})
    \If{$\hat{T}_n(k_s, k_r) < t_n$}
        \State \Return $(\hat{K}_s, \hat{K}_r) = (k_s, k_r)$
    \EndIf
\EndFor
\State \Return $(\hat{K}_s, \hat{K}_r) = \mathcal{P}(M)$ \Comment{If no candidate satisfies $\hat{T}_n < t_n$, return the largest candidate}
\end{algorithmic}
\end{algorithm}

$K_\mathrm{max}$ in Algorithm \ref{alg:DiGoF} is set as $ \left\lfloor \sqrt{\frac{n}{\log n}} \right\rfloor$, where this value is suggested by Assumption \ref{assump:a3}. The algorithm starts at \((1,1)\) and sequentially evaluates the test statistic \(\hat{T}_n(k_s, k_r)\) for each candidate pair in the lexicographical order. It stops at the first pair \((k_s, k_r)\) where \(\hat{T}_n(k_s, k_r) < t_n\) and returns \((\hat{K}_s, \hat{K}_r) = (k_s, k_r)\) as the estimated community numbers.
\begin{rem}
(\emph{Rationale for the lexicographical order}). The lexicographical search order \((k_s, k_r) <_{\text{lex}} (k'_s, k'_r)\) is designed to align with the asymptotic behavior of the test statistic \(\hat{T}_n\) shown in Theorems \ref{thm:null}-\ref{thm:power} and ensure computational efficiency:  
\begin{itemize}
  \item It prioritizes candidate pairs with minimal total communities \(k_s + k_r\). Since underfitted models (\(k_s < K_s\) or \(k_r < K_r\)) yield \(\hat{T}_n \stackrel{P}{\to} \infty\) by Theorem \ref{thm:power}, they are quickly rejected. The algorithm only proceeds to higher-complexity candidates when necessary, stopping at the first pair where \(\hat{T}_n < t_n\) (likely near the true \(K_s + K_r\)).  
  \item By testing simpler models first (e.g., (1,1) → (1,2) → (2,1)), the algorithm avoids expensive computations for overfitted candidates (e.g., (5,10)) if earlier pairs fit the data.  
  \item It embodies Occam’s razor by favoring lesser parameterizations (\(k_s \times k_r\) blocks), reducing the risk of overfitting. 
\end{itemize}
\end{rem}
\begin{rem}
It is crucial to emphasize that the sequential search algorithm operates solely based on the established convergence of the upper bound of the test statistic $\hat{T}_n$ to zero under the null hypothesis, without requiring characterization of its lower bound or asymptotic distribution. The algorithm accepts a candidate pair $(k_s, k_r)$ exclusively when $\hat{T}_n$ falls below the decaying threshold $t_n$—an event guaranteed with high probability under the true model $(K_s, K_r)$. While characterizing the precise concentration (including lower bounds) or limiting distribution of $\hat{T}_n$ under $H_0$ remains theoretically valuable, such knowledge is unnecessary for the algorithm's consistency. The divergence of $\hat{T}_n$ to infinity under alternative hypotheses further ensures that underfitted models are rejected with high probability, independently of $\hat{T}_n$'s lower-tail behavior or limit distribution under $H_0$.
\end{rem}

The following theorem guarantees that the sequential procedure achieves joint consistency in recovering the true community numbers under ScBM.
\begin{thm}\label{thm:consistency}
Under Assumptions \ref{assump:a1}-\ref{assump:a3}, let $(\hat{K}_s, \hat{K}_r)$ be obtained from Algorithm \ref{alg:DiGoF} with $t_n = n^{-\varepsilon}$ for any $\varepsilon >0$, we have
\[
\lim_{n \to \infty} \mathbb{P}\left( (\hat{K}_s, \hat{K}_r) = (K_s, K_r) \right) = 1.
\]
\end{thm}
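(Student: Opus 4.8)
The plan is to characterize the event of correct recovery as the conjunction of two sub-events and to bound each separately. Writing $m^\star$ for the lexicographic index of the true pair $(K_s,K_r)$ in $\mathcal{P}$, the stopping rule of Algorithm \ref{alg:DiGoF} returns $(\hat K_s,\hat K_r)=(K_s,K_r)$ exactly when (i) every earlier candidate is \emph{not} accepted, i.e. $\hat T_n(k_s,k_r)\ge t_n$ for all $(k_s,k_r)<_{\mathrm{lex}}(K_s,K_r)$, and (ii) the true pair \emph{is} accepted, i.e. $\hat T_n(K_s,K_r)<t_n$. A structural advantage of halting at the first acceptance is that I never need to analyze $\hat T_n$ at overfitted pairs; rejecting all lexicographically earlier pairs and accepting the true one suffices. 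I would first record that $(K_s,K_r)$ lies in the search range for $n$ large, since Assumption \ref{assump:a3} gives $\max(K_s,K_r)\ll\sqrt{n/\log n}=K_{\max}$.

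The core combinatorial observation is that every pair strictly preceding $(K_s,K_r)$ in lexicographic order is \emph{underfitted}. Indeed, if $(k_s,k_r)<_{\mathrm{lex}}(K_s,K_r)$ then either $k_s+k_r<K_s+K_r$, in which case $k_s\ge K_s$ and $k_r\ge K_r$ cannot both hold, forcing $K_s>k_s$ or $K_r>k_r$; or $k_s+k_r=K_s+K_r$ with $k_s<K_s$, which already gives $K_s>k_s$. In either case the hypothesis of Theorem \ref{thm:power} is met, so $\hat T_n(k_s,k_r)\xrightarrow{P}\infty$; since $t_n=n^{-\varepsilon}\to 0$, each such pair satisfies $\mathbb{P}(\hat T_n(k_s,k_r)\ge t_n)\to 1$. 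To control sub-event (i) I would then take a union bound over the at most $\mathcal{O}(K_{\max}^2)$ earlier candidates.

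For sub-event (ii) I would invoke the correctly specified case. Theorem \ref{thm:null} as stated only gives $\hat T_n<\epsilon$ for \emph{fixed} $\epsilon$, so I would instead use the quantitative estimate underlying Lemma \ref{ideal0} (through Lemma \ref{Extension}), namely that under $H_0$ the upper bound of $\hat T_n(K_s,K_r)$ is $\mathcal{O}_P(\sqrt{\log n/n})$ once $\sigma_1(\hat R)$ is concentrated near $2$ without appeal to any limiting law. Because $\varepsilon<1/2$ forces $\sqrt{\log n/n}=o(n^{-\varepsilon})=o(t_n)$, this yields $\mathbb{P}(\hat T_n(K_s,K_r)<t_n)\to 1$. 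Combining with the previous paragraph through $\mathbb{P}((\hat K_s,\hat K_r)=(K_s,K_r))\ge 1-\mathbb{P}(\text{(i) fails})-\mathbb{P}(\text{(ii) fails})$ then gives the claim.

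The hard part will be making the two probability bounds survive simultaneously when $K_{\max}\to\infty$. On the acceptance side the obstacle is matching the \emph{decaying} threshold: I must extract from the Bandeira--van Handel-type estimate a genuine rate for the upper tail of $\sigma_1(\hat R)-2$ and verify that the free parameter $\eta$ in $\|R\|\le 2(1+\eta)+\mathcal{O}_P(\sqrt{\log n/n})$ can be sent to zero slowly enough that the resulting bound stays $o(t_n)$ for the admissible range of $\varepsilon$, all while propagating the estimation error of $(\hat B,\hat g^s,\hat g^r)$ from $\hat R$ back to $R$ under consistent recovery. On the rejection side the obstacle is that the union in sub-event (i) ranges over a count growing like $K_{\max}^2$, so Theorem \ref{thm:power} must be upgraded to a per-pair statement whose failure probability decays fast enough (polynomially or faster in $n$) to beat this growing number. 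I expect the acceptance-side rate matching to be the genuine crux, since it is precisely where the absence of a limiting distribution for the asymmetric residual forces the argument to rely on singular-value tail bounds alone.
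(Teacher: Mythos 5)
Your proposal follows essentially the same route as the paper's proof: the paper defines events $\mathcal{B}$ (some lexicographically earlier candidate is accepted) and $\mathcal{C}$ (the true pair is rejected), bounds $\mathbb{P}(\mathcal{B})$ by a union bound over at most $4K_{\max}^2$ underfitted pairs using the polynomial failure probability $O(n^{-7/2})$ from the proof of Theorem \ref{thm:power}, and bounds $\mathbb{P}(\mathcal{C})$ via the null behavior of $\hat{T}_n$, exactly matching your sub-events (i) and (ii). The one substantive difference is on the acceptance side: the paper's Part 1 simply invokes Theorem \ref{thm:null} to assert $\hat{T}_n(K_s,K_r)\xrightarrow{P}0$ and then concludes $\mathbb{P}(\hat{T}_n(K_s,K_r)\geq t_n)\to 0$ because $t_n\to 0$, which, as you correctly point out, is not a valid inference on its own—Theorem \ref{thm:null} controls $\hat{T}_n$ only against a \emph{fixed} $\epsilon$, and two quantities both tending to zero need a rate comparison before one can dominate the other. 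Your plan to extract the quantitative bound from Lemma \ref{ideal0} and Lemma \ref{Extension}, track the free parameter $\eta$ (whose constant $c_\eta$ degrades as $\eta\to 0$), and propagate the $O_P\bigl(n^{-1/2}K_{\max}\sqrt{\log n}\bigr)$ perturbation from $\hat{R}-R$ is precisely the missing step, and the restriction $\varepsilon<1/2$ that this analysis forces is consistent with the paper's own Remark \ref{rem:threshold-choice}, whereas the theorem's claim ``for any $\varepsilon>0$'' is not actually substantiated by the paper's argument. In short: your approach is the paper's approach, but your identification of the acceptance-side rate matching as the genuine crux is a sharper reading of where the difficulty lies than the paper's own write-up provides.
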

This result establishes that DiGoF correctly identifies both sender and receiver community counts with probability approaching 1 as $n \to \infty$. The lexicographical search ensures the first acceptable pair coincides with $(K_s, K_r)$ asymptotically.
\begin{rem}\label{rem:threshold-choice}
(\emph{Threshold selection}). Though DiGoF enjoys consistency of estimation for any $\varepsilon>0$ in Theorem \ref{thm:consistency}. Our numerical experiments in Section \ref{sec:experiments} find that DiGoF is robust for $\varepsilon < 0.5$, with accuracy declining when $\varepsilon \geq 0.65$. This occurs because excessively rapid decay ($\varepsilon \geq 0.5$) makes $t_n$ vulnerable to random fluctuations in $\hat{T}_n$. Therefore, we recommend $\varepsilon \in (0, 0.5)$ and set default $\varepsilon = 0.2$ for typical $n$.
\end{rem}
\subsection{Ratio-based variant of DiGoF and its estimation consistency}\label{subsec:ratio}
While DiGoF provides consistent estimation under ScBM, its sensitivity to threshold selection may hinder robustness in real-world directed networks with model misspecification. To mitigate this, we propose a ratio-based estimator that identifies the transition point corresponding to the true community structure in the test statistic sequence, leveraging relative changes rather than absolute magnitudes.

For convenience, for each $m \in \{1, 2,\dots,M\}$, here we let $\hat{T}_n(m)$ denote the $\hat{T}_{n}$ computed for the $m$-th candidate pair in $\mathcal{P}$ using Equation (\ref{eq:test_stat}). Define the ratio statistic for the $m$-th candidate pair as
\begin{equation}\label{eq:ratio}
r_{m} = \left| \frac{\hat{T}_n(m-1)}{\hat{T}_n(m)} \right|, \quad m=2,3,\ldots, M,
\end{equation}
where the absolute value addresses potential negative value of $\hat{T}_{n}$ for the true model $(K_s, K_r)$.

Theorems \ref{thm:null} and \ref{thm:power} guarantee that under the true model \((K_s, K_r)\) (with lexicographic index \(m_*\) in \(\mathcal{P}\)), the upper bound of \(\hat{T}_n(m_*)\) converges to zero with high probability, while \(\hat{T}_n(m_*-1)\) diverges to infinity under underfitting when \(m_* > 1\). Consequently, we shall expect that \(r_{m_*} = |\hat{T}_n(m_*-1)/\hat{T}_n(m_*)|\) is the first significant peak (change point) in the ratio sequence \(\{r_m\}^{M}_{m=2}\). We refer to this method as Ratio-DiGoF (RDiGoF for short), which identifies the first peak in the sequence \(\{(m, r_m)\}_{m=2}^M\). The complete RDiGoF algorithm is summarized in Algorithm \ref{alg:RDIGoF}.

\begin{algorithm}[H]
\caption{RDiGoF}\label{alg:RDIGoF}
\begin{algorithmic}[1]
\Require  Adjacency matrix $A$, threshold $\tau > 0$ (default $\tau = 10$), maximum candidate number $K_{\max}$ (default $K_{\max} =  \left\lfloor \sqrt{n / \log n} \right\rfloor$), where $n$ is the number of nodes
\Ensure Estimated community numbers $(\hat{K}_s, \hat{K}_r)$
\State Generate candidate sequence $\mathcal{P} = [(k_s, k_r)]_{m=1}^M$ with $M = K_{\max}^2$ in lexicographical order
\State Compute $\hat{T}_n(1)$ for candidate $(1,1)$ via Equation (\ref{eq:test_stat})
\If{$\hat{T}_n(1) < n^{-1/5}$}
    \State \Return $(\hat{K}_{s},\hat{K}_{r})=(1,1)$
\EndIf
\For{$m = 2$ to $M$}
    \State Compute ratio statistic $r_m$ via Equation (\ref{eq:ratio})
    \If{$r_m > \tau$}
        \State \Return $(\hat{K}_s, \hat{K}_r) = \mathcal{P}(m)$
    \EndIf
\EndFor
\State \Return $(\hat{K}_s, \hat{K}_r) = \mathcal{P}(M)$ \Comment{If no candidate satisfies $r_m > \tau$, return the largest candidate}
\end{algorithmic}
\end{algorithm}

The following theoretical results establish the estimation consistency of RDiGoF, analogous to Theorems \ref{thm:power} and \ref{thm:consistency}.

\begin{thm}\label{thm:ratio}
Under the assumptions of Theorem \ref{thm:consistency}, and assuming additionally that $\delta_n \geq \delta_{\min} > 0$ for all $n$ and that $K_{\max} = \max(K_s, K_r)$ is fixed, then:
\begin{enumerate}
    \item For underfitted models ($m < m_*$): $r_m = O_P(1)$.
    \item For the true model ($m = m_*$): $r_{m_*} \stackrel{P}{\to} \infty$.
\end{enumerate}
\end{thm}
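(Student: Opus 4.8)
\emph{Proof proposal.} My plan is to reduce both claims to one quantitative fact that refines Theorem \ref{thm:power}: under the extra hypotheses $\delta_n\ge\delta_{\min}>0$ and $K_{\max}$ fixed, there are constants $0<c_1\le c_2<\infty$, depending only on $\delta,\delta_{\min},c_0,K_{\max}$, with
\[
\mathbb{P}\Big(c_1\sqrt{n}\le \hat{T}_n(m)\le c_2\sqrt{n}\ \text{ for every underfitted } m<m_*\Big)\to 1 .
\]
Once this two-sided $\Theta_P(\sqrt{n})$ rate is available, both parts follow in a few lines. For Part 1, if $m<m_*$ then $m-1<m_*$ as well, so the numerator and denominator of $r_m=|\hat{T}_n(m-1)/\hat{T}_n(m)|$ both lie in $[c_1\sqrt{n},c_2\sqrt{n}]$ with probability tending to $1$; hence $r_m\le c_2/c_1$ with high probability, i.e.\ $r_m=O_P(1)$. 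For Part 2, the numerator $\hat{T}_n(m_*-1)$ is underfitted and thus $\ge c_1\sqrt{n}$ with high probability, while the denominator is bounded: singular values are nonnegative, so $\hat{T}_n(m_*)=\sigma_1(\hat{R})-2\ge-2$ always, and Theorem \ref{thm:null} gives $\hat{T}_n(m_*)<\epsilon$ with high probability, whence $|\hat{T}_n(m_*)|\le 2$. Combining, $r_{m_*}\ge c_1\sqrt{n}/2\to\infty$ in probability.

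The upper bound $\hat{T}_n(m)\le c_2\sqrt{n}$ is elementary and needs no spectral theory: writing $\hat{R}_{ij}=(A_{ij}-\hat{\Omega}_{ij})/\sqrt{(n-1)\hat{\Omega}_{ij}(1-\hat{\Omega}_{ij})}$, Assumption \ref{assump:a1} keeps the averaged plug-in $\hat{B}$, and hence $\hat{\Omega}_{ij}(1-\hat{\Omega}_{ij})$, bounded away from $0$ with high probability, so each $\hat{R}_{ij}^2=O(1/n)$ and $\sigma_1(\hat{R})\le\|\hat{R}\|_F=O(\sqrt{n})$. The lower bound is the substantive half. I would decompose $\hat{R}=N+S$, where $N$ is the normalized noise $A-\Omega$ and $S$ is the normalized deterministic part $\Omega-\hat{\Omega}$, and use $\sigma_1(\hat{R})\ge\sigma_1(S)-\sigma_1(N)$. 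For $\sigma_1(S)$ I would exhibit a constant sub-block: underfitting at index $m$ forces $\mathcal{M}$ to merge two true communities $k,k'$ that, by the definition of $\delta_n$ and $\delta_n\ge\delta_{\min}$, satisfy $|B(k,l)-B(k',l)|\ge\delta_{\min}$ for some $l$; since $\hat{\Omega}$ must assign the merged indices a single block value, its residual against $\Omega$ exceeds $\delta_{\min}/2$ in magnitude over an index rectangle of size $\gtrsim (c_0 n/K_{\max})\times(c_0 n/K_{\max}^2)$ (Assumption \ref{assump:a2}), so that rectangle alone gives $\|\Omega-\hat{\Omega}\|_2\gtrsim\delta_{\min}n$ and $\sigma_1(S)\gtrsim\sqrt{n}$. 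Crucially, this lower bound is structural and holds whatever (random) labeling $\mathcal{M}$ returns, so it does not require consistency of $\mathcal{M}$ under misspecification. Finally $\sigma_1(N)=O_P(1)$ by the random-matrix tail bound underlying Lemma \ref{ideal0} (Lemma \ref{Extension} in the Appendix), after replacing the data-dependent normalization $\hat{\Omega}_{ij}(1-\hat{\Omega}_{ij})$ by its population target to restore independence of entries.

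I expect the main obstacle to be exactly that last noise step: because $\hat{\Omega}$ is a function of $A$, the entries of $N$ are not independent, and one must show $\hat{\Omega}_{ij}(1-\hat{\Omega}_{ij})$ concentrates uniformly around a deterministic quantity bounded away from $0$ and $1$, so that $N$ can be compared to a genuine independent-entry matrix with variance profile $\Theta(1/n)$ before invoking Lemma \ref{Extension}. With $K_{\max}$ fixed this concentration is strong, since every $\hat{B}(k,l)$ averages $\Theta(n^2/K_{\max}^2)$ independent Bernoulli entries; moreover, fixing $K_{\max}$ leaves only $M=K_{\max}^2$ candidate pairs, so the constants $c_1,c_2$ can be made uniform over all underfitted $m$ simply by passing to the worst case, and no union-bound subtlety arises.

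I anticipate that the divergence half of this rate is already essentially contained in the proof of Theorem \ref{thm:power}; the genuinely new ingredients are the elementary Frobenius upper bound and the uniformity afforded by the fixed $K_{\max}$, which together convert the qualitative divergence into the two-sided $\Theta_P(\sqrt{n})$ rate that controls the ratio. The boundedness of the denominator in Part 2 is then a free consequence of the nonnegativity of singular values combined with Theorem \ref{thm:null}, so no lower-tail or limiting-distribution information about $\hat{T}_n(m_*)$ under $H_0$ is needed.
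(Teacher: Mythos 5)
Your proposal is correct and follows essentially the same route as the paper: its Lemma \ref{lem:ratio-bound} establishes exactly your two-sided $\Theta_P(\sqrt{n})$ bound for underfitted pairs (lower bound recycled from the proof of Theorem \ref{thm:power}, upper bound using that the plug-in normalization is bounded away from zero via Lemma \ref{lem:bound-Omega}), and its Lemma \ref{lem:Tn-behavior} gives your bounded-denominator argument $|\hat{T}_n(m_*)|\le 2$ from nonnegativity of singular values plus Theorem \ref{thm:null}, yielding the same finite-candidate uniformity from fixed $K_{\max}$. The only notable difference is your upper bound via $\sigma_1(\hat{R})\le\|\hat{R}\|_F=O(\sqrt{n})$, which is if anything cleaner than the paper's step passing from an entrywise bound to $\|\hat{R}\|\le\|A-\hat{\Omega}\|/\sqrt{(n-1)\mu}$ (entrywise domination controls the Frobenius norm, not directly the spectral norm), so your variant quietly repairs that technicality while reaching the identical conclusion.
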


\begin{thm}\label{thm:RDiGoF-consistency}
Under the assumptions of Theorem \ref{thm:ratio}, let $(\hat{K}_s, \hat{K}_r)$ be the output of Algorithm \ref{alg:RDIGoF}. Then there exists a constant $C > 0$ such that if $\tau > C$,
\[
\lim_{n \to \infty} \mathbb{P}\left( (\hat{K}_s, \hat{K}_r) = (K_s, K_r) \right) = 1.
\]
\end{thm}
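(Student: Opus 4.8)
The plan is to combine the sharp dichotomy of Theorem~\ref{thm:ratio} with the lexicographic structure of $\mathcal{P}$, analyzing the initial $(1,1)$-check and the subsequent ratio loop separately. Write $m_*$ for the lexicographic index of the true pair $(K_s,K_r)$ in $\mathcal{P}$. First I would record the combinatorial fact that every candidate with index $m<m_*$ is underfitted: a candidate $(k_s,k_r)$ that is \emph{not} underfitted obeys $k_s\ge K_s$ and $k_r\ge K_r$, hence $k_s+k_r\ge K_s+K_r$ with equality only for the true pair itself, so every non-underfitted candidate other than $(K_s,K_r)$ has strictly larger total and therefore a lexicographic index exceeding $m_*$. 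Consequently Theorem~\ref{thm:power} gives $\hat T_n(m)\xrightarrow{P}\infty$ for every $m<m_*$.

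Next I would dispatch the two regimes of the algorithm. If $m_*=1$, i.e. $(K_s,K_r)=(1,1)$, then Theorem~\ref{thm:null}---through the rate $\hat T_n(1)=\mathcal{O}_P(\sqrt{\log n/n})$ supplied by Lemma~\ref{ideal0} and Lemma~\ref{Extension}---yields $\hat T_n(1)=o(n^{-1/5})$, so the initial test $\hat T_n(1)<n^{-1/5}$ succeeds with probability tending to one and the algorithm correctly returns $(1,1)$; this is precisely the $\varepsilon=1/5<1/2$ mechanism already exploited in Theorem~\ref{thm:consistency}. If instead $m_*\ge 2$, then $(1,1)$ is underfitted, so $\hat T_n(1)\xrightarrow{P}\infty$ and the initial check fails with probability tending to one; the algorithm then enters the ratio loop, and it remains to show the loop first exceeds $\tau$ exactly at $m=m_*$.

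For the loop I would control the ratios $r_m$ over the finitely many indices (recall $M=K_{\max}^2$ is a constant because $K_{\max}$ is fixed). For each $m$ with $2\le m\le m_*-1$---a possibly empty range---both $\hat T_n(m-1)$ and $\hat T_n(m)$ come from underfitted candidates, so Theorem~\ref{thm:ratio}(1) gives $r_m=O_P(1)$. Invoking $\delta_n\ge\delta_{\min}>0$ and $K_{\max}$ fixed, I would sharpen this by showing each underfitted statistic concentrates as $\hat T_n(m)=s_m\,(1+o_P(1))$ for a deterministic $s_m$ with $\underline c\,\sqrt n\le s_m\le\bar c\,\sqrt n$; then $r_m=(s_{m-1}/s_m)(1+o_P(1))$ is bounded above, uniformly over the finite index range, by $\bar c/\underline c+o_P(1)$. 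Setting $C:=1+\bar c/\underline c$, a constant depending only on the fixed model, every $\tau>C$ forces $\mathbb{P}\big(\max_{2\le m\le m_*-1}r_m\le\tau\big)\to 1$. At the true index, Theorem~\ref{thm:ratio}(2) gives $r_{m_*}\xrightarrow{P}\infty$, so $\mathbb{P}(r_{m_*}>\tau)\to 1$ for every fixed $\tau$. A union bound over the finitely many relevant events then shows that, with probability tending to one, the initial check behaves correctly, no ratio before $m_*$ crosses $\tau$, and $r_{m_*}$ does; hence the loop stops precisely at $m_*$ and returns $(K_s,K_r)$.

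The crux is this sharpening. Theorem~\ref{thm:ratio}(1) only asserts tightness ($r_m=O_P(1)$), which does not by itself yield $\mathbb{P}(r_m>\tau)\to 0$ for a \emph{fixed} threshold, and so cannot directly produce a single constant $C$ giving limiting probability exactly one. To close the gap I must upgrade tightness to genuine concentration of the underfitted statistics around deterministic values of exact order $\sqrt n$ that are bounded below. This is where the two extra hypotheses of Theorem~\ref{thm:ratio} are indispensable: the bounded-below separation $\delta_n\ge\delta_{\min}$ fixes the growth rate and keeps the denominators $s_m$ away from zero, while the fixed $K_{\max}$ both renders the collection of ratios finite (so the union bound applies) and stabilizes the leading deterministic orders as $n\to\infty$.
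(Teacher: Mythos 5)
Your overall architecture coincides with the paper's proof: the paper fixes $\tau>C$ with $C$ coming from Lemma \ref{lem:ratio-bound}, defines the events $E_1=\{r_{m_*}>2\tau\}$ and $E_2=\{r_m\le\tau\ \text{for all}\ m=2,\dots,m_*-1\}$, and concludes by a union bound over the finitely many indices (finite because $K_{\max}$ is fixed), exactly as you do. You also correctly identify the crux: Theorem \ref{thm:ratio}(1) only gives $r_m=O_P(1)$, which cannot produce a single fixed threshold $C$ valid with probability tending to one. However, your proposed route to close this gap is itself a genuine gap: you plan to show that each underfitted statistic concentrates multiplicatively, $\hat T_n(m)=s_m(1+o_P(1))$ for deterministic $s_m\asymp\sqrt n$, but you never prove this, and it is not obtainable from the paper's toolkit---under a misspecified $(k_s,k_r)$ the matrix $\hat\Omega$, hence $\hat R$, depends on the random output of $k$-means under misspecification, and nothing available pins $\sigma_1(\hat R)$ to a deterministic sequence. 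The good news is that concentration is unnecessary: all the argument needs is a two-sided high-probability bound $c_{\mathrm{low}}\sqrt n\le\hat T_n(m)\le C_{\mathrm{upp}}\sqrt n$ for underfitted $m$, with matching \emph{orders} rather than matching values. This is precisely the paper's Lemma \ref{lem:ratio-bound}: the lower bound is inherited from the proof of Theorem \ref{thm:power} (with $\delta_n\ge\delta_{\min}$ and $K_{\max}$ fixed, the divergence rate $d_n=\delta_n\sqrt n/K_{\max}$ is exactly of order $\sqrt n$), while the upper bound is elementary---by Lemma \ref{lem:bound-Omega} the entries of $\hat\Omega$ lie in $[\delta/2,1-\delta/2]$ with high probability, so $\|\hat R\|\le\|A-\hat\Omega\|/\sqrt{(n-1)\mu}$ with $\mu=\delta(1-\delta)/4$, and the crude bound $\|A-\hat\Omega\|\le\|A-\Omega\|+\|\Omega-\hat\Omega\|_F=O_P(\sqrt n)+n=O(n)$ already yields $\hat T_n(m)=O(\sqrt n)$. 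Taking $C=C_{\mathrm{upp}}/c_{\mathrm{low}}$, the rest of your argument goes through verbatim.

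A secondary gap lies in your $m_*=1$ branch. You claim $\hat T_n(1)=\mathcal O_P(\sqrt{\log n/n})$ from Lemmas \ref{ideal0} and \ref{Extension}, but Equation (\ref{roughbound1}) only gives $\hat T_n\le 2\eta+\sqrt{\gamma c_\eta\log n/(\delta(1-\delta)(n-1))}$ for each \emph{fixed} $\eta\in(0,1/2]$, with a constant $c_\eta$ depending on $\eta$; since the $2\eta$ term does not vanish with $n$, this yields only $\mathbb P(\hat T_n<\epsilon)\to1$ for fixed $\epsilon$ (which is all Theorem \ref{thm:null} asserts), not a rate $o(n^{-1/5})$. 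Beating the decaying threshold $n^{-1/5}$ in the initial check of Algorithm \ref{alg:RDIGoF} would require sending $\eta=\eta_n\to0$ while controlling the growth of $c_{\eta_n}$, which the paper's lemma does not supply. To be fair, the paper's own proof is silent here---its events presuppose $m_*\ge2$ and it never analyzes the initial $(1,1)$ check---so your explicit treatment of both regimes (including the correct observation that for $m_*\ge2$ the initial check fails with high probability by Theorem \ref{thm:power}) is more careful than the paper's; but as written, the $m_*=1$ step rests on a rate that the cited results do not deliver.
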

\section{Numerical experiments}\label{sec:experiments}
In this section, we conduct comprehensive numerical studies to evaluate four aspects of our method: (1) convergence behavior of $\hat{T}_n$ under $H_0$ to verify Theorem \ref{thm:null}, (2) behavior characteristics of DiGoF under both correctly specified and underfitted models to verify Theorem \ref{thm:power}, (3) estimation accuracy of DiGoF and RDiGoF in estimating the numbers of sender and receiver communities across various ScBM parameter settings, and (4) DiGoF's (and RDiGoF's) sensitivity to the threshold selection $t_n$ (and $\tau$). Unless specified, we set $\varepsilon=1/5$ in DiGoF and $\tau=10$ in RDiGoF. Directed networks are generated under ScBM as follows:
\begin{enumerate}
    \item[(I)] Assign each node to each sender (or receiver) community with equal probability.
    \item[(II)] Unless specified, construct the block probability matrix $B \in [0,1]^{K_s \times K_r}$ as below
 \[
B(k,l) = \rho \cdot \begin{cases} 
\alpha + \beta & \text{if } k = l \text{ and } k \leq \min(K_s, K_r) \\
\beta & \text{otherwise}
\end{cases}
\]
where $\alpha = 0.7$, $\beta = 0.2$, and $\rho \in (0,1)$ is a sparsity parameter that controls the sparsity of a directed network.
    
    \item[(III)] Generate $A$ via Definition \ref{def:ScBM} with independent entries
    \[
    A(i,j) \sim \text{Bernoulli}(B(g^s(i), g^r(j))) \quad \text{for} \quad i \neq j, \quad A(i,i) = 0.
    \]
\end{enumerate}

\subsection{Convergence of $\hat{T}_n$ under null hypothesis}\label{sec:nullconv}
We verify Theorem \ref{thm:null} by simulating $\hat{T}_n$ under $H_0$: $(K_{s0}, K_{r0}) = (K_s, K_r)$. We set network sizes $n\in\{200, 400,\ldots,2000\}$, $(K_s, K_r) \in \{(1,2), (2,2), (2,3), (2,4), (3,4), (3,5)\}$, and sparsity levels $\rho \in \{0.05, 0.1\}$. For each parameter configuration, we generate 200 independent directed networks and compute the mean $\hat{T}_n$ value. As shown in Figure \ref{fig:ex1}, $\hat{T}_n$ converges to zero as $n$ increases across all configurations, validating Theorem \ref{thm:null}. Meanwhile, convergence rates exhibit significant dependence on network sparsity ($\rho$). For $\rho = 0.05$, $\hat{T}_n$ decays more slowly than under $\rho = 0.1$, reflecting greater variance in the residual matrix under sparser regimes.
\begin{figure}[htp!]
\centering
\resizebox{\columnwidth}{!}{
\subfigure{\includegraphics[width=0.45\textwidth]{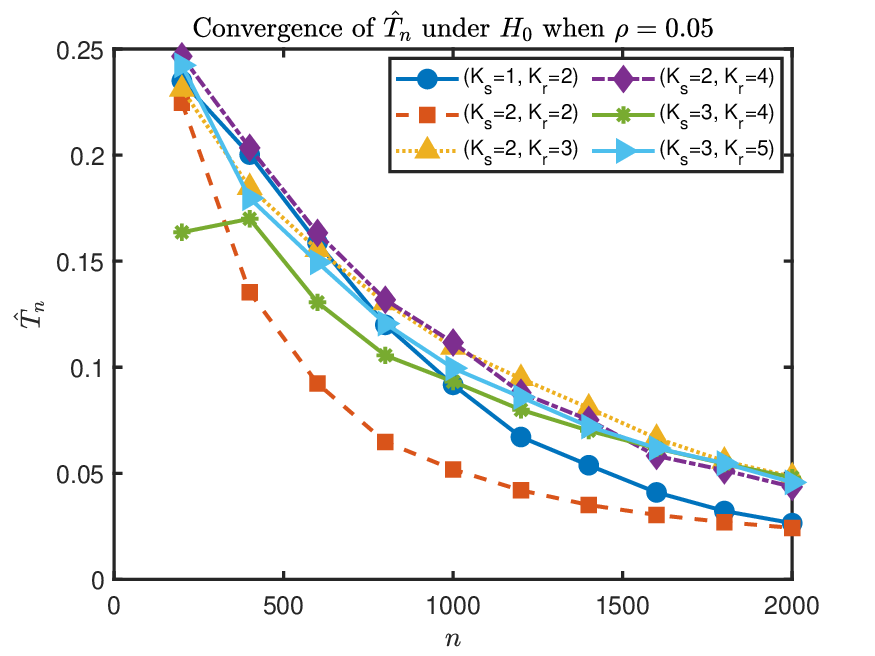}}
\subfigure{\includegraphics[width=0.45\textwidth]{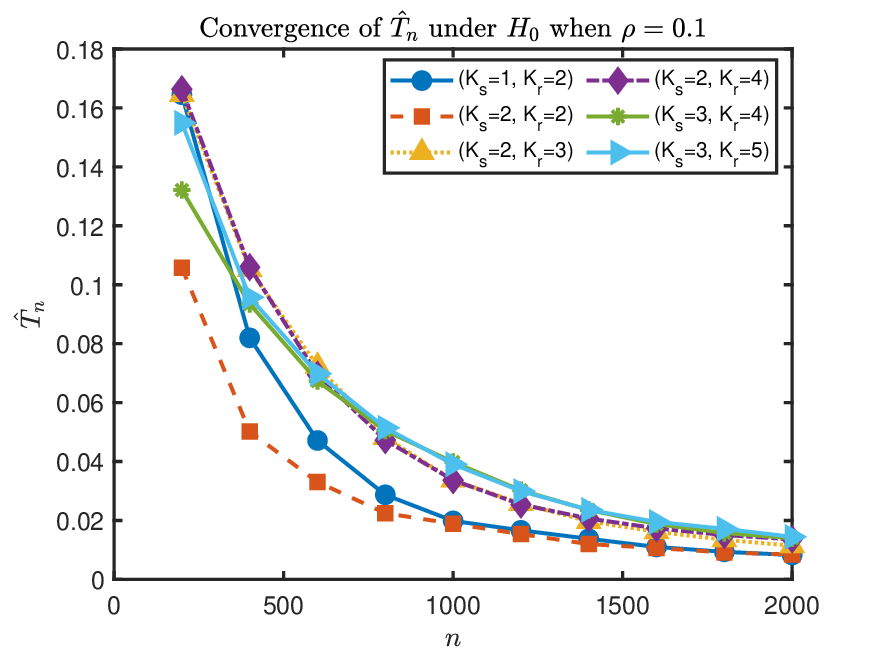}}
}
\caption{Behavior of $\hat{T}_{n}$ under $H_{0}$ as network size $n$ increases in different configurations of sender/receiver communities and network sparsity.}
\label{fig:ex1} 
\end{figure}
\subsection{DiGoF's behavior under true and underfitted models}\label{sec:size_power}
To validate the theoretical properties in Theorems \ref{thm:null} and \ref{thm:power}, we quantify DiGoF's acceptance rate under true models and rejection rate under underfitted models. The acceptance rate measures the probability that DiGoF accepts the true community pair $(K_{s0}, K_{r0}) = (K_s, K_r)$ when $\hat{T}_n < t_n$, while the rejection rate captures the probability of $\hat{T}_n \geq t_n$ when either $K_{s0} < K_s$ or $K_{r0} < K_r$. We fix $n = 1000$ and $\rho = 0.1$, with true community structures $(K_s, K_r) \in \{(2,3), (3,3), (3,4)\}$. For each true structure, we evaluate all candidate pairs $(K_{s0}, K_{r0})$ satisfying $K_{s0} \leq K_s$, $K_{r0} \leq K_r$, and $(K_{s0}, K_{r0}) \neq (K_s, K_r)$. Each configuration undergoes 200 replications. Table \ref{tab:size_power} summarizes the results, showing 100\% acceptance rates for correctly specified models and 100\% rejection rates for underfitted models, confirming theoretical predictions across sender-only, receiver-only, and joint underfitting scenarios.  This sharp dichotomy between perfect rejection of underfitted models and full acceptance of the true model underscores the statistical discriminative power of the proposed goodness-of-fit test.
\begin{table}[htbp]
\centering
\caption{Rejection rate for different true and hypothesized community pairs.}\label{tab:size_power}
\begin{tabular}{@{}cccc@{}}
\toprule
True \((K_s, K_r)\) & Hypothesized \((K_{s0}, K_{r0})\) & Rejection rate & Underfitting type \\
\midrule
\multirow{6}{*}{ (2,3) }
& (1,1) & 1.00 & Sender \& receiver \\
& (1,2) & 1.00 & Sender \& receiver \\
& (1,3) & 1.00 & Sender only \\
& (2,1) & 1.00 & Receiver only \\
& (2,2) & 1.00 & Receiver only \\
& (2,3) & 0    & -- \\
\midrule
\multirow{9}{*}{ (3,3) }
& (1,1) & 1.00 & Sender \& receiver \\
& (1,2) & 1.00 & Sender \& receiver \\
& (1,3) & 1.00 & Sender only \\
& (2,1) & 1.00 & Sender \& receiver \\
& (2,2) & 1.00 & Sender \& receiver \\
& (2,3) & 1.00 & Sender only \\
& (3,1) & 1.00 & Receiver only \\
& (3,2) & 1.00 & Receiver only \\
& (3,3) & 0    & -- \\
\midrule
\multirow{12}{*}{ (3,4) }
& (1,1) & 1.00 & Sender \& receiver \\
& (1,2) & 1.00 & Sender \& receiver \\
& (1,3) & 1.00 & Sender \& receiver \\
& (1,4) & 1.00 & Sender only \\
& (2,1) & 1.00 & Sender \& receiver \\
& (2,2) & 1.00 & Sender \& receiver \\
& (2,3) & 1.00 & Sender \& receiver \\
& (2,4) & 1.00 & Sender only \\
& (3,1) & 1.00 & Receiver only \\
& (3,2) & 1.00 & Receiver only \\
& (3,3) & 1.00 & Receiver only \\
& (3,4) & 0    & -- \\
\bottomrule
\end{tabular}
\end{table}
\subsection{Accuracy of DiGoF and RDiGoF}
We evaluate the accuracy of DiGoF and RDiGoF in estimating sender and receiver community numbers under varied parameter settings. We set true community number pairs $(K_s, K_r) \in \{(1,1), (1,2), (2,3), (3,3), (3,4), (5,4)\}$ to maintain $|K_s-K_r|\leq1$ to make $B$ satisfy Assumption \ref{assump:a3}. Network sizes range over $n \in \{200, 400, 600,800,1000\}$ with sparsity levels $\rho \in \{0.1, 0.2, 0.3, 0.4\}$. Table \ref{tab:accuracy} reports the empirical accuracy $\mathbb{P}((\hat{K}_s, \hat{K}_r) = (K_s, K_r))$ computed over 100 independent replications. The results shown in Table \ref{tab:accuracy} reveal critical performance trends under varying ScBM configurations. For simple structures like \((1,1)\) and \((1,2)\), DiGoF achieves perfect accuracy (1.00) across all \(n\) and \(\rho\), demonstrating robustness to sparsity when community counts are low. However, for complex structures (e.g., \((2,3)\), \((3,4)\), \((5,4)\)), accuracy exhibits a phase transition: at \(n=200\) and low sparsity (\(\rho=0.1\)), accuracy drops sharply (e.g., \(0.00\) for \((2,3)\) and \((5,4)\)), indicating insufficient signal for detection. As network size \(n\) (or sparsity parameter $\rho$) increases, accuracy improves monotonically. This confirms that larger networks and denser edges enhance community recovery, with DiGoF achieving asymptotic consistency shown in Theorem \ref{thm:consistency}. Furthermore, based on Table \ref{tab:accuracy}, RDiGoF demonstrates comparable accuracy to DiGoF in most scenarios, particularly as network size or density increases. While it exhibits slight underperformance in very sparse/small settings (e.g., (1,2) at $n=200, \rho=0.1$), it matches or marginally exceeds DiGoF in more complex or denser cases (e.g., (5,4) at $n=600, \rho=0.2$). Overall, both methods converge rapidly to near-perfect accuracy under sufficient signal.
\begin{table}[htbp]
\centering
\caption{Accuracy of DiGoF and RDiGoF under varying sparsity levels $\rho$ for different network sizes $n$ and community number pairs $(K_s, K_r)$.}\label{tab:accuracy}
\small
\begin{tabular}{@{}cc cccc cccc@{}}
\toprule
\multirow{2}{*}{$(K_s, K_r)$} & \multirow{2}{*}{$n$} & \multicolumn{4}{c}{DiGoF} & \multicolumn{4}{c}{RDiGoF} \\
\cmidrule(lr){3-6} \cmidrule(lr){7-10}
 & & $\rho=0.1$ & $\rho=0.2$ & $\rho=0.3$ & $\rho=0.4$ & $\rho=0.1$ & $\rho=0.2$ & $\rho=0.3$ & $\rho=0.4$ \\
\midrule
\multirow{5}{*}{ (1,1) } 
& 200  & 1.00 & 1.00 & 1.00 & 1.00 & 1.00 & 1.00 & 1.00 & 1.00 \\
& 400  & 1.00 & 1.00 & 1.00 & 1.00 & 1.00 & 1.00 & 1.00 & 1.00 \\
& 600  & 1.00 & 1.00 & 1.00 & 1.00 & 1.00 & 1.00 & 1.00 & 1.00 \\
& 800  & 1.00 & 1.00 & 1.00 & 1.00 & 1.00 & 1.00 & 1.00 & 1.00 \\
& 1000 & 1.00 & 1.00 & 1.00 & 1.00 & 1.00 & 1.00 & 1.00 & 1.00 \\
\addlinespace
\hline
\multirow{5}{*}{ (1,2) } 
& 200  & 1.00 & 1.00 & 1.00 & 1.00 & 0.65 & 1.00 & 1.00 & 1.00 \\
& 400  & 1.00 & 1.00 & 1.00 & 1.00 & 1.00 & 1.00 & 1.00 & 1.00 \\
& 600  & 1.00 & 1.00 & 1.00 & 1.00 & 1.00 & 1.00 & 1.00 & 1.00 \\
& 800  & 1.00 & 1.00 & 1.00 & 1.00 & 1.00 & 1.00 & 1.00 & 1.00 \\
& 1000 & 1.00 & 1.00 & 1.00 & 1.00 & 1.00 & 1.00 & 1.00 & 1.00 \\
\addlinespace
\hline
\multirow{5}{*}{ (2,3) } 
& 200  & 0.10 & 0.43 & 0.99 & 1.00 & 0.06 & 0.83 & 1.00 & 1.00 \\
& 400  & 0.75 & 1.00 & 1.00 & 1.00 & 0.94 & 1.00 & 1.00 & 1.00 \\
& 600  & 1.00 & 1.00 & 1.00 & 1.00 & 1.00 & 1.00 & 1.00 & 1.00 \\
& 800  & 1.00 & 1.00 & 1.00 & 1.00 & 1.00 & 1.00 & 1.00 & 1.00 \\
& 1000 & 1.00 & 1.00 & 1.00 & 1.00 & 1.00 & 1.00 & 1.00 & 1.00 \\
\addlinespace
\hline   
\multirow{5}{*}{ (3,3) } 
& 200  & 0.00 & 0.99 & 1.00 & 1.00 & 0.16 & 1.00 & 1.00 & 1.00 \\
& 400  & 1.00 & 1.00 & 1.00 & 1.00 & 0.98 & 1.00 & 1.00 & 1.00 \\
& 600  & 1.00 & 1.00 & 1.00 & 1.00 & 1.00 & 1.00 & 1.00 & 1.00 \\
& 800  & 1.00 & 1.00 & 1.00 & 1.00 & 1.00 & 1.00 & 1.00 & 1.00 \\
& 1000 & 1.00 & 1.00 & 1.00 & 1.00 & 1.00 & 1.00 & 1.00 & 1.00 \\
\addlinespace
\hline   
\multirow{5}{*}{ (3,4) } 
& 200  & 0.00 & 0.00 & 0.03 & 0.64 & 0.06 & 0.11 & 0.45 & 0.83 \\
& 400  & 0.00 & 0.67 & 1.00 & 1.00 & 0.12 & 0.82 & 1.00 & 1.00 \\
& 600  & 0.01 & 1.00 & 1.00 & 1.00 & 0.01 & 1.00 & 1.00 & 1.00 \\
& 800  & 0.87 & 1.00 & 1.00 & 1.00 & 0.93 & 1.00 & 1.00 & 1.00 \\
& 1000 & 1.00 & 1.00 & 1.00 & 1.00 & 1.00 & 1.00 & 1.00 & 1.00 \\
\addlinespace
\hline
\multirow{5}{*}{ (5,4) } 
& 200  & 0.00 & 0.00 & 0.00 & 0.00 & 0.00 & 0.00 & 0.03 & 0.19 \\
& 400  & 0.00 & 0.00 & 0.45 & 1.00 & 0.00 & 0.04 & 0.62 & 0.99 \\
& 600  & 0.00 & 0.47 & 1.00 & 1.00 & 0.00 & 0.61 & 0.99 & 1.00 \\
& 800  & 0.00 & 1.00 & 1.00 & 1.00 & 0.02 & 0.99 & 1.00 & 1.00 \\
& 1000 & 0.01 & 1.00 & 1.00 & 1.00 & 0.15 & 1.00 & 1.00 & 1.00 \\
\addlinespace
\bottomrule
\end{tabular}
\end{table}
\subsection{Robustness to threshold choice}
We assess DiGoF's sensitivity to the threshold parameter $t_n = n^{-\varepsilon}$ using fixed network sizes $n\in\{600,1200\}$, true community structure $(K_s, K_r) = (2, 4)$, and sparsity $\rho = 0.5$. The block probability matrix is explicitly defined as:
\[
B = \rho \cdot \begin{bmatrix} 
0.1 & 0.9 & 0.4 & 0.1 \\
0.7 & 0.1 & 0.6 & 0.3 
\end{bmatrix}.
\]
The threshold parameter $\varepsilon$ varies from $0.05$ to $1.00$ in $0.05$ increments, yielding 20 distinct thresholds $t_n = n^{-\varepsilon}$. Accuracy is evaluated over 100 independent replications per $\varepsilon$ value. The numerical results displayed in Figure \ref{fig:ex4} demonstrate that DiGoF maintains high accuracy ($\geq$0.95) across a broad threshold range $\varepsilon \in (0.05, 0.55]$, with accuracy decline observed only for $\varepsilon \geq0.65$. This decline aligns with the convergence behavior of the upper bound of $\hat{T}_n$ shown in Figure \ref{fig:ex1}. In detail, for large $\varepsilon$, the threshold $t_n$ decays excessively fast, causing the algorithm to occasionally fail due to the random fluctuations in $\hat{T}_n$. Though our theoretical analysis in Theorem \ref{thm:consistency} shows that DiGoF can consistently determine the numbers of sender and receiver communities for any $\varepsilon>0$, random fluctuations in $\hat{T}_n$ shown in Figure \ref{fig:ex1} may cause DiGoF to accept underfitted models for large $\varepsilon$. Thus, DiGoF is robust for moderate \(\varepsilon\) (\(<0.5\)) but exhibits sensitivity at larger values. We recommend \( \varepsilon \in (0, 0.5) \) in practice and set the default value of $\varepsilon$ as \(0.2 \) for typical \( n \) in DiGoF.
\begin{figure}[htp!]
\centering
\resizebox{\columnwidth}{!}{
\subfigure{\includegraphics[width=0.5\textwidth]{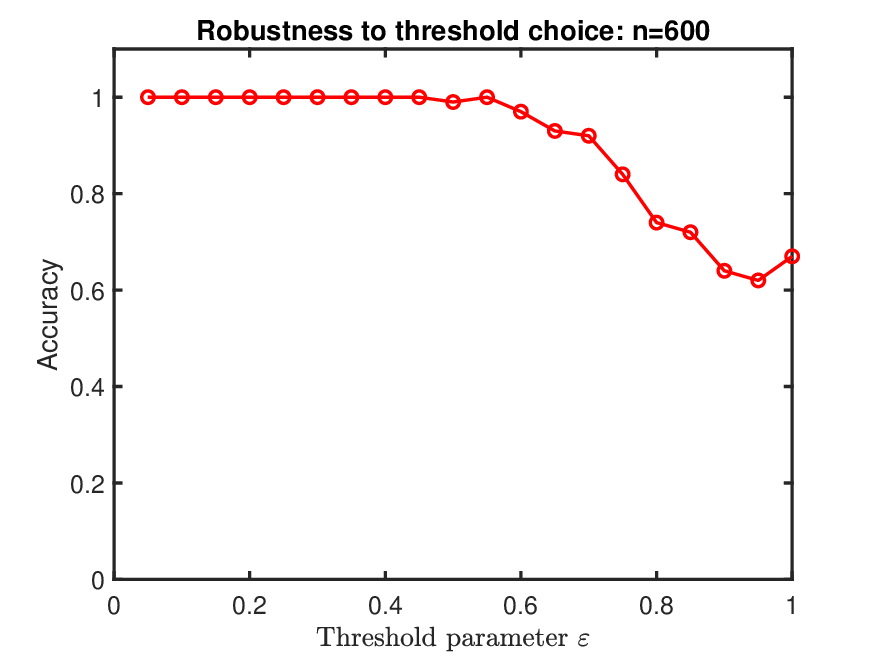}}
\subfigure{\includegraphics[width=0.5\textwidth]{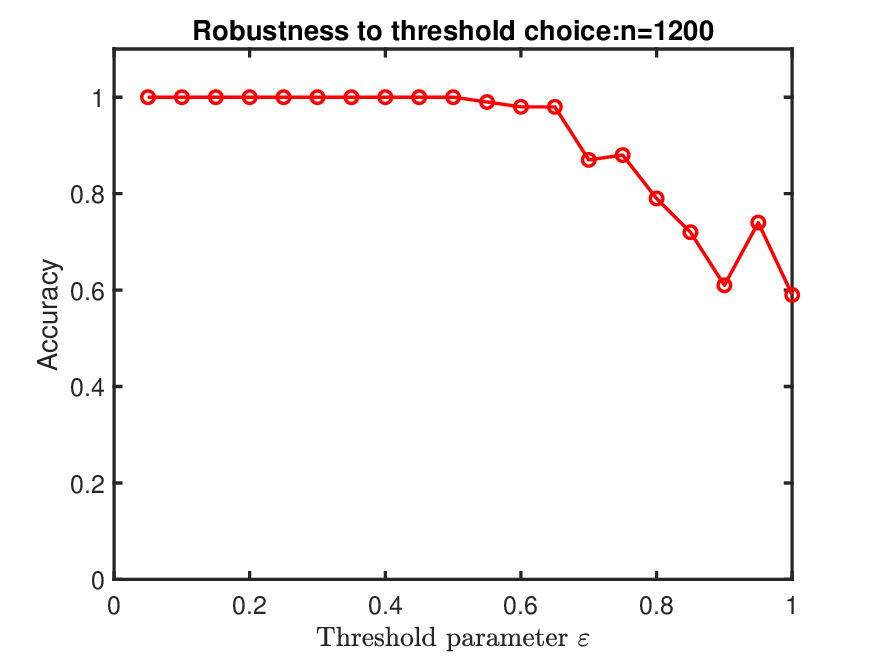}}
}
\caption{Accuracy of DiGoF under varying thresholds.}
\label{fig:ex4} 
\end{figure}

Under the same configuration as in the DiGoF sensitivity analysis, we evaluate the robustness of RDiGoF with respect to the threshold parameter \(\tau\), varying over the set \(\{2, 4, \ldots, 20\}\). The results, displayed in Figure \ref{fig:ex4RDiGoF}, indicate that RDiGoF achieves nearly perfect accuracy in recovering the true community number pair \((K_s, K_r)\) for all \(\tau \geq 4\). This consistency aligns with the theoretical guarantees established in Theorem \ref{thm:RDiGoF-consistency}, which requires \(\tau\) to exceed a certain positive constant \(C\). In contrast, when \(\tau = 2\), the method fails to correctly identify the true community structure. This is expected, as a threshold that is too small may cause the algorithm to mistakenly identify noise fluctuations as significant peaks in the ratio statistic sequence \(\{r_m\}_{m=2}^M\), rather than the true transition point corresponding to the true model. Thus, the poor performance at \(\tau = 2\) underscores the necessity of choosing a sufficiently large \(\tau\) to ensure statistical reliability, as required by the theoretical conditions. These findings confirm that RDiGoF is highly robust to the choice of \(\tau\) within a reasonable range, further supporting its practical utility for asymmetric community numbers estimation in directed networks.
\begin{figure}[htp!]
\centering
\resizebox{\columnwidth}{!}{
\subfigure{\includegraphics[width=0.5\textwidth]{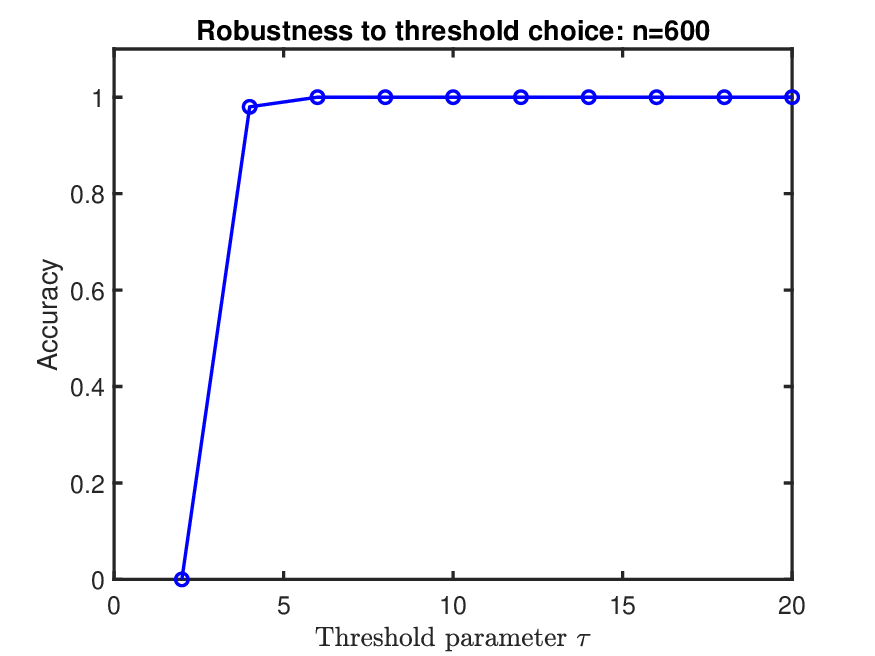}}
\subfigure{\includegraphics[width=0.5\textwidth]{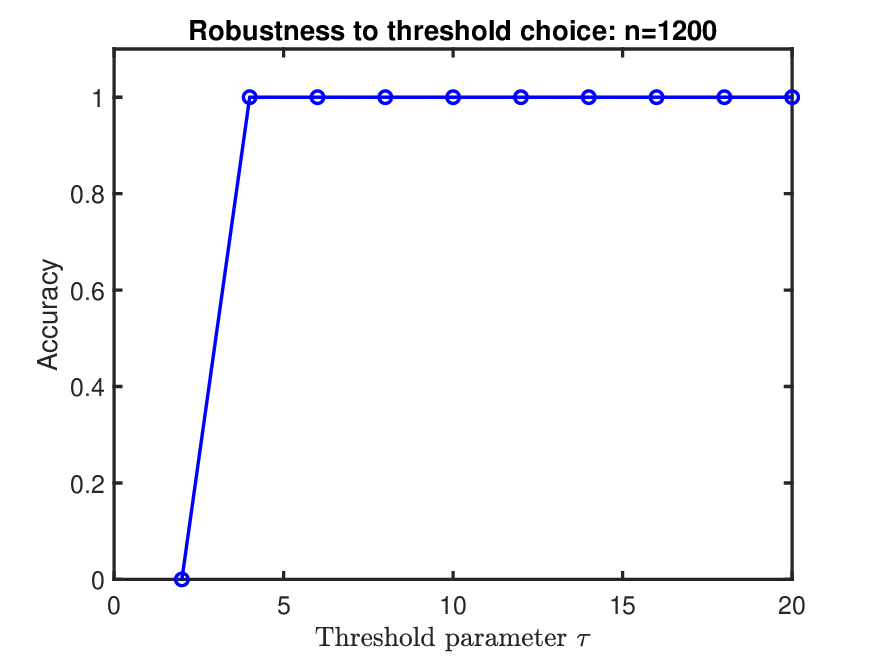}}
}
\caption{Accuracy of RDiGoF under varying thresholds.}
\label{fig:ex4RDiGoF} 
\end{figure}

\begin{rem}
To investigate the behaviors of $|\hat{T}_{n}(m)|$ and $r_{m}$, we present Figure \ref{fig:ex4ratio} which displays the sequence of $|\hat{T}_n(m)|$ values (top panels) and corresponding ratio statistics $r_m$ (bottom panels) for candidate community number pairs under the same simulation settings as Figure \ref{fig:ex4}. Results are shown for $n=600$ (left) and $n=1200$ (right), with true community structure $(K_s,K_r)=(2,4)$. The top panels demonstrate the characteristic transition in $|\hat{T}_n(m)|$: underfitted models ($m < m_*$) exhibit large positive values that diverge as predicted by Theorem \ref{thm:power}, while at the true model $(m = m_* = 12)$, $|\hat{T}_n(m)|$ sharply drops near zero. This transition creates a obvious peak in $\{r_m\}^{M}_{m=2}$ at $m_*$ (bottom panels), where $r_{m_*} = |\hat{T}_n(m_*-1)/\hat{T}_n(m_*)|$ becomes substantially larger than 1 ($r_{m_{*}}>400$ when $n=600$ and $r_{m_{*}}>1500$ when $n=1200$). We see that the first significant peak ($r_m > \tau = 10$) in RDiGoF consistently identifies the true community structure across both network sizes.
\begin{figure}[htp!]
\centering
\resizebox{\columnwidth}{!}{
\subfigure{\includegraphics[width=0.5\textwidth]{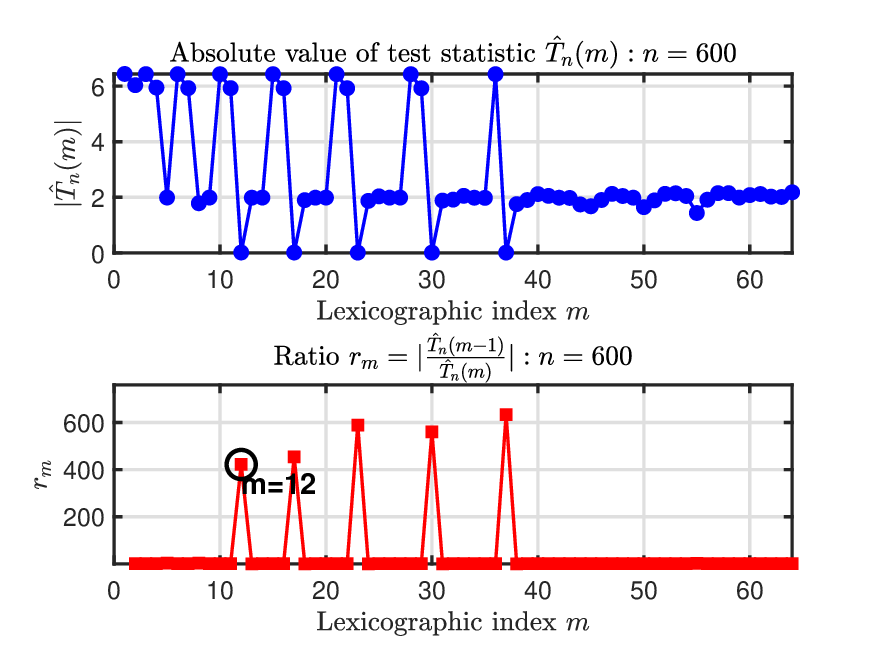}}
\subfigure{\includegraphics[width=0.5\textwidth]{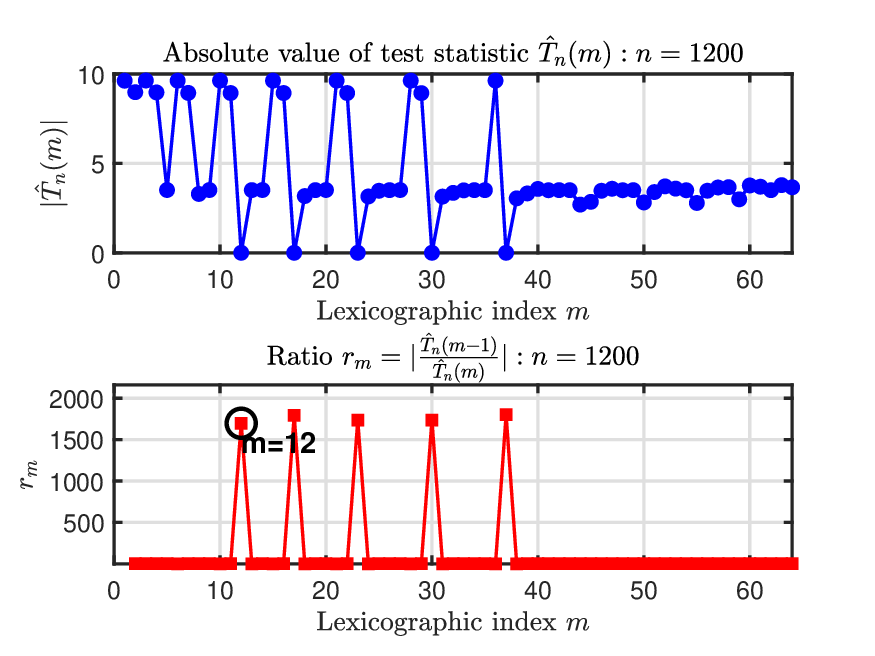}}
}
\caption{Absolute value of test statistic sequence $\hat{T}_n(m)$ (top) and ratio statistic $r_m$ (bottom) for lexicographically ordered candidate pairs $1 \leq m \leq 64$ (i.e., $K_{\mathrm{max}}=8$) at $n=600$ (left) and $n=1200$ (right). Black circles indicate the position of the first peak, which is precisely at $m_{*} = 12$ corresponding to $(K_s,K_r) = (2,4)$ by Table \ref{tab:lex_order}.}
\label{fig:ex4ratio} 
\end{figure}
\end{rem}
\subsection{Real data examples}
In this subsection, we evaluate our method on six real-world directed networks from diverse domains. Basic network statistics are summarized in Table \ref{realdata}, where isolated nodes have been removed and only the largest connected components are retained. All datasets are available at \url{http://konect.cc/networks/}.
\begin{table}[h!]
\small
	\centering
	\caption{Basic information of real-world directed networks.}
	\label{realdata}
\resizebox{\columnwidth}{!}{
	\begin{tabular}{cccccccccccc}
\hline\hline
Dataset&Source&Node meaning&Directed edge meaning&$n$&Number of edges&True $(K_{s}, K_{r})$\\
\hline
Hens&\citep{guhl1953social}&Hen&Dominance&31&465&Unknown\\
Physicians&\citep{coleman1957diffusion}&Physician&Trust&95&458&Unknown\\
FilmTrust&\citep{guo2016novel}&User&Trust&347&1254&Unknown\\
OpenFlights&\citep{opsahl2010node}&Airport&Flight&2868&30404&Unknown\\
Wikipedia links (rue)&\citep{kunegis2013konect}&Wikipedia article in the Rusyn language&Hyperlink&5681&167900&Unknown\\
Wikipedia links (xal)&\citep{kunegis2013konect}&Wikipedia article in the Kalmyk language&Hyperlink&1519&229433&Unknown\\
\hline\hline
\end{tabular}
}
\end{table}
\begin{figure}[htp!]
\centering
\resizebox{\columnwidth}{!}{
\subfigure{\includegraphics[width=0.5\textwidth]{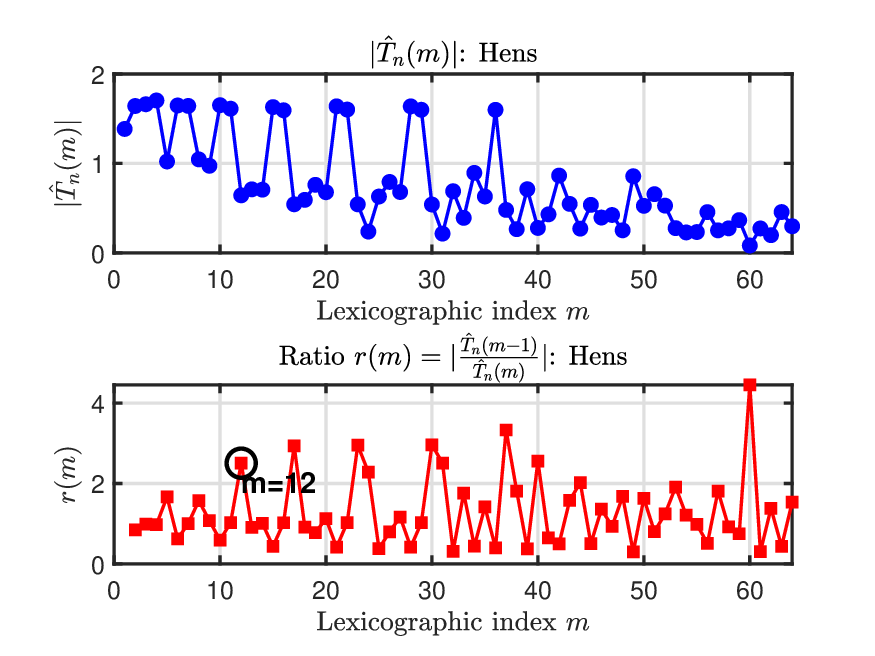}}
\subfigure{\includegraphics[width=0.5\textwidth]{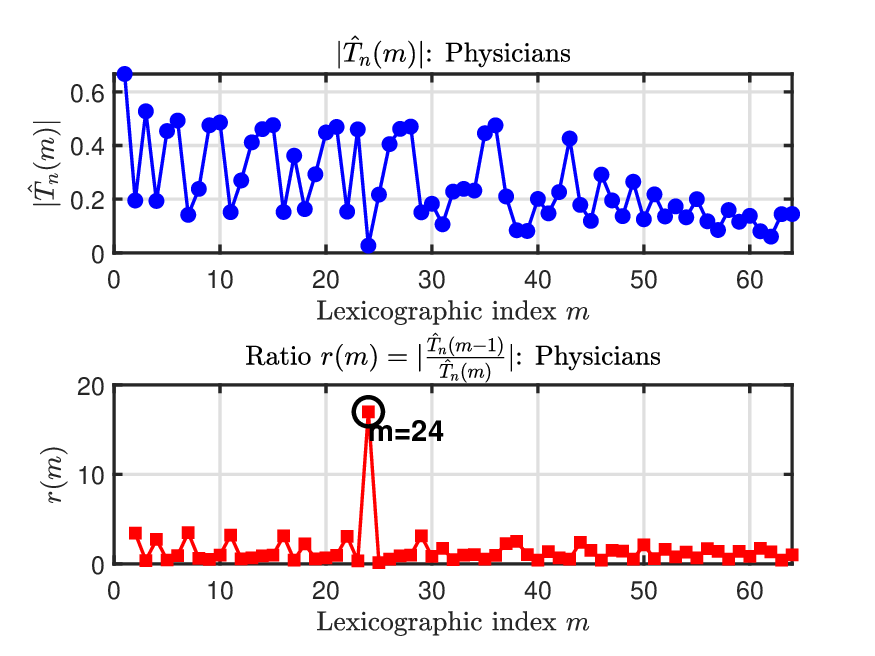}}
\subfigure{\includegraphics[width=0.5\textwidth]{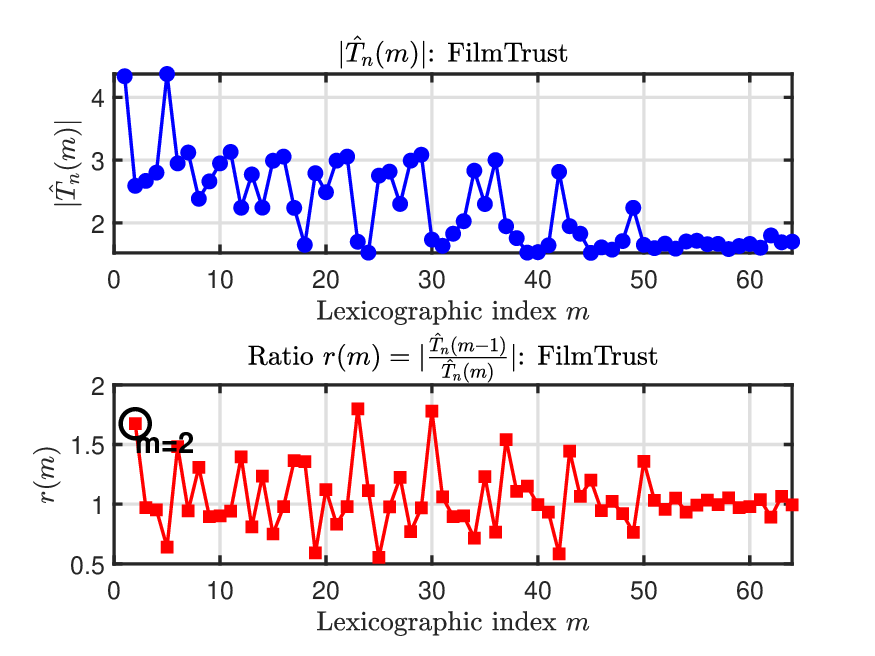}}
}
\resizebox{\columnwidth}{!}{
\subfigure{\includegraphics[width=0.5\textwidth]{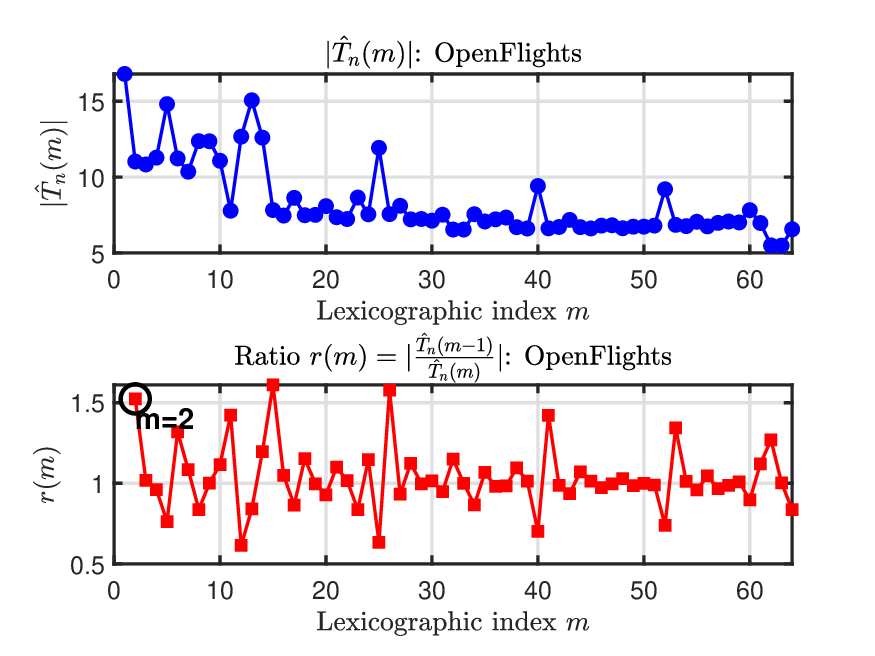}}
\subfigure{\includegraphics[width=0.5\textwidth]{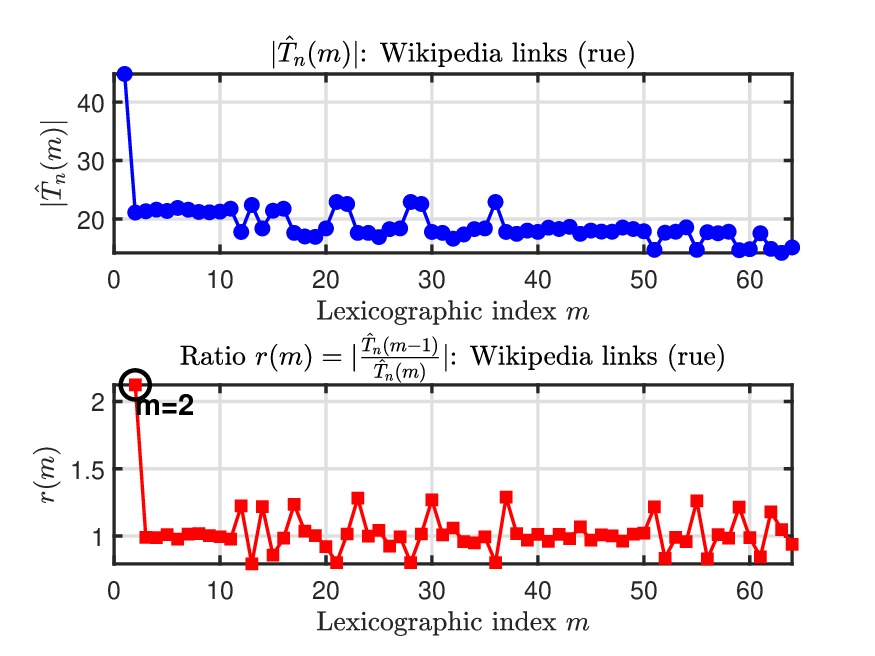}}
\subfigure{\includegraphics[width=0.5\textwidth]{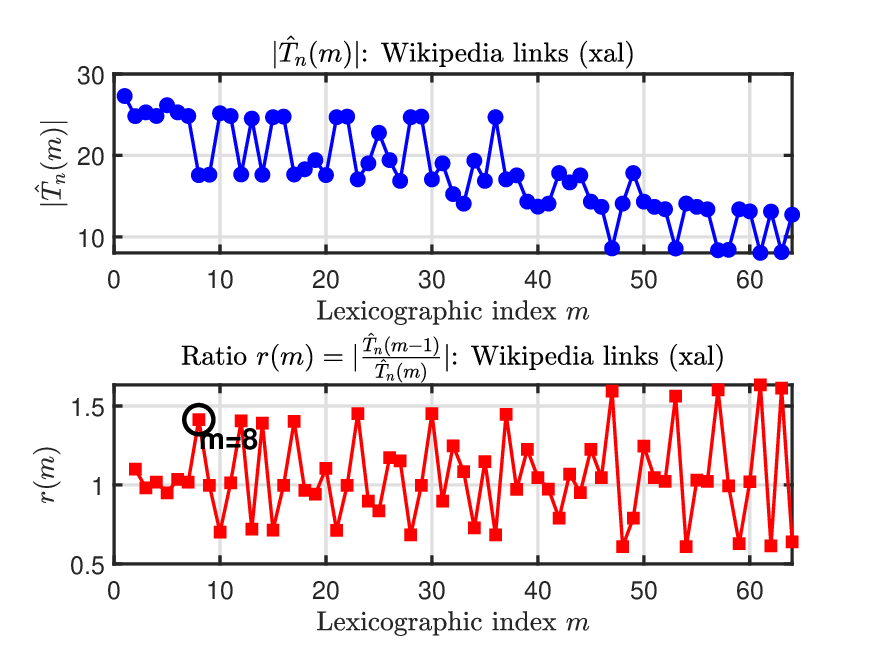}}
}
\caption{$|\hat{T}_n(m)|$ and $r_m$ for lexicographically ordered candidate pairs $1 \leq m \leq 64$ (i.e., $K_{\mathrm{max}}=8$) for real-world directed networks used in this paper. Black circles indicate the first significant peak in $r_m$ identified by RDiGoF.}
\label{fig:realdataratio} 
\end{figure}

Figure \ref{fig:realdataratio} displays $|\hat{T}_n(m)|$ and $r_m$ for all real-world directed networks when $K_{\mathrm{max}}=8$. Unlike simulated directed networks where $r_{m_*}$ exceeds 400 shown in Figure \ref{fig:ex4ratio}, real-world directed networks exhibit substantially smaller ratios: 
\begin{itemize}
   \item $r_{m}<2.5$ for FilmTrust, OpenFlights, and both Wikipedia networks.
   \item $r_m < 5$ for Hens.
    \item $r_m < 15$ for Physicians.
\end{itemize}

Consequently, DiGoF fails to select meaningful community numbers even with $\varepsilon \to 0$, as $|\hat{T}_n(m)|$ never suddenly approaches 0 as shown in the top panels of Figure \ref{fig:realdataratio}. This phenomenon aligns with observations in real-world undirected networks \citep{lei2016goodness}. Through adjusting the threshold parameter $\tau$, RDiGoF overcomes this limitation by identifying the first significant peak in $\{r_m\}^{K^{2}_{\mathrm{max}}}_{m=2}$. By combining the circled peak points in Figure \ref{fig:realdataratio} with the lexicographical order of candidate pairs shown in Table \ref{tab:lex_order}, we can obtain the estimated community numbers returned by RDiGoF. The results are reported in Table \ref{realdataKsKr}. These results demonstrate RDiGoF's superior practicality for real-world directed networks, where DiGoF's absolute threshold criterion is inadequate.
\begin{table}[h!]
\small
	\centering
	\caption{Estimated community numbers and the corresponding lexicographic index returned by RDiGoF for real-world directed networks.}
	\label{realdataKsKr}
\begin{tabular}{c|c|c}
\hline\hline
Directed network & $(\hat{K}_s, \hat{K}_r)$ &Lexicographic index $m_*$ \\ \hline
Hens & (2,4) & 12 \\
Physicians & (3,5) & 24 \\
FilmTrust & (1,2) & 2 \\
OpenFlights & (1,2) & 2 \\
Wiki (rue) & (1,2) & 2 \\
Wiki (xal) & (2,3) & 8 \\
\hline\hline
\end{tabular}
\end{table}
\section{Conclusion}\label{sec:conclusion}
This paper presents a solution to the critical challenge of jointly estimating sender and receiver community numbers in directed networks under the stochastic co-block model (ScBM). Based on the tail bounds of the largest singular value of a normalized residual matrix, we introduce a novel goodness-of-fit test  \(\hat{T}_n = \sigma_1(\hat{R}) - 2\), whose asymptotic behavior differs under null and alternative hypotheses. Rigorous theoretical analysis demonstrates that the upper bound of the proposed test statistic converges to zero with high probability under the null hypothesis with the true sender and receiver community numbers \((K_{s}, K_{r})\), while the test statistic itself diverges to infinity under alternatives where the true model exhibits finer community structures. This sharp dichotomy arises from the asymmetric nature of the residual matrix in directed networks, which precludes the direct application of Tracy-Widom distribution theory established for undirected networks—a gap our work addresses by leveraging singular value tail bounds rather than asymptotic distributions for the estimation of the numbers of sender and receiver communities in directed networks. We then propose a sequential testing algorithm DiGoF to lexicographically seek candidate pairs \((k_s, k_r)\) and stop at the smallest pair where \(\hat{T}_n\) falls below a decaying threshold. We further introduce a ratio-based variant algorithm RDiGoF that identifies the transition point in the test statistic sequence to enhance robustness in practical settings. Both algorithms are proven to enjoy consistency in recovering the true \((K_s, K_r)\) within the ScBM framework. Numerical experiments validate the accuracy of both methods across various settings under ScBM, with RDiGoF demonstrating particular advantages in real-world network applications where absolute threshold selection is challenging.

Future research can focus on several interesting questions. First, while we prove the convergence of the upper bound of the test statistic \(\hat{T}_n\) to zero under \(H_0\), establishing whether its lower bound similarly converges to zero would provide a tighter characterization of the statistic’s concentration. Second, even though asymmetry in directed networks causes departures from classical Tracy-Widom limits, investigating whether \(\hat{T}_n\) or its scaled variants follow a novel universal distribution remains an open challenge with profound theoretical implications. Further promising directions include extending this framework to degree-corrected ScBM to accommodate degree heterogeneity \citep{rohe2016co},  developing adaptations for multi-layer ScBM \citep{su2024spectral} or even overlapping community structures \citep{qing2025discovering}, and establishing theoretical guarantees under sparser regimes where edge probabilities decay with network size. These extensions would further enhance the method's applicability to complex real-world directed networks.

\section*{CRediT authorship contribution statement}
\textbf{Huan Qing:} Conceptualization, Data curation, Formal analysis, Funding acquisition, Methodology, Software,
Visualization, Writing – original draft, Writing - review $\&$ editing.

\section*{Declaration of competing interest}
The authors declare no competing interests.

\section*{Data availability}
Data will be made available on request.

\section*{Acknowledgements}
Huan Qing was supported by the Scientific Research Foundation of Chongqing University of Technology (Grant No. 2024ZDR003) and the Science and Technology Research Program of Chongqing Municipal Education Commission (Grant No. KJQN202401168).

\appendix
\section{Proofs for DiGoF}\label{sec:proofs}
Throughout the appendix, we use the notation introduced in the main text. $\|\cdot\|$ denotes the spectral norm, $\|\cdot\|_F$ the Frobenius norm, 
and $\sigma_1(\cdot)$ the largest singular value. For two sequences $a_n,b_n$ we write $a_n\ll b_n$ when $a_n=o(b_n)$. The symbols $C$ and $c$ denote generic positive constants whose values may change from line to line.
\begin{proof}[Proof of Lemma \ref{ideal0}]
By the statements after Corollary 3.12 in \citep{Afonso2016}, we have the following results for tail bounds of rectangular matrices.
\begin{lem}\label{Extension}
Let \(X\) be any \(n \times m\) random matrix whose entries \(X_{ij}\) are independent random variables. Then, for any \(0 < \eta \leq 1/2\), there exists a universal constant \(c_{\eta} > 0\) such that for every \(t \geq 0\),  
\[
\mathbb{P}\left[ \|X\| \geq (1+\eta)\left(\varsigma_1 + \varsigma_2\right) + t \right] \leq (n \wedge m) \exp\left(-\frac{t^2}{c_{\eta} \varsigma_*^2}\right),
\]  
where  
\[
\varsigma_1 = \max_{i} \sqrt{\sum_{j} \mathbb{E}[X_{ij}^2]}, \quad \varsigma_2 = \max_{j} \sqrt{\sum_{i} \mathbb{E}[X_{ij}^2]}, \quad \varsigma_* = \max_{ij} \|X_{ij}\|_{\infty}.
\]
\end{lem}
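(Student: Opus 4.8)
The plan is to deduce the rectangular tail bound from the symmetric estimate in Corollary 3.12 of \citep{Afonso2016} via the Hermitian dilation, and then to read off the variance parameters of the dilated matrix. First I would form the $(n+m)\times(n+m)$ symmetric matrix $\tilde{X}=\begin{pmatrix}0 & X\\ X^\top & 0\end{pmatrix}$. Its nonzero above-diagonal entries are exactly the $X_{ij}$, so $\tilde{X}$ is a symmetric matrix with independent (centered) entries, its spectrum is $\{\pm\sigma_k(X)\}$ augmented by zeros, and hence $\|\tilde{X}\|=\sigma_1(X)=\|X\|$. This identity converts the event $\{\|X\|\ge\cdots\}$ into the corresponding event for a symmetric matrix, to which the symmetric bound applies directly.

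Next I would identify the three scalars governing the symmetric estimate. The entrywise sup-norm is preserved, $\max_{kl}\|\tilde{X}_{kl}\|_\infty=\varsigma_*$. For the row-variance parameter $\tilde{\sigma}:=\max_k\sqrt{\sum_l\mathbb{E}[\tilde{X}_{kl}^2]}$, a row indexed by $k\le n$ accumulates $\sum_j\mathbb{E}[X_{kj}^2]\le\varsigma_1^2$, whereas a row indexed by $k=n+l$ accumulates $\sum_i\mathbb{E}[X_{il}^2]\le\varsigma_2^2$; thus $\tilde{\sigma}=\max(\varsigma_1,\varsigma_2)$. Feeding these into the symmetric tail bound (whose deterministic part scales like $2\tilde{\sigma}$) yields an inequality of the form $\mathbb{P}[\|X\|\ge(1+\eta)\,2\max(\varsigma_1,\varsigma_2)+t]\le(n+m)\exp(-t^2/(c_\eta\varsigma_*^2))$.

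The main obstacle is sharpening this crude dilation output to the exact statement, namely replacing the deterministic term $2\max(\varsigma_1,\varsigma_2)$ by the smaller quantity $\varsigma_1+\varsigma_2$ (note $\varsigma_1+\varsigma_2\le 2\max(\varsigma_1,\varsigma_2)$, so this is a genuine strengthening, not a relaxation) and the prefactor $n+m$ by $n\wedge m$. Both refinements rest on the bipartite block structure of $\tilde{X}$, whose diagonal blocks vanish so that the two index sets carry distinct row-variance profiles; recovering them requires the Gaussian-comparison and interpolation machinery behind \citep{Afonso2016} rather than a generic symmetric estimate, which is precisely the content of the statements recorded after Corollary 3.12 there, and which I would invoke directly. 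For the downstream application in Lemma \ref{ideal0}, however, the distinction is immaterial: for the residual matrix $R$ one has $\mathbb{E}[R(i,j)^2]=(n-1)^{-1}$ off the diagonal, so $\varsigma_1=\varsigma_2=1$ and $\varsigma_1+\varsigma_2=2\max(\varsigma_1,\varsigma_2)=2$, while $\varsigma_*=\mathcal{O}(1/\sqrt{n})$ under Assumption \ref{assump:a1}; even the crude dilation bound then delivers $\|R\|\le 2(1+\eta)+\mathcal{O}_P(\sqrt{\log n/n})$, which is all Lemma \ref{ideal0} requires.
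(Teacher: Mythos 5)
Your proposal is correct and ends exactly where the paper's own proof begins and ends: the paper establishes Lemma \ref{Extension} purely by citing the statements following Corollary 3.12 of \citep{Afonso2016}, which is precisely what you invoke for the sharp form with deterministic term $(1+\eta)(\varsigma_1+\varsigma_2)$ and prefactor $n\wedge m$. Your Hermitian-dilation reduction is accurate as far as it goes (it yields the weaker bound $2(1+\eta)\max(\varsigma_1,\varsigma_2)$ with prefactor $n+m$), and your observation that this crude version already suffices for Lemma \ref{ideal0}, where $\varsigma_1=\varsigma_2=1$ and $\varsigma_*=\mathcal{O}(1/\sqrt{n})$, matches how the lemma is actually used in the paper.
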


For $R$, under $H_0$, $A(i,j)$ are independent Bernoulli($\Omega(i,j)$) variables. Thus, we have
\begin{itemize}
\item $\mathbb{E}[R(i,j)] = 0$ for $i \neq j$.
\item $\operatorname{Var}(R(i,j)) = \dfrac{\operatorname{Var}(A(i,j))}{(n-1)\Omega(i,j)(1-\Omega(i,j))} = \dfrac{\Omega(i,j)(1-\Omega(i,j))}{(n-1)\Omega(i,j)(1-\Omega(i,j))} = \dfrac{1}{n-1}$.
\end{itemize}

Thus, let $X$ in Lemma \ref{Extension} be $R$, we have $\varsigma_{1}=1, \varsigma_{2}=1$, and $\varsigma_{*}=\frac{1}{\sqrt{\delta(1-\delta)(n-1)}}$. By Lemma \ref{Extension}, we have
\begin{align*}
\mathbb{P}(\|R\|\geq2(1+\eta)+t)\leq n\exp\left(-\frac{t^{2}\delta(1-\delta)(n-1)}{c_{\eta}}\right).
\end{align*}

Let $t=\sqrt{\frac{\gamma \log (n) c_{\eta}}{\delta(1-\delta)(n-1)}}$ for $\gamma>2$, we get
\begin{align*}
\mathbb{P}(\|R\|\geq2(1+\eta)+t)\leq n\exp\left(\frac{1}{n^{\gamma-1}}\right),
\end{align*}
which means that with probability at least $1-o(\frac{1}{n^{\gamma-1}})$, we have
\begin{align}\label{roughbound1}
\sigma_{1}(R)-2=\|R\|-2\leq2\eta+\sqrt{\frac{\gamma \log (n) c_{\eta}}{\delta(1-\delta)(n-1)}}.
\end{align}

Since $\eta$ is any value in $(0,1/2]$, $\delta, C_\eta$, and  $\gamma$ are positive constants, Equation (\ref{roughbound1}) means that when Assumption \ref{assump:a1} holds, $\mathbb{P}(T_n<\epsilon)\to 1$ for any $\epsilon>0$ as $n\rightarrow\infty$.
\end{proof}
\begin{proof}[Proof of Theorem \ref{thm:null}]
First, we bridge the gap between $R$ and $\hat{R}$. Define $\Delta = \hat{R} - R$. Decompose $\Delta_{ij} = E_{1}(i,j) + E_{2}(i,j)$ where:
\begin{align*}
E_{1}(i,j) &= (A(i,j)- \Omega(i,j)) \left( \frac{1}{\sqrt{\hat{\Omega}(i,j)(1-\hat{\Omega}(i,j))}} - \frac{1}{\sqrt{\Omega(i,j)(1-\Omega(i,j))}} \right) \frac{1}{\sqrt{n-1}} \\
E_{2}(i,j) &= \frac{\Omega(i,j) - \hat{\Omega}(i,j)}{\sqrt{(n-1)\hat{\Omega}(i,j)(1-\hat{\Omega}(i,j))}}.
\end{align*}

By mean value theorem and Assumption \ref{assump:a1}, for $\xi_{ij}$ between $\Omega(i,j)$ and $\hat{\Omega}(i,j)$:
\[
\left| \frac{1}{\sqrt{\hat{\Omega}(i,j)(1-\hat{\Omega}(i,j))}} - \frac{1}{\sqrt{\Omega(i,j)(1-\Omega(i,j))}} \right| = \frac{1}{2} |2\xi_{ij} - 1| [\xi_{ij}(1-\xi_{ij})]^{-3/2} |\hat{\Omega}(i,j) - \Omega(i,j)|.
\]

Since $\hat{\Omega}(i,j)\in [\delta/2, 1-\delta/2]$ w.h.p., by continuity, $|2\xi_{ij} - 1| [\xi_{ij}(1-\xi_{ij})]^{-3/2} \leq C_\delta$ for some $C_\delta > 0$. Thus:
\[
|E_{1}(i,j)| \leq C_\delta |A(i,j) - \Omega(i,j)| \cdot |\hat{\Omega}(i,j) - \Omega(i,j)| \cdot n^{-1/2} \leq C_\delta |\hat{\Omega}(i,j) - \Omega(i,j)| n^{-1/2}
\]
since $|A(i,j) - \Omega(i,j)| \leq 1$. Similarly:
\[
|E_{2}(i,j)| \leq \frac{|\Omega(i,j) - \hat{\Omega}(i,j)|}{\sqrt{(n-1) \cdot (\delta/2)(1-\delta/2)}} = C'_\delta |\hat{\Omega}(i,j) - \Omega(i,j)| n^{-1/2}.
\]

Thus, we have
\[
|\Delta(i,j)| \leq (C_\delta + C'_\delta) |\hat{\Omega}(i,j) - \Omega(i,j)| n^{-1/2}
\]
and
\[
\|\Delta\| \leq \|\Delta\|_F \leq C''_\delta n^{-1/2} \|\hat{\Omega} - \Omega\|_F.
\]

By Lemma \ref{lem:frobenius-error}, we have
\[
\|\hat{\Omega}-\Omega\|_F=O_P\left(K_{\mathrm{max}}\sqrt{\log n}\right).
\]

By Assumption \ref{assump:a3}, we have
\[
\|\Delta\| = O_P\left( n^{-1/2} K_{\mathrm{max}}\sqrt{\log n}\right) =o_P(1).
\]

By Weyl's inequality, we have
\[
|\sigma_1(\hat{R}) - \sigma_1(R)| \leq \|\Delta\| = o_P(1).
\]

Thus, we get
\[
\sigma_1(\hat{R}) = \sigma_1(R) + o_P(1).
\]

By Lemma \ref{ideal0}, we know that $\mathbb{P}(T_n<\epsilon)\to 1$ for any $\epsilon>0$ as $n \to \infty$. Therefore, we have
\[
\hat{T}_n = \sigma_1(\hat{R}) - 2 = \sigma_1(R) - 2 + o_P(1) =T_n+o_{P}(1),
\]
which gives
\[
\mathbb{P}(\hat{T}_n<\epsilon+o_{P}(1))\to 1\Leftrightarrow\mathbb{P}(\hat{T}_n<\epsilon)\to 1.
\]
\end{proof}

\begin{proof}[Proof of Theorem \ref{thm:power}]
Assume $K_s > K_{s0}$ without loss of generality. By the definition of the community separation parameter $\delta$, there exist sender communities $k_s \neq k_r$ and receiver community $l^*$ such that
\[
|B(k_s, l^*) - B(k_r, l^*)| \geq \delta_n.
\]

Define node sets:
\begin{align*}
\mathcal{S}_1 &= \{i: g^s(i) = k_s\}, \quad s_1 = |\mathcal{S}_1| \geq c_0 n / K_s \\
\mathcal{S}_2 &= \{i: g^s(i) = k_r\}, \quad s_2 = |\mathcal{S}_2| \geq c_0 n / K_s \\
\mathcal{T} &= \{j: g^r(j) = l^*\}, \quad t = |\mathcal{T}| \geq c_0 n / K_r.
\end{align*}

Under $H_1$ with $K_{s0} < K_s$, by pigeonhole principle, there exists an estimated sender community $s_0$ such that $\mathcal{S}_1 \cup \mathcal{S}_2 \subseteq \{i: \hat{g}^s(i) = s_0\}$. Meanwhile, since all nodes in \(\mathcal{T}\) share the same true receiver label $l^{*}$ and $K_{r0}\leq K_r$ (possibly $K_{r0}<K_r$), the estimation algorithm forces all nodes in \(\mathcal{T}\) (and potentially other nodes) into one estimated receiver community $\hat{t}$ by pigeonhole principle. Note that when $K_{r0}=K_r$, $\hat{g}^r$ is a consistent receiver community estimator, and the above argument for nodes in \(\mathcal{T}\)  still holds.

\textbf{Step 1: Lower bound for expectation difference.}
Consider the  structured residual submatrix $M = (\Omega - \hat{\Omega})(\mathcal{S}_1 \cup \mathcal{S}_2, \mathcal{T})$. For $i \in \mathcal{S}_1$, $\Omega_{ij} = B(k_s, l^*)$, $\hat{\Omega}(i,j) = \hat{B}(s_0, \hat{t})$; for $i \in \mathcal{S}_2$, $\Omega(i,j) = B(k_r, l^*)$, $\hat{\Omega}(i,j) = \hat{B}(s_0, \hat{t})$. Thus, we have
\[
M = \begin{bmatrix} 
(B(k_s, l^*) - \hat{B}(s_0, \hat{t})) \mathbf{1}_{s_1}^\top \\ 
(B(k_r, l^*) - \hat{B}(s_0, \hat{t})) \mathbf{1}_{s_2}^\top 
\end{bmatrix} \mathbf{1}_t^\top,
\]
where $\mathbf{1}_m^\top$ is a row vector of nodes of length $m$ for any positive integer $m$. Since $M$ is a rank-1 matrix, its spectral norm satisfies
\[
\|M\| = \| \mathbf{v} \mathbf{1}_t^\top \| = \|\mathbf{v}\|_2 \cdot \|\mathbf{1}_t\|_2 = \|\mathbf{v}\|_2 \sqrt{t}
\]
where $\mathbf{v} = \begin{pmatrix} (B(k_s, l^*) - \hat{B}(s_0, \hat{t})) \mathbf{1}_{s_1} \\ (B(k_r, l^*) - \hat{B}(s_0, \hat{t})) \mathbf{1}_{s_2} \end{pmatrix}$ is a column vector of length \(s_1+s_2\). Now, we have
\[
\|\mathbf{v}\|_2^2 = (B(k_s, l^*) - \hat{B}(s_0, \hat{t}))^2 s_1 + (B(k_r, l^*) - \hat{B}(s_0, \hat{t}))^2 s_2 \geq \min(s_1, s_2) \left[ (B(k_s, l^*) - \hat{B}(s_0, \hat{t}))^2 + (B(k_r, l^*) - \hat{B}(s_0, \hat{t}))^2 \right].
\]

Using the inequality $a^2 + b^2 \geq (a - b)^2$ for $a\geq0$ and $b\geq0$ gives
\[
\|\mathbf{v}\|_2^2 \geq \min(s_1, s_2)(B(k_s, l^*) - B(k_r, l^*))^2.
\]

Thus, we get
\[
\|M\| \geq \sqrt{ \min(s_1, s_2) (B(k_s, l^*) - B(k_r, l^*))^2 } \cdot \sqrt{t} = \sqrt{\min(s_1, s_2) t} |B(k_s, l^*) - B(k_r, l^*)|.
\]

Since $\min(s_1, s_2) \geq c_0 n / K_s$ and $t \geq c_0 n / K_r$ by Assumption \ref{assump:a2}, we have
\[
\|M\| \geq \sqrt{ \frac{c_0 n}{K_s} \cdot \frac{c_0 n}{K_r} } \delta_n = c_0 \delta_n \sqrt{ \frac{n^2}{K_s K_r} }.
\]

\textbf{Step 2: Control random fluctuation.}
Let $W = (A - \Omega)(\mathcal{S}_1 \cup \mathcal{S}_2, \mathcal{T})$. We have
\begin{align*}
\sigma_{\text{row}}^2 &= \left\| \sum_{i,j} \mathbb{E} \left[ W^{(ij)} (W^{(ij)})^\top \right] \right\| = \max_i \sum_j \mathrm{Var}(W(i,j)) \leq \frac{t}{4} \leq \frac{n}{4}, \\
\sigma_{\text{col}}^2 &= \left\| \sum_{i,j} \mathbb{E} \left[ (W^{(ij)})^\top W^{(ij)} \right] \right\| = \max_j \sum_i \mathrm{Var}(W(i,j)) \leq \frac{s_1 + s_2}{4} \leq \frac{n}{2},
\end{align*}
where $W^{(ij)}$ is the matrix with single entry $W_{ij}$. Given that $\max(\sigma_{\text{row}}^2, \sigma_{\text{col}}^2) \leq n/2$, by the rectangular case of matrix Bernstein inequality given in Theorem 1.6 of \citep{tropp2012user}, we have
\[
\mathbb{P}\left( \|W\| \geq 3\sqrt{n \log n} \right) \leq (s_1 + s_2 + t) \exp\left( \frac{-(9n \log n)/2}{n/2 + \sqrt{n \log n}} \right) \leq 3n \cdot n^{-9/2}=3n^{-7/2} \to 0,
\]
which gives $\|W\| = O_P(\sqrt{n \log n})$.

\textbf{Step 3: Lower bound for $\sigma_1(\hat{R})$.}
Since $\hat{\Omega}(i,j) \in [\delta/2, 1-\delta/2]$ w.h.p. by Lemma \ref{lem:bound-Omega}, we have
\[
\sigma_1(\hat{R}) \geq \frac{ \| (A - \hat{\Omega})(\mathcal{S}_1 \cup \mathcal{S}_2, \mathcal{T}) \| }{ \max_{i,j} \sqrt{(n-1) \hat{\Omega}(i,j)(1 - \hat{\Omega}(i,j))} } \geq \frac{ \| (A - \hat{\Omega})(\mathcal{S}_1 \cup \mathcal{S}_2, \mathcal{T}) \| }{ \sqrt{n} \cdot 1/2 }
\]
as $\hat{\Omega}(i,j)(1-\hat{\Omega}(i,j)) \leq 1/4$. Now:
\[
\| (A - \hat{\Omega})(\mathcal{S}_1 \cup \mathcal{S}_2, \mathcal{T})\| \geq \|M\| - \|W\| \geq c_0 \delta_n \sqrt{\frac{n^2}{K_s K_r}} - O_P(\sqrt{n \log n}).
\]

Thus, we have
\[
\sigma_1(\hat{R}) \geq \frac{2}{\sqrt{n}} \left( c_0 \delta_n \sqrt{\frac{n^2}{K_s K_r}} - O_P(\sqrt{n \log n}) \right) = 2 c_0 \delta_n \sqrt{\frac{n}{K_s K_r}} - O_P\left(\sqrt{\frac{\log n}{n}}\right).
\]

By Assumption \ref{assump:a3}, we have
\[
\hat{T}_n=\sigma_1(\hat{R}) - 2 \geq 2c_0 \delta_n \sqrt{\frac{n}{K_s K_r}} - 2-O_P\left(\sqrt{\frac{\log n}{n}}\right)\geq 2c_0 \delta_n \sqrt{\frac{n}{K^2_{\mathrm{max}}}}-2 -o_P(1) \to \infty,
\]
with probability $1-O(\frac{1}{n^{7/2}})$.
\end{proof}

\begin{proof}[Proof of Theorem \ref{thm:consistency}]
Let $K^* = K_s + K_r$. Define events:
\begin{align*}
\mathcal{A} &= \{ (\hat{K}_s, \hat{K}_r) \neq (K_s, K_r) \} \\
\mathcal{B} &= \{ \exists (k_s, k_r) <_{\text{lex}} (K_s, K_r) : \hat{T}_n(k_s, k_r) < t_n \} \\
\mathcal{C} &= \{ \hat{T}_n(K_s, K_r) \geq t_n \}.
\end{align*}

Then $\mathcal{A} \subseteq \mathcal{B} \cup \mathcal{C}$, so $\mathbb{P}(\mathcal{A}) \leq \mathbb{P}(\mathcal{B}) + \mathbb{P}(\mathcal{C})$.

\textbf{Part 1: $\mathbb{P}(\mathcal{C}) \to 0$.}
Under $H_0: (k_s, k_r) = (K_s, K_r)$, by Theorem \ref{thm:null}:
\[
\hat{T}_n(K_s, K_r) \xrightarrow{P} 0.
\]

Since $t_n = C n^{-\varepsilon} \to 0$, we get
\[
\mathbb{P}(\hat{T}_n(K_s, K_r) \geq t_n) \to 0.
\]

\textbf{Part 2: $\mathbb{P}(\mathcal{B}) \to 0$.}
The underfitting set is:
\[
\mathcal{S} = \{ (k_s, k_r) \in \mathbb{Z}_+^2 : k_s + k_r \leq K^*, \, (k_s, k_r) \neq (K_s, K_r) \}
\]
with $|\mathcal{S}| \leq (K^*)^2\leq 4K^2_{\mathrm{max}}$. For each $(k_s, k_r) \in \mathcal{S}$, by Theorem \ref{thm:power}:
\[
\hat{T}_n(k_s, k_r) \geq c \sqrt{\frac{n\delta^2_n}{K^2_\mathrm{max}}}.
\]

Since $t_n = n^{-\varepsilon}$ for any $\varepsilon >0$, by Assumption \ref{assump:a3}, we have
\[
\frac{\sqrt{\frac{n\delta^2_n}{K^2_\mathrm{max}}}}{t_n} = \sqrt{\frac{n^{1+2\varepsilon}\delta^2_{n}}{K^2_\mathrm{max}}}\to \infty.
\]

So $\mathbb{P}(\hat{T}_n(k_s, k_r) < t_n) \to 0$. By Assumption \ref{assump:a3} and the proof of Theorem \ref{thm:power}, using union bound gives
\[
\mathbb{P}(\mathcal{B}) \leq |\mathcal{S}| \max_{(k_s,k_r) \in \mathcal{S}} \mathbb{P}(\hat{T}_n(k_s, k_r) < t_n)=O(K^{2}_{\mathrm{max}}\frac{1}{n^{7/2}})= o(1).
\]

\textbf{Part 3: Conclusion.}
Since $\mathbb{P}(\mathcal{C}) \to 0$ and $\mathbb{P}(\mathcal{B}) \to 0$:
\[
\mathbb{P}(\mathcal{A}) \leq \mathbb{P}(\mathcal{B}) + \mathbb{P}(\mathcal{C}) \to 0
\]
which implies:
\[
\lim_{n \to \infty} \mathbb{P}\left( (\hat{K}_s, \hat{K}_r) = (K_s, K_r) \right) = 1.
\]
\end{proof}

\subsection{Useful lemmas}\label{app:frobenius-error}
\begin{lem}\label{lem:frobenius-error}
Under the conditions of Theorem \ref{thm:null}, the estimated expected adjacency matrix \(\hat{\Omega}\) satisfies
\[\|\hat{\Omega} - \Omega\|_F =O_P\left(K_{\mathrm{max}}\sqrt{\log n}\right),
\]
where $m_s$ (and $m_r$) is the number of misclustered nodes in the sending (receiving) side.
\end{lem}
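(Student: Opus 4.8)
The plan is to reduce the problem to the estimation error of the block probability matrix $\hat{B}$ on the high-probability event where the community labels are perfectly recovered. By the consistency assumption invoked in Theorem \ref{thm:null}, the event $\mathcal{E} = \{\hat{g}^s = g^s\} \cap \{\hat{g}^r = g^r\}$ satisfies $\mathbb{P}(\mathcal{E}) \to 1$; in particular the misclustering counts vanish ($m_s = m_r = 0$) on $\mathcal{E}$, so it suffices to bound $\|\hat{\Omega} - \Omega\|_F$ there. On $\mathcal{E}$ the estimated and true labels induce the same block partition, so for $i \neq j$ one has $\hat{\Omega}(i,j) - \Omega(i,j) = \hat{B}(g^s(i), g^r(j)) - B(g^s(i), g^r(j))$, while the diagonal contributes nothing.

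The second step is a blockwise decomposition of the Frobenius norm. Writing $n^s_k = |\{i: g^s(i)=k\}|$ and $n^r_l = |\{j: g^r(j)=l\}|$, on $\mathcal{E}$ we have
\[
\|\hat{\Omega} - \Omega\|_F^2 = \sum_{k=1}^{K_s}\sum_{l=1}^{K_r} \big(\hat{B}(k,l) - B(k,l)\big)^2 \,\big|\{(i,j): g^s(i)=k,\ g^r(j)=l,\ i \neq j\}\big| \leq \sum_{k,l} n^s_k n^r_l \big(\hat{B}(k,l) - B(k,l)\big)^2.
\]
The core estimate is then the entrywise deviation $|\hat{B}(k,l) - B(k,l)|$. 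On $\mathcal{E}$, $\hat{B}(k,l)$ is the average of the entries $A(i,j)$ over block $(k,l)$; its off-diagonal entries are independent $\mathrm{Bernoulli}(B(k,l))$ variables, and the at most $\min(n^s_k,n^r_l)$ diagonal entries forced to zero introduce only an $O(K_{\max}/n)$ bias by Assumption \ref{assump:a2}, which is negligible next to the fluctuation. Applying a Bernstein bound to the centered sum, using $B(k,l)(1-B(k,l)) \leq 1/4$, and taking a union bound over all $K_s K_r \leq K_{\max}^2$ blocks yields, with probability $1 - o(1)$,
\[
\max_{k,l} \sqrt{n^s_k n^r_l}\,\big|\hat{B}(k,l) - B(k,l)\big| \leq C\sqrt{\log n},
\]
where the $\log n$ factor absorbs the union bound over the polynomially many blocks (valid since $K_{\max}^2 \ll n/\log n$ by Assumption \ref{assump:a3}).

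Substituting this uniform bound into the blockwise decomposition, the weights $n^s_k n^r_l$ cancel against the $1/(n^s_k n^r_l)$ in the variance term, giving
\[
\|\hat{\Omega} - \Omega\|_F^2 \leq \sum_{k,l} n^s_k n^r_l \cdot \frac{C^2 \log n}{n^s_k n^r_l} = C^2 (K_s K_r)\log n \leq C^2 K_{\max}^2 \log n,
\]
hence $\|\hat{\Omega} - \Omega\|_F = O_P\big(K_{\max}\sqrt{\log n}\big)$ on $\mathcal{E}$; since $\mathbb{P}(\mathcal{E}) \to 1$, the bound holds unconditionally in probability. The main obstacle is the \emph{uniform} concentration of $\hat{B}$ over a growing collection of $K_{\max}^2$ blocks — this is precisely where the $\sqrt{\log n}$ enters and where Assumption \ref{assump:a3} is needed to keep the union-bound failure probability $o(1)$; verifying that the diagonal-exclusion bias is of strictly lower order than the sampling fluctuation is a secondary technical point that must also be checked.
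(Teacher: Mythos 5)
Your proof is correct, and its core ingredient is the same as the paper's: a Bernstein bound for the block averages $\hat{B}(k,l)$ at scale $\sqrt{\log n/(n^s_k n^r_l)}$, a union bound over the at most $K_{\max}^2$ blocks, and the observation that the block sizes cancel in the Frobenius sum, leaving $K_sK_r\log n$. The organization, however, differs genuinely. The paper does \emph{not} condition on exact recovery up front: it decomposes $\hat{\Omega}(i,j)-\Omega(i,j)$ into a sampling-fluctuation term $\mathbb{V}_{ij}=\hat{B}(k,l)-\bar{B}(k,l)$ (with $\bar{B}$ the conditional mean given the estimated labels) and a misclustering-bias term $\mathbb{B}_{ij}=\bar{B}(k,l)-B(g^s(i),g^r(j))$, bounds $\mathbb{V}=O_P(K_{\max}^2\log n)$, bounds $\mathbb{B}=O_P\bigl(K_{\max}^2(m_s+m_r)^2+n(m_s+m_r)\bigr)$ in terms of the misclustering counts, controls the cross term by Cauchy--Schwarz, and only at the very end invokes consistency to set $m_s=m_r=0$ with high probability. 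Your route — restrict to the event $\mathcal{E}=\{\hat{g}^s=g^s,\hat{g}^r=g^r\}$ first, so that the bias and cross terms never appear — is shorter and cleaner, and it also makes the conditioning rigorous in a pleasant way: on $\mathcal{E}$ the estimator coincides with the oracle block average computed from true labels, whose entries are genuinely independent Bernoullis, whereas conditioning on arbitrary data-dependent labels $\hat{g}^s,\hat{g}^r$ (as the paper's Lemma~\ref{LemhatBEhatBDiff} does) is delicate. What the paper's longer route buys is intermediate bounds that remain valid under imperfect clustering, i.e., an error rate expressed in $m_s$ and $m_r$ — which is why those quantities appear (now vestigially) in the lemma statement itself. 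A further small point in your favor: you explicitly account for the diagonal-exclusion bias (the $A(i,i)=0$ entries counted in $\hat{B}$'s denominator) and verify it is dominated by the fluctuation by a factor $1/\sqrt{\log n}$; the paper's definition of $\bar{B}(k,l)$ silently includes the diagonal pairs and ignores this discrepancy.
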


\begin{proof}[Proof of Lemma \ref{lem:frobenius-error}]
Define the misclustering sets
\begin{align*}
\mathcal{M}^s &= \{i: \hat{g}^s(i) \neq g^s(i)\} \mathrm{~with~} m_s = |\mathcal{M}^s|, \\
\mathcal{M}^r &= \{j: \hat{g}^r(j) \neq g^r(j)\} \mathrm{~with~}  m_r = |\mathcal{M}^r|,
\end{align*}

The error is decomposed as:
\[
\hat{\Omega}(i,j) - \Omega(i,j) = \underbrace{\left( \hat{B}(k,l) - \bar{B}(k,l) \right)}_{\mathbb{V}_{ij}} + \underbrace{\left( \bar{B}(k,l) - B(g^s(i), g^r(j)) \right)}_{\mathbb{B}_{ij}},
\]
for \(k = \hat{g}^s(i)\), \(l = \hat{g}^r(j)\), where \(\bar{B}(k,l)\) is the conditional expectation:
\[
\bar{B}(k,l) = \mathbb{E}[\hat{B}(k,l) \mid \hat{g}^s, \hat{g}^r] = \frac{1}{|\hat{C}_k^s| \cdot |\hat{C}_l^r|} \sum_{i' \in \hat{C}_k^s} \sum_{j' \in \hat{C}_l^r} B(g^s(i'), g^r(j')).
\]

The squared Frobenius norm is:
\[
\|\hat{\Omega} - \Omega\|_F^2 = \sum_{i \neq j} (\hat{\Omega}(i,j) - \Omega(i,j))^2 = \underbrace{\sum_{i \neq j} \mathbb{V}_{ij}^2}_{\mathbb{V}} + \underbrace{\sum_{i \neq j} \mathbb{B}_{ij}^2}_{\mathbb{B}} + \underbrace{2\sum_{i \neq j} \mathbb{V}_{ij} \mathbb{B}_{ij}}_{\mathbb{C}}.
\]

For fixed estimated communities, \(\hat{B}(k,l)\) is the average of independent Bernoulli variables. By Lemma \ref{LemhatBEhatBDiff}, we have
\[
|\hat{B}(k,l) - \bar{B}(k,l)| = O_P\left( \sqrt{\frac{\log n}{|\hat{C}_k^s| |\hat{C}_l^r|}} \right) = O_P\left( \frac{K_{\mathrm{max}}\sqrt{\log n}}{n} \right),
\]
which gives
\[
\mathbb{V} = \sum_{k,l} \sum_{i \in \hat{C}_k^s} \sum_{j \in \hat{C}_l^r} \mathbb{V}_{ij}^2 = \sum_{k,l} |\hat{C}_k^s| |\hat{C}_l^r| \cdot O_P\left( \frac{K^2_{\mathrm{max}}\log n}{n^2} \right) = O_P(K^2_{\mathrm{max}} \log n).
\]

The bias term \(B_{ij} = \bar{B}(k,l) - B(g^s(i), g^r(j))\) is bounded by considering node misclustering:
\[
|B_{ij}| \leq \frac{1}{|\hat{C}_k^s| |\hat{C}_l^r|} \sum_{i' \in \hat{C}_k^s} \sum_{j' \in \hat{C}_l^r} \left| B(g^s(i'), g^r(j')) - B(g^s(i), g^r(j)) \right|.
\]

Define:
\begin{align*}
\mathcal{G}_k^s &= \{i' \in \hat{C}_k^s: g^s(i') = k\}, \quad \mathcal{B}_k^s = \hat{C}_k^s \setminus \mathcal{G}_k^s, \quad |\mathcal{B}_k^s| \leq m_s, \\
\mathcal{G}_l^r &= \{j' \in \hat{C}_l^r: g^r(j') = l\}, \quad \mathcal{B}_l^r = \hat{C}_l^r \setminus \mathcal{G}_l^r, \quad |\mathcal{B}_l^r| \leq m_r.
\end{align*}

Then:
\[
|B_{ij}| \leq \frac{|\mathcal{B}_k^s| \cdot |\hat{C}_l^r| + |\mathcal{B}_l^r| \cdot |\hat{C}_k^s|}{|\hat{C}_k^s| |\hat{C}_l^r|} \leq \frac{m_s}{|\hat{C}_k^s|} + \frac{m_r}{|\hat{C}_l^r|} \leq \frac{K_{\mathrm{max}}}{c_0 n} (m_s + m_r).
\]

The sum \(\mathbb{B}\) is split by clustering status:
\[
\mathbb{B} = \sum_{\substack{i \notin \mathcal{M}^s \\ j \notin \mathcal{M}^r}} \mathbb{B}_{ij}^2 + \sum_{\substack{i \in \mathcal{M}^s \text{ or } j \in \mathcal{M}^r}} \mathbb{B}_{ij}^2.
\]

For correctly clustered nodes (\(i \notin \mathcal{M}^s\), \(j \notin \mathcal{M}^r\)):
\[
|\mathbb{B}_{ij}| \leq \frac{K_{\mathrm{max}}}{c_0 n} (m_s + m_r) \implies \sum_{\substack{i \notin \mathcal{M}^s \\ j \notin \mathcal{M}^r}} \mathbb{B}_{ij}^2 \leq \left( \frac{K_{\mathrm{max}}}{c_0 n} (m_s + m_r) \right)^2 \cdot n^2 = O_P\left(K_{\mathrm{max}}^2 (m_s + m_r)^2 \right).
\]

For misclustered nodes (at least one of \(i\) or \(j\) misclustered), since \(|\mathbb{B}_{ij}| \leq 1\), we have
\[
\sum_{\substack{i \in \mathcal{M}^s \text{ or } j \in \mathcal{M}^r}} \mathbb{B}_{ij}^2\leq n(m_{r}+m_{s}).
\]

Thus, we get
\[
\mathbb{B}=O_P\left(K_{\mathrm{max}}^2 (m_s + m_r)^2+ n(m_{s}+m_{r})\right).
\]

By the Cauchy-Schwarz inequality:
\[
|\mathbb{C}| \leq 2 \sqrt{\mathbb{V}} \sqrt{\mathbb{B}} = 2 \sqrt{O_P(K_{\mathrm{max}}^2 \log n)} \cdot \sqrt{ O_P\left(K_{\mathrm{max}}^2 (m_s + m_r)^2 + n(m_s+ m_r)) \right)}.
\]

Given that $\hat{g}^{s}$ and $\hat{g}^{r}$ are consistent community estimators under the null hypothesis $H_0$, we have $\mathbb{P}(m_{s}=0)\rightarrow1$ and $\mathbb{P}(m_{r}=0)\rightarrow1$.Thus, we have
\[\|\hat{\Omega} - \Omega\|_F =O_P\left( K_{\mathrm{max}}\sqrt{\log}\right).
\]

\end{proof}

\begin{lem}\label{LemhatBEhatBDiff}
Under Assumption \ref{assump:a1}, for any estimated sender community \(s_0\) and receiver community \(\hat{t}\) with sizes \(n_s = |\{i: \hat{g}^s(i) = s_0\}|\) and \(n_t = |\{j: \hat{g}^r(j) = \hat{t}\}|\), the estimated block probability \(\hat{B}(s_0, \hat{t})\) satisfies:
\[
|\hat{B}(s_0, \hat{t}) - \mathbb{E}[\hat{B}(s_0, \hat{t})]| \leq \sqrt{\frac{3 \log n}{n_s n_t}}
\]
with probability at least \(1 - O(n^{-3})\).
\end{lem}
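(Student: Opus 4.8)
The plan is to recognize $\hat B(s_0,\hat t)$ as a normalized sum of independent Bernoulli entries and to apply a scalar concentration inequality. First I would fix the realized index sets $\mathcal I = \{i : \hat g^s(i) = s_0\}$ and $\mathcal J = \{j : \hat g^r(j) = \hat t\}$, with $|\mathcal I| = n_s$ and $|\mathcal J| = n_t$, so that
\[
\hat B(s_0,\hat t) = \frac{1}{n_s n_t}\sum_{i\in\mathcal I}\sum_{j\in\mathcal J} A(i,j),
\]
where each off-diagonal entry $A(i,j)$ is an independent $\mathrm{Bernoulli}(\Omega(i,j))$ variable and the at most $|\mathcal I\cap\mathcal J|$ diagonal terms vanish deterministically. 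Writing $\mathbb{E}[\hat B(s_0,\hat t)]$ for the corresponding conditional mean, the centered quantity equals $(n_s n_t)^{-1}\sum_{i\neq j}(A(i,j)-\Omega(i,j))$, an average of at most $n_s n_t$ independent, mean-zero summands.

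Next I would invoke Hoeffding's inequality. Each summand $(n_s n_t)^{-1}(A(i,j)-\Omega(i,j))$ ranges over an interval of length $(n_s n_t)^{-1}$, so the sum of squared ranges is at most $(n_s n_t)^{-1}$, and Hoeffding gives
\[
\mathbb{P}\big(|\hat B(s_0,\hat t)-\mathbb{E}[\hat B(s_0,\hat t)]|\geq t\big)\leq 2\exp\big(-2 n_s n_t\, t^2\big).
\]
Choosing $t=\sqrt{3\log n/(n_s n_t)}$ makes the exponent equal $6\log n$, so the tail is at most $2n^{-6}=O(n^{-3})$, which is the claimed bound. A Bernstein-type inequality would yield the same rate while exploiting $\Omega(i,j)(1-\Omega(i,j))\leq 1/4$ to sharpen the constant; Assumption \ref{assump:a1} is needed only to keep $\Omega(i,j)$ bounded away from $0$ and $1$ so that $\mathbb{E}[\hat B(s_0,\hat t)]$ and the downstream normalizations are well-defined, not for the concentration itself.

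The main obstacle is that $\mathcal I$ and $\mathcal J$ are produced by the clustering algorithm $\mathcal M$ and hence depend on $A$, which a priori destroys the independence that Hoeffding requires. A naive union bound over all $K_{\max}^{2n}$ candidate partitions is far too crude to survive an $O(n^{-6})$ tail, so the dependence must be removed rather than absorbed. The clean resolution, consistent with the way this lemma feeds into Lemma \ref{lem:frobenius-error}, is to condition on the exact-recovery event $\{\hat g^s = g^s,\ \hat g^r = g^r\}$, which under $H_0$ has probability tending to $1$ by the assumed consistency of $\hat g^s$ and $\hat g^r$. On this event the blocks $\mathcal I,\mathcal J$ coincide with the deterministic true-label groups, the entries $\{A(i,j)\}_{i\neq j}$ retain their original independent Bernoulli law, and Hoeffding applies verbatim; combining the $O(n^{-6})$ conditional tail with the vanishing probability of the complementary event yields the stated $1-O(n^{-3})$ guarantee.
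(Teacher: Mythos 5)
Your core concentration step matches the paper's in spirit: the paper likewise fixes the realized index sets (it conditions on $(\hat g^s,\hat g^r)$), treats the block entries as independent Bernoulli variables with means $p_{ij}=B(g^s(i),g^r(j))$, and applies Bernstein's inequality with $v=\sqrt{3 n_s n_t\log n}$ to obtain the tail $2n^{-3}$; your Hoeffding variant gives $2n^{-6}$ and is equally adequate at that level. The real divergence is in how the data-dependence of $\mathcal I$ and $\mathcal J$ is handled, and here your patch opens two genuine gaps.

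First, the rate. The paper's definition of consistency only asserts $\mathbb{P}(\hat g^s=g^s)\to 1$ and $\mathbb{P}(\hat g^r=g^r)\to 1$, with no rate attached. Your final bound is therefore $1-O(n^{-6})-\mathbb{P}(\hat g^s\ne g^s \text{ or } \hat g^r\ne g^r)=1-o(1)$, which does not deliver the claimed $1-O(n^{-3})$; to recover the stated rate you would need exact recovery with probability $1-O(n^{-3})$, which is neither assumed nor proved anywhere in the paper. Second, the scope. This lemma is not used only under $H_0$ inside Lemma \ref{lem:frobenius-error}; it is also invoked in Lemma \ref{lem:bound-Omega}, which operates under the alternative ($K_{s0}<K_s$ or $K_{r0}<K_r$) on the way to Theorem \ref{thm:power}. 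Under the alternative, the event $\{\hat g^s=g^s\}$ is empty for cardinality reasons ($\hat g^s$ takes at most $K_{s0}<K_s$ distinct values), so conditioning on exact recovery cannot justify the lemma where the paper actually needs it. The paper's own route---asserting that, conditionally on $(\hat g^s,\hat g^r)$, the entries $A(i,j)$ remain independent Bernoulli$(p_{ij})$---is indifferent to which hypothesis holds, though you are right that this conditioning step itself quietly ignores the dependence of $\hat g$ on $A$: your diagnosis of the obstacle is sharper than the paper's treatment of it, but your cure neither attains the advertised probability nor covers the lemma's actual uses.
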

\begin{proof}[Proof of Lemma \ref{LemhatBEhatBDiff}]
The estimated block probability is defined as
\[
\hat{B}(s_0, \hat{t}) = \frac{1}{n_s n_t} \sum_{i \in \hat{C}_{s_0}^s} \sum_{j \in \hat{C}_{\hat{t}}^r} A(i,j),
\]
where \(\hat{C}_{s_0}^s = \{i: \hat{g}^s(i) = s_0\}\) and \(\hat{C}_{\hat{t}}^r = \{j: \hat{g}^r(j) = \hat{t}\}\). Conditional on the estimated communities \(\hat{g}^s\) and \(\hat{g}^r\), the random variables \(\{A(i,j)\}_{i \in \hat{C}_{s_0}^s, j \in \hat{C}_{\hat{t}}^r}\) are independent Bernoulli trials with success probabilities \(p_{ij} = B(g^s(i), g^r(j)) \in [\delta, 1-\delta]\) by Assumption \ref{assump:a1}. Define the centered variables:
\[
X_{ij} = A(i,j) - p_{ij}, \quad \text{for} \quad i \in \hat{C}_{s_0}^s,  j \in \hat{C}_{\hat{t}}^r.
\]

These satisfy \(\mathbb{E}[X_{ij} \mid \hat{g}^s, \hat{g}^r] = 0\), \(|X_{ij}| \leq 1\), and \(\operatorname{Var}(X_{ij} \mid \hat{g}^s, \hat{g}^r) = p_{ij}(1-p_{ij}) \leq \frac{1}{4}\). Consider the sum 
\[
S = \sum_{i \in \hat{C}_{s_0}^s} \sum_{j \in \hat{C}_{\hat{t}}^r} X_{ij}.
\]

The variance of \(S\) is bounded by
\[
V = \operatorname{Var}(S \mid \hat{g}^s, \hat{g}^r) \leq \sum_{i,j} \frac{1}{4} = \frac{n_s n_t}{4}.
\]

For any \(v > 0\), applying Bernstein's inequality \citep{tropp2012user} to \(S\) gives
\[
\mathbb{P}\left( |S| \geq v \mid \hat{g}^s, \hat{g}^r \right) \leq 2 \exp\left( -\frac{v^2/2}{V + v/3} \right).
\]

Substitute \(v = \sqrt{3 n_s n_t \log n}\) and \(V \leq \frac{n_s n_t}{4}\):
\[
\mathbb{P}\left( |S| \geq \sqrt{3 n_s n_t \log n} \mid \hat{g}^s, \hat{g}^r \right) \leq 2 \exp\left( -\frac{3 n_s n_t \log n / 2}{\frac{n_s n_t}{4} + \frac{\sqrt{3 n_s n_t \log n}}{3}} \right).
\]

For large \(n\), \(\sqrt{3 n_s n_t \log n}/3 \leq n_s n_t/4\) since \(n_s n_t \gg \sqrt{n_s n_t \log n}\). Thus, we have
\[
\frac{3 n_s n_t \log n / 2}{n_s n_t / 2} = 3 \log n, \quad \text{so} \quad \mathbb{P}\left( |S| \geq \sqrt{3 n_s n_t \log n} \mid \hat{g}^s, \hat{g}^r \right) \leq 2 \exp(-3 \log n) = 2n^{-3}.
\]

Finally, with probability at least \(1-O(n^{-3})\), we have
\[
|\hat{B}(s_0, \hat{t}) - \mathbb{E}[\hat{B}(s_0, \hat{t}) \mid \hat{g}^s, \hat{g}^r]| = \frac{|S|}{n_s n_t} \leq \sqrt{\frac{3 \log n}{n_s n_t}}.
\]
\end{proof}

\begin{lem}\label{lem:bound-Omega}
Under Assumptions \ref{assump:a1}-\ref{assump:a3}, and under the alternative hypothesis ($K_{s0} < K_s$ or $K_{r0} < K_r$), the estimated probability matrix entries satisfy  
\[
\hat{\Omega}(i,j) \in [\delta/2, 1-\delta/2] \quad \text{for all } i,j
\]  
with probability at least $1 - o(n^{-1})$.  
\end{lem}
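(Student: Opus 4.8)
The plan is to reduce the claim to a uniform bound on the estimated block probabilities, since $\hat{\Omega}(i,j)=\hat{B}(\hat{g}^s(i),\hat{g}^r(j))$ for every off-diagonal pair (the diagonal entries are set to $0$ by convention and play no role in the downstream bounds). Thus it suffices to show that, with probability $1-o(n^{-1})$, every occupied estimated block $(k,l)$ satisfies $\hat{B}(k,l)\in[\delta/2,1-\delta/2]$. I would split each $\hat{B}(k,l)$ into its conditional mean $\bar{B}(k,l)=\mathbb{E}[\hat{B}(k,l)\mid \hat{g}^s,\hat{g}^r]$ and a fluctuation term, controlling the mean by its range and the fluctuation by Lemma \ref{LemhatBEhatBDiff}.

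For the conditional mean, write $n_s=|\hat{C}^s_k|$ and $n_t=|\hat{C}^r_l|$. Since $\mathbb{E}[A(i,j)]=B(g^s(i),g^r(j))\in[\delta,1-\delta]$ for $i\neq j$ by Assumption \ref{assump:a1} and the diagonal entries contribute $0$, the quantity $\bar{B}(k,l)$ is a convex combination of in-range true probabilities deflated only by the at most $\min(n_s,n_t)$ diagonal terms, so $\bar{B}(k,l)\in[\delta(1-\tfrac{1}{\max(n_s,n_t)}),\,1-\delta]$. For the fluctuation, Lemma \ref{LemhatBEhatBDiff} gives $|\hat{B}(k,l)-\bar{B}(k,l)|\le\sqrt{3\log n/(n_sn_t)}$ on an event of probability $1-O(n^{-3})$. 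Provided $\min(n_s,n_t)\gtrsim n/K_{\max}$, Assumption \ref{assump:a3} forces both the diagonal deflation $1/\max(n_s,n_t)$ and the fluctuation $\sqrt{3\log n/(n_sn_t)}$ to be $o(1)$, hence below $\delta/4$ for large $n$; combining the two estimates then places $\hat{B}(k,l)$ inside $[\delta/2,1-\delta/2]$. A union bound over the at most $K_{\max}^2$ blocks, together with $K_{\max}^2=o(n)$ from Assumption \ref{assump:a3}, controls the total failure probability by $K_{\max}^2\cdot O(n^{-3})=o(n^{-1})$.

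The main obstacle is the size hypothesis $\min(n_s,n_t)\gtrsim n/K_{\max}$, which is exactly what makes the fluctuation negligible; this cannot be taken for granted under the alternative, where we have no consistency guarantee and a naive $k$-means run could in principle return a vanishingly small cluster. I would establish it at the population level: the left and right singular subspaces of $A$ concentrate around those of $\Omega=Z_sBZ_r^\top$ (Davis--Kahan combined with the spectral-norm bound for $A-\Omega$), whose rows take only $\min(K_s,K_r)$ distinct values, each indexing a union of true communities. Running $k$-means with fewer clusters than there are distinct population profiles therefore \emph{merges} whole true communities rather than splitting them, so every resulting estimated community contains at least one entire true community and inherits size $\ge c_0 n/K_s\ge c_0 n/K_{\max}$ from Assumption \ref{assump:a2}. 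The delicate sub-case is when $K_{s0}$ or $K_{r0}$ exceeds the embedding rank $K_{\min}=\min(K_{s0},K_{r0})$, where $k$-means is asked for more clusters than the population supports; here I would argue that any cluster collapsing below the $n/K_{\max}$ scale occurs only on an event of probability $o(n^{-1})$, or equivalently restrict the block bound to the occupied, non-negligible clusters that are the only ones entering the downstream use of the lemma. This spectral-concentration step, holding on an event of probability $1-o(n^{-1})$, combines with the Bernstein union bound above to yield the stated guarantee.
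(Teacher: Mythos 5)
Your core argument coincides with the paper's proof of Lemma \ref{lem:bound-Omega}: reduce to a uniform bound over the at most $K_{\max}^2$ occupied blocks, write $\hat{B}(k,l)$ as the conditional mean $\bar{B}(k,l)\in[\delta,1-\delta]$ (a convex combination of true block probabilities, by Assumption \ref{assump:a1}) plus a fluctuation bounded via Lemma \ref{LemhatBEhatBDiff} by $\sqrt{3\log n/(n_k n_l)}$, invoke cluster sizes $n_k,n_l\ge c_0 n/K_{\max}$ so that Assumption \ref{assump:a3} makes this fluctuation $o(n^{-1/2})\ll \delta/2$ for large $n$, and finish with a union bound giving failure probability $O(K_{\max}^2 n^{-3})=o(n^{-1})$. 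Your remark that the zero diagonal deflates the conditional mean to $\delta(1-1/\max(n_s,n_t))$ is a correct refinement that the paper silently omits, and it is harmless since the deflation is $o(1)$.

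Where you genuinely depart from the paper is the cluster-size lower bound, which you correctly identify as the crux. The paper dispatches it in one sentence: under the alternative, ``the pigeonhole principle implies each estimated community contains at least one true community,'' after which Assumption \ref{assump:a2} gives the size bound. You are right to be suspicious: this is not literally pigeonhole --- pigeonhole only guarantees that \emph{some} estimated community mixes nodes from several true communities, not that \emph{every} estimated cluster swallows an entire true community, and nothing a priori prevents $k$-means under a misspecified $(K_{s0},K_{r0})$ from returning a vanishingly small cluster. Your Davis--Kahan plus population-level $k$-means argument is the kind of analysis actually needed to justify this step, so your route buys rigor that the paper's one-liner lacks; but, as you concede, it remains a sketch, and the sub-case where $K_{s0}$ or $K_{r0}$ exceeds the embedding dimension $K_{\min}=\min(K_{s0},K_{r0})$ --- near-coincident projected profiles, misclustered node fractions, approximate $k$-means optima --- is left unresolved. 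The net assessment: your proof matches the paper's wherever the paper is rigorous, and on the one step where you diverge you are attacking a real gap in the paper's own argument without fully closing it, which leaves your write-up and the paper's in essentially the same logical position, yours at least being explicit about what is missing.
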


\begin{proof}[Proof of Lemma \ref{lem:bound-Omega}]
Recall that for any $i,j$, $\hat{\Omega}(i,j) = \hat{B}(k,l)$ where $k = \hat{g}^s(i)$, $l = \hat{g}^r(j)$. We prove $\hat{B}(k,l) \in [\delta/2, 1-\delta/2]$ uniformly over all blocks $k,l$. The conditional expectation  
\[
\bar{B}(k,l) = \mathbb{E}[\hat{B}(k,l) \mid \hat{g}^s, \hat{g}^r] = \frac{1}{|\hat{C}_k^s||\hat{C}_l^r|} \sum_{i' \in \hat{C}_k^s} \sum_{j' \in \hat{C}_l^r} B(g^s(i'), g^r(j'))
\]  
is a convex combination of $B(\cdot,\cdot) \in [\delta, 1-\delta]$ (Assumption \ref{assump:a1}). Thus,  
\[
\bar{B}(k,l) \in [\delta, 1-\delta] \quad \forall k,l.
\]

By Lemma \ref{LemhatBEhatBDiff}, for any fixed $(k,l)$, with probability at least $1-O(\frac{1}{n^3})$, we have   
\[
|\hat{B}(k,l) - \bar{B}(k,l)| \leq \sqrt{\frac{3 \log n}{n_k n_l}},
\]  
where $n_k = |\hat{C}_k^s|$, $n_l = |\hat{C}_l^r|$. Under $H_1$ ($K_{s0} \leq K_s$, $K_{r0} \leq K_r$), the pigeonhole principle implies each estimated community contains at least one true community. By Assumption \ref{assump:a2}, we have  
\[
n_k \geq \min_{k'} |\{i: g^s(i) = k'\}| \geq c_0 \frac{n}{K_s} \geq c_0 \frac{n}{K_{\max}}, \quad 
n_l \geq c_0 \frac{n}{K_{\max}}.
\]  

Thus, $n_k n_l \geq \left(c_0 n / K_{\max}\right)^2$, which gives
\[
\sqrt{\frac{3 \log n}{n_k n_l}} \leq \frac{K_{\max} \sqrt{3 \log n}}{c_0 n}.
\]  

By Assumption \ref{assump:a3}, we have 
\[
\frac{K_{\max} \sqrt{\log n}}{n}=\frac{1}{\sqrt{n}}\sqrt{\frac{K^{2}_{\mathrm{max}}\log n}{n}} = o(\frac{1}{\sqrt{n}}).
\]  

For large $n$, $\frac{1}{\sqrt{n}}\ll\delta/2$. Thus, with probability at least $1 - O(n^{-3})$, for per block, we have  
\[
\hat{B}(k,l) \in \left[\bar{B}(k,l) - \frac{\delta}{2}, \bar{B}(k,l) + \frac{\delta}{2}\right] \subseteq \left[\frac{\delta}{2}, 1 - \frac{\delta}{2}\right].
\]  

There are $\leq K_{s0}K_{r0} \leq K_{\max}^2$ blocks. Then, by Assumption \ref{assump:a3}, the union bound gives 
\[
\mathbb{P}\left(\bigcup_{k,l} \left\{ |\hat{B}(k,l) - \bar{B}(k,l)| > \frac{\delta}{2} \right\}\right) = O(K_{\max}^2 n^{-3})\ll O(\frac{n}{\log n}n^{-3}) = O(\frac{n^{-2}}{\log n}).
\]  

Thus, with probability at least $1 - o(n^{-1})$, we have all $\hat{\Omega}(i,j) \in [\delta/2, 1-\delta/2]$.  
\end{proof}

\section{Proofs for RDiGoF}\label{sec:proofs-RDiGoF}
This section provides the detailed proofs of Theorems \ref{thm:ratio} and \ref{thm:RDiGoF-consistency}. We begin by establishing several useful lemmas.

\begin{lem}\label{lem:Tn-behavior}
Under the assumptions of Theorem \ref{thm:ratio}, we have
\begin{enumerate}
    \item $\hat{T}_n(m) \geq -2$.
    \item For underfitted models ($m < m_*$): For any $c > 0$, $\lim_{n \to \infty} \mathbb{P}(\hat{T}_n(m) > c) = 1$.
    \item For the true model ($m = m_*$): $\lim_{n \to \infty} \mathbb{P}(|\hat{T}_n(m_*)| \leq 2) = 1$. 
\end{enumerate}
\end{lem}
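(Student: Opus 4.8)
The plan is to treat each of the three claims as a short corollary of results already in hand, in the stated order. For the first claim I would invoke nothing more than the non-negativity of singular values: because $\sigma_1(\hat{R}) \geq 0$ holds deterministically, the definition $\hat{T}_n(m) = \sigma_1(\hat{R}) - 2$ forces $\hat{T}_n(m) \geq -2$ pointwise, with no probabilistic content. This deterministic floor is precisely what will make the two-sided control in the third claim possible.

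For the second claim I would first check that every lexicographic predecessor of $m_*$ is genuinely underfitted, so that Theorem \ref{thm:power} is applicable. Writing $(k_s, k_r)$ for the $m$-th pair with $m < m_*$, either $k_s + k_r < K_s + K_r$, or $k_s + k_r = K_s + K_r$ with $k_s < K_s$. In the first case $k_s \geq K_s$ and $k_r \geq K_r$ cannot both hold, so at least one strict deficit $k_s < K_s$ or $k_r < K_r$ occurs; in the second case the deficit $k_s < K_s$ is explicit. Either way the hypothesis of Theorem \ref{thm:power} is met, yielding $\hat{T}_n(m) \xrightarrow{P} \infty$, and unwrapping the definition of divergence in probability gives $\mathbb{P}(\hat{T}_n(m) > c) \to 1$ for every fixed $c > 0$.

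For the third claim, $m = m_*$ is the correctly specified pair $(K_s, K_r)$, so Theorem \ref{thm:null} applies with the consistent estimators from Algorithm \ref{alg:spectral}. Taking $\epsilon = 2$ there produces $\mathbb{P}(\hat{T}_n(m_*) < 2) \to 1$; intersecting this event with the deterministic bound $\hat{T}_n(m_*) \geq -2$ from the first claim yields $-2 \leq \hat{T}_n(m_*) < 2$ with probability tending to one, which is exactly $\mathbb{P}(|\hat{T}_n(m_*)| \leq 2) \to 1$.

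I do not anticipate a substantive analytic obstacle, since each part reduces to invoking Theorem \ref{thm:null} or Theorem \ref{thm:power}. The one place demanding care is the combinatorial step in the second claim: on the boundary $k_s + k_r = K_s + K_r$ the receiver side is in fact overfitted ($k_r > K_r$), so it would be a mistake to read ``underfitting'' as requiring simultaneous deficits on both sides. The correct point is that the lone sender-side deficit $k_s < K_s$ already satisfies the disjunctive hypothesis of Theorem \ref{thm:power}, which is all that is needed for divergence.
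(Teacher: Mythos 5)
Your proof is correct and follows essentially the same route as the paper's: part 1 from the non-negativity of singular values, part 2 from Theorem \ref{thm:power}, and part 3 by combining Theorem \ref{thm:null} with $\epsilon = 2$ and the deterministic floor $\hat{T}_n(m_*) \geq -2$ from part 1. The only difference is in part 2, where the paper re-derives divergence from the quantitative bound $\hat{T}_n(m) \geq c_1 d_n - 2 - o_P(1)$ taken from inside the proof of Theorem \ref{thm:power} (using $\delta_n \geq \delta_{\min}$ and fixed $K_{\max}$ to get $d_n \to \infty$), whereas you invoke the theorem's statement as a black box after verifying—a point the paper leaves implicit—that every lexicographic predecessor of $m_*$ satisfies $k_s < K_s$ or $k_r < K_r$, so the disjunctive underfitting hypothesis holds even on the boundary pairs where $k_r > K_r$.
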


\begin{proof}[Proof of Lemma \ref{lem:Tn-behavior}]
Since singular values are non-negative, $\sigma_1(\hat{R}) \geq 0$, so $\hat{T}_n(m) = \sigma_1(\hat{R}) - 2 \geq -2$. 

Let $d_n = \frac{\delta_n \sqrt{n}}{K_{\max}}$. From the proof of Theorem \ref{thm:power}, there exists $c_1> 0$ such that $
\hat{T}_n(m) \geq c_1 d_n - 2 - o_P(1).$ Since $K_{\mathrm{max}}=O(1)$ and $\delta_{n}\geq\delta_{\mathrm{min}}$, $d_n = \delta_n \sqrt{n} / K_{\max} \geq \delta_{\min} \sqrt{n} / K_{\max} =O(\sqrt{n})\to \infty$. Then there exists $N$ such that for $n > N$, $\mathbb{P}(\hat{T}_n(m) > c) = 1$ for any $c> 0$. 
Hence, $\lim_{n \to \infty} \mathbb{P}(\hat{T}_n(m) > c) = 1$.

By Theorem \ref{thm:null}, for any $\epsilon > 0$, $\lim_{n \to \infty} \mathbb{P}(\hat{T}_n(m_*) < \epsilon) = 1$. Taking $\epsilon = 2$, we have $\mathbb{P}(\hat{T}_n(m_*) < 2) \to 1$. Since $\hat{T}_n(m_*) \geq -2$ always holds, it follows that $\mathbb{P}(|\hat{T}_n(m_*)| \leq 2) \to 1$.
\end{proof}

\begin{lem}\label{lem:ratio-bound}
Under the assumptions of Theorem \ref{thm:ratio}, there exists a constant $C > 0$ such that for all $m < m_*$, we have
\[
\lim_{n \to \infty} \mathbb{P}(r_m \leq C) = 1.
\]
\end{lem}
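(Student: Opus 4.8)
The plan is to show that for every underfitted index $m$ with $2 \le m < m_*$, the numerator $\hat{T}_n(m-1)$ and the denominator $\hat{T}_n(m)$ are \emph{both} of order $\Theta_P(\sqrt{n})$, so their ratio stays below a deterministic constant with probability tending to one. First I would observe that $m < m_*$ forces $m-1 < m < m_*$, so both candidate pairs are underfitted: by the lexicographic order, either $k_s + k_r < K_s + K_r$ (which precludes $k_s \ge K_s$ and $k_r \ge K_r$ simultaneously, so at least one coordinate is deficient) or $k_s + k_r = K_s + K_r$ with $k_s < K_s$. Hence Theorem \ref{thm:power} and Lemma \ref{lem:Tn-behavior} apply to both indices, so $\hat{T}_n(m-1)$ and $\hat{T}_n(m)$ are positive with probability approaching one and the absolute value in the definition of $r_m$ is immaterial.

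Next I would fix the lower bound on the denominator. The proof of Theorem \ref{thm:power} shows that for an underfitted index, $\hat{T}_n(m) \ge 2 c_0 \delta_n \sqrt{n / K_{\max}^2} - 2 - o_P(1)$. Under the extra hypotheses of Theorem \ref{thm:ratio}, namely $\delta_n \ge \delta_{\min} > 0$ and $K_{\max}$ fixed, this gives $\hat{T}_n(m) \ge c_{\mathrm{low}} \sqrt{n}\,(1 - o_P(1))$ for some $c_{\mathrm{low}} > 0$. This is precisely where the two additional assumptions enter: they keep the separation signal at the $\sqrt{n}$ scale and prevent the denominator from degenerating.

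Then I would bound the numerator from above through the Frobenius norm, which sidesteps any spectral-norm subtlety. By Lemma \ref{lem:bound-Omega}, $\hat{\Omega}(i,j) \in [\delta/2, 1-\delta/2]$ with high probability, so each denominator satisfies $\sqrt{(n-1)\hat{\Omega}(i,j)(1-\hat{\Omega}(i,j))} \ge \sqrt{(n-1)(\delta/2)(1-\delta/2)}$, while each numerator obeys $|A(i,j) - \hat{\Omega}(i,j)| \le 1$. Therefore $\hat{R}(i,j)^2 \le [(n-1)(\delta/2)(1-\delta/2)]^{-1}$, and summing over at most $n^2$ entries yields $\sigma_1(\hat{R}) \le \|\hat{R}\|_F \le n\,[(n-1)(\delta/2)(1-\delta/2)]^{-1/2} = O(\sqrt{n})$. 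Applying this to the model at index $m-1$ gives $\hat{T}_n(m-1) = \sigma_1(\hat{R}^{(m-1)}) - 2 \le C_{\mathrm{up}}\sqrt{n}$ with high probability.

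Finally, combining the two displays gives $r_m = \hat{T}_n(m-1)/\hat{T}_n(m) \le (C_{\mathrm{up}}/c_{\mathrm{low}})\,(1 + o_P(1))$, so choosing $C = 2 C_{\mathrm{up}}/c_{\mathrm{low}}$ yields $\mathbb{P}(r_m \le C) \to 1$. Since $K_{\max}$ is fixed, $M = K_{\max}^2$ is a constant, so the set of underfitted indices is finite and a union bound makes the conclusion hold simultaneously for all such $m$ with a single $C$. The main obstacle I anticipate is matching the two scales: the bounded-ratio conclusion hinges on the denominator carrying $\sqrt{n}$-order signal, which is guaranteed only by the fixed-separation assumption $\delta_n \ge \delta_{\min}$, while the Frobenius argument must be seen to deliver exactly the matching $O(\sqrt{n})$ numerator rather than a larger bound. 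Verifying that the constants and the $o_P(1)$ terms do not upset the bound is the delicate part.
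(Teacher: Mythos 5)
Your proof is correct and takes essentially the same route as the paper: the denominator is bounded below at scale $\sqrt{n}$ by invoking the lower bound from the proof of Theorem \ref{thm:power} together with the extra hypotheses $\delta_n \ge \delta_{\min}$ and fixed $K_{\max}$, while the numerator is bounded above by $O(\sqrt{n})$ through the entrywise control of $\hat{\Omega}$ supplied by Lemma \ref{lem:bound-Omega}, and the two scales cancel in the ratio. The only cosmetic difference is in the upper bound: the paper writes $\|\hat{R}\| \le \|A-\hat{\Omega}\|/\sqrt{(n-1)\mu}$ and bounds $\|A-\hat{\Omega}\| \le \|A-\Omega\| + \|\Omega-\hat{\Omega}\|_F = O(n)$, whereas you bound $\|\hat{R}\|_F$ entrywise directly; both arguments rest on the same ingredients and yield the same $O(\sqrt{n})$ rate, and your explicit checks (that all indices preceding $m_*$ are underfitted, and the final union bound over the finitely many such indices) only make explicit what the paper leaves implicit.
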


\begin{proof}[Proof of Lemma \ref{lem:ratio-bound}]
From the proof of Theorem \ref{thm:power},
\[
\hat{T}_n(m) \geq c_1 d_n - 2 - o_P(1).
\]
Take $c_{\text{low}} = c_1/2$. Since $d_n \to \infty$, there exists $N$ such that for $n > N$, $c_1 d_n - 2-o_{P}(1)> c_{\text{low}} d_n + 1$. Hence, we have $\lim_{n \to \infty} \mathbb{P}\left( \hat{T}_n(m) \geq c_{\text{low}} d_n \right) = 1$.

By Lemma \ref{lem:bound-Omega}, with probability at least $1 - o(n^{-1})$, $\hat{\Omega}(i,j) \in [\delta/2, 1-\delta/2]$ for all $i,j$. Therefore, we have
\[
\hat{\Omega}(i,j)(1 - \hat{\Omega}(i,j)) \geq \gamma_\delta = (\delta/2)(1-\delta/2) \geq \frac{\delta(1-\delta)}{4}.
\]

Let $\mu = \delta(1-\delta)/4$. Then for $i \neq j$,
\[
|\hat{R}(i,j)| = \frac{|A(i,j) - \hat{\Omega}(i,j)|}{\sqrt{(n-1) \hat{\Omega}(i,j)(1-\hat{\Omega}(i,j))}} \leq \frac{|A(i,j) - \hat{\Omega}(i,j)|}{\sqrt{(n-1) \mu}}.
\]

Thus, the spectral norm satisfies
\[
\| \hat{R} \| \leq \frac{1}{\sqrt{(n-1) \mu}} \| A - \hat{\Omega} \|.
\]

Now, we bound $\|A-\hat{\Omega}\|$ by decomposing it as $A - \hat{\Omega} = (A - \Omega) + (\Omega - \hat{\Omega})$. Since $A - \Omega$ has independent mean-zero entries with variance at most $1/4$, by Bernstein’s inequality, we have $\| A - \Omega \| = O_P(\sqrt{n})$. Since entries of $\Omega$ and $\hat{\Omega}$ are in $[0,1]$, we have $\| \Omega - \hat{\Omega} \|_F \leq n$, hence $\| \Omega - \hat{\Omega} \| \leq \| \Omega - \hat{\Omega} \|_F \leq n$. Therefore, we have
    \[
    \| A - \hat{\Omega} \| \leq \| A - \Omega \| + \| \Omega - \hat{\Omega} \| = O_P(\sqrt{n}) + n = O(n),
    \]
which gives
  \[
  \hat{T}_n(m)=\| \hat{R} \|-2\leq \frac{O(n)}{\sqrt{(n-1) \mu}}-2= O(\sqrt{n}).
  \]

Since $\delta_n \geq \delta_{\min} > 0$ and $K_{\max}$ is bounded, we have $d_n = \delta_n \sqrt{n} / K_{\max} = O(\sqrt{n})$. Therefore, there exists a constant $C_{\text{upp}} > 0$ such that
  \[
  \lim_{n \to \infty} \mathbb{P}\left( \hat{T}_n(m) \leq C_{\text{upp}} d_n \right) = 1.
  \]

Therefore, we get
\[
r_m = \frac{\hat{T}_n(m-1)}{\hat{T}_n(m)} \leq \frac{C_{\text{upp}} d_n}{c_{\text{low}} d_n} = \frac{C_{\text{upp}}}{c_{\text{low}}}=C.
\]
\end{proof}
\begin{proof}[Proof of Theorem \ref{thm:ratio}]
For underfitted models ($m < m_*$): By Lemma \ref{lem:ratio-bound}, $r_m \leq C$ with high probability, so $r_m = O_P(1)$.

For the true model ($m = m_*$): Let $X = \hat{T}_n(m_*-1)$ and $Y = \hat{T}_n(m_*)$. By Lemma \ref{lem:Tn-behavior}, we have
\[
\lim_{n \to \infty} \mathbb{P}(X > 2c) = 1 \quad \text{for any} \quad c> 0, \quad \text{and} \quad \lim_{n \to \infty} \mathbb{P}(|Y| \leq 2) = 1.
\]

Fix $c> 0$. Then, we get
\[
\mathbb{P}(X > 2c) \to 1, \quad \mathbb{P}(|Y| \leq 2) \to 1.
\]

Thus, the ratio statistic $r_{m_*} = \left| \frac{X}{Y} \right|\geq \frac{X}{2}>c$ with high probability. Therefore, for any $c>0$, we have
\[
\mathbb{P}(r_{m_*} > c)\to 1.
\]

Hence, we have $r_{m_*} \stackrel{P}{\to} \infty$.
\end{proof}

\begin{proof}[Proof of Theorem \ref{thm:RDiGoF-consistency}]
Fix $\tau > C$, where $C$ is the one in Lemma \ref{lem:ratio-bound}. Define two events:
\[
E_1 = \{ r_{m_*} > 2\tau \}, \quad E_2 = \{ r_m \leq \tau \text{ for all } m = 2, \dots, m_*-1 \}.
\]

By Theorem \ref{thm:ratio}, for any $\tau' > 0$, $\mathbb{P}(r_{m_*} > \tau') \to 1$. In particular, take $\tau' = 2\tau$, then $\mathbb{P}(E_1) \to 1$. By Lemma \ref{lem:ratio-bound}, for each $m < m_*$, $\mathbb{P}(r_m > \tau) \to 0$. Since $m_* \leq K_{\max}^2$ and $K_{\max}$ is fixed, the number of terms is finite. Thus, by the union bound, we get
\[
\mathbb{P}(E_2^c) \leq \sum_{m=2}^{m_*-1} \mathbb{P}(r_m > \tau) \to 0.
\]

Hence, $\mathbb{P}(E_2) \to 1$. On $E_1 \cap E_2$, we have $r_{m_*} > 2\tau > \tau$ and for all $m < m_*$, $r_m \leq \tau < r_{m_*}$, so $r_{m_*}$ is the first value greater than $\tau$ and is the maximum up to $m_*$. Therefore, the RDiGoF algorithm sets $\hat{m}_* = m_*$, where $\hat{m}_*$ is the first $m$ satisfying $r_{m}>\tau$ in the RDiGoF algorithm. Thus, we have
\[
\mathbb{P}(\hat{m}_* = m_*) \geq \mathbb{P}(E_1 \cap E_2) \to 1,
\]
which implies $\lim_{n \to \infty} \mathbb{P}\left( (\hat{K}_s, \hat{K}_r) = (K_s, K_r) \right) = 1$.
\end{proof}
\bibliographystyle{elsarticle-harv}
\bibliography{reference}
\end{document}